\numberwithin{equation}{section}
\newcommand\makenote[4]{\par\noindent\begin{tcolorbox}
[size=small,colback=#2]\color{#3}\textbf{#4: }#1\end{tcolorbox}\noindent}
\definecolor{cornellred}{RGB}{196,18,48}
\definecolor{dartmouthgreen}{RGB}{0,112,60}
\definecolor{cisorange}{RGB}{247,147,33}
\definecolor{cisblue}{RGB}{67,199,244}
\definecolor{purple}{RGB}{96, 24, 168}
\newcommand\pa[1]{\makenote{#1}{cornellred}{white}{PA}}
\newcommand\orange[1]{\textcolor{orange}{#1}}
\newcommand\purple[1]{\textcolor{purple}{#1}}
\newcommand{\cat}{%
\mathbf%
}
\newcommand{\sem}[1]{
  \left\llbracket #1 \right\rrbracket
}
\newcommand{\bsem}[1]{
  \llparenthesis #1 \rrparenthesis
}
\newcommand{\D}{D}
\newcommand{\Pcal}{\mathcal{P}}
\newcommand{\set}[2]{\{#1 \mid #2 \}}
\newcommand{\M}{\mathcal{M}}
\newcommand{\Rel}{\mathcal{R}}
\newcommand{\xto}{\xrightarrow}
\newcommand{\smid}{\ \mid \ }
\newcommand{\subst}[3]{#1\{#3 / #2\}}
\newcommand{\nat}{\mathbb{N}}
\newcommand{\bool}{\mathbb{B}}
\newcommand{\R}{\mathbb{R}}
\newcommand{\shr}{\orange{NI}\xspace}
\newcommand{\sep}{\purple{I}\xspace}
\newcommand{\vdashni}{\vdash_{\shr}\, }
\newcommand{\vdashi}{\vdash_{\sep}\, }
\newcommand{\lamb}[2]{\lambda #1\ldotp  #2}
\newcommand{\app}[2]{#1 \ #2}
\newcommand{\coin}{\mathsf{coin}}
\newcommand{\lto}{\multimap}
\newcommand{\sample}[3]{\mathsf{sample}\ #1 \ \mathsf{as } \ #2 \ \mathsf{in}\ #3}
\newcommand{\send}[3]{\mathsf{send}\ #1 \ \mathsf{as } \ #2 \ \mathsf{in}\ #3}
\renewcommand{\ifthenelse}[3]{\mathsf{if }\, #1\, \mathsf{ then }\, #2\, \mathsf{ else }\, #3}
\newcommand{\caseof}[3]{\mathsf{case}\, #1 \, \mathsf{of}\, (|\mathsf{in}_1 x \Rightarrow #2 \mid \mathsf{in}_2 x \Rightarrow #3)}
\newcommand{\kwtt}{\mathsf{tt}}
\newcommand{\kwff}{\mathsf{ff}}
\newcommand{\return}{\mathsf{return}\ }
\newcommand{\letin}[3]{\mathsf{let}\ #1 = #2 \ \mathsf{in}\ #3}
\newcommand{\onelang}{$\lambda_{\text{INI}}$\xspace}
\newcommand{\twolang}{$\lambda_{\text{INI}}^2$\xspace}
\theoremstyle{plain}
\newtheorem{Th}{Theorem}[section]
\newtheorem{Lemma}[Th]{Lemma}
\newtheorem{Cor}[Th]{Corollary}
\theoremstyle{definition}
\newtheorem{Def}[Th]{Definition}
\newtheorem{Rem}[Th]{Remark}
\newtheorem{?}[Th]{Problem}
\newtheorem{Ex}[Th]{Example}
\crefname{Th}{theorem}{theorems}
\Crefname{Th}{Theorem}{Theorems}
\crefname{Cor}{corollary}{corollaries}
\Crefname{Cor}{Corollary}{Corollaries}
\crefname{Prop}{proposition}{propositions}
\Crefname{Prop}{Proposition}{Propositions}
\crefname{Def}{definition}{definitions}
\Crefname{Def}{Definition}{Definitions}
\crefname{Rem}{remark}{remarks}
\Crefname{Rem}{Remark}{Remarks}
\crefname{Ex}{example}{examples}
\Crefname{Ex}{Example}{Examples}
\begin{document}

%% Title information
\title[]{Separated and Shared Effects in Higher-Order Languages}         %% [Short Title] is optional;
                                        %% when present, will be used in
% Other ideas:
% Effectful Higher-Order Languages for Independence
% Independence in Effectful Higher-Order Languages
% Independence in Higher-Order Languages with Effects
% Separated and Shared Effects in Higher-Order Languages
% Higher-Order Languages with Separated and Shared Effects
% Separation and Sharing in Higher-Order Languages with Effects

%% Author information
%% Contents and number of authors suppressed with 'anonymous'.
%% Each author should be introduced by \author, followed by
%% \authornote (optional), \orcid (optional), \affiliation, and
%% \email.
%% An author may have multiple affiliations and/or emails; repeat the
%% appropriate command.
%% Many elements are not rendered, but should be provided for metadata
%% extraction tools.

% Author with single affiliation.
\author{Pedro H. Azevedo de Amorim}
\affiliation{
  \institution{Cornell University}            %% \institution is required
  \city{Ithaca}
  \state{NY}
  \country{USA}
}
\email{pamorim@cs.cornell.edu}          %% \email is recommended

% Author with single affiliation.
\author{Justin Hsu}
\affiliation{
  \institution{Cornell University}            %% \institution is required
  \city{Ithaca}
  \state{NY}
  \country{USA}
}
\email{justin@cs.cornell.edu}          %% \email is recommended

%% Abstract
%% Note: \begin{abstract}...\end{abstract} environment must come
%% before \maketitle command
\begin{abstract}
  Effectful programs interact in ways that go beyond simple input-output, making
  compositional reasoning challenging. Existing work has shown that when such
  programs are ``separate'', i.e., when programs do not interfere with each
  other, it can be easier to reason about them. While reasoning about separated
  resources has been well-studied, there has been little work on reasoning about
  separated effects, especially for functional, higher-order programming
  languages.

  We propose two higher-order languages that can reason about sharing and
  separation in effectful programs. Our first language \onelang has a linear
  type system and probabilistic semantics, where the two product types capture
  independent and possibly-dependent pairs.  Our second language \twolang is a
  two-level, stratified language, inspired by Benton's linear-non-linear (LNL)
  calculus. We motivate this language with a probabilistic model, but we also
  provide a general categorical semantics and exhibit a range of concrete models
  beyond probabilistic programming. We prove soundness theorems for all of our
  languages; our general soundness theorem for our categorical models of
  \twolang uses a categorical gluing construction.
\end{abstract}

%% 2012 ACM Computing Classification System (CSS) concepts
%% Generate at 'http://dl.acm.org/ccs/ccs.cfm'.

%% Keywords
%% comma separated list
\keywords{Probabilistic Programming, Denotational Semantics, Effects, Higher-Order Languages}  %% \keywords are mandatory in final camera-ready submission

%% \maketitle
%% Note: \maketitle command must come after title commands, author
%% commands, abstract environment, Computing Classification System
%% environment and commands, and keywords command.
\maketitle

\section{Introduction}

% Paper structure:
% \begin{itemize}
% \item Introduction
% \item Background
% \item Define the two-level language with products and higher-order functions
% \subitem Explain why coproducts are problematic with independence and show solution
% \item Go over 3 concrete models (discrete distributions, continuous distributions and relational programming
% \item Prove logical relations theorem for the discrete case
% \item Setup the categorical machinery necessary for the categorical gluing proof (category of models and initial object)
% \item Define the glued category and show fundamental lemma
% \item Conclude
% \end{itemize}

A central challenge in the theory of programming languages is to come up with
sound and expressive reasoning principles for effectful programs. In contrast
with pure programs, where different programs can only affect each other at
clearly defined interfaces (e.g., the input or output from a functional call),
the interaction between effectful programs can be subtle and difficult to reason
about. To simplify formal analysis, it is highly useful to know when different
effectful computations are \emph{separate}, i.e., they do not interfere with
each other. For instance, in the presence of effects such as memory allocation
or probability, it is useful to know when pointers do not refer to the same
location, or when random quantities must be independent. 

\subsubsection*{Prior Work: Reasoning About Resource Separation}

While separated \emph{effects} have received relatively little attention in the
literature, there is a long line of work on reasoning about separation of
\emph{resources}~\citep{DBLP:conf/csl/OHearnRY01,DBLP:journals/tcs/PymOY04}. The
concept of resource is ubiquitous in Computer Science and usually manifests
itself when effectful programs interact with the external world. For example,
when programming with memory allocation, the heap is a kind of resource; when
programming with probabilistic sampling, randomness can be seen as a resource. 

In some cases, it is useful to ensure that computations access resources
separately. When programming with pointers, different pointers that \emph{alias}
refer to the same address, making it difficult to reason about updates to the
heap; requiring that programs do not alias can make formal verification more
modular and compositional. In the example of probabilistic effects, separation
of resources corresponds to probabilistic independence, while general joint
distributions can share resources. Just like for other notions of separation,
independence can simplify reasoning about programs. For instance, if two parts
of a program produce independent distributions, their joint distribution will
only depend on their individual probabilities---there are no unexpected
probabilistic interaction between the two parts. Independence can also be an
interesting property to verify; for instance, in cryptographic protocols, basic
security properties can be stated in terms of independence~\citep{barthe2019}.
Prior work has developed program logics that can about independence in the
context of a first-order, imperative language~\citep{barthe2019}. Unfortunately,
it is unclear how to capture independence in higher-order languages.

\subsubsection*{Our Work}
We aim to develop a higher-order language that can reason about shared and
separated effects in a variety of contexts. The closest work in this area is the
bunched calculus~\citep{DBLP:journals/jfp/OHearn03}, the Curry-Howard
correspondent of the logic of Bunched
Implications~\citep{DBLP:journals/bsl/OHearnP99}. While
\citet{DBLP:journals/jfp/OHearn03} gives a presheaf model for the language and
develops a concrete model for reasoning about memory-manipulating programs,
other concrete models are harder to come by.  Indeed, there are no known models
for the bunched calculus that can accommodate probability, or other common
monadic effects besides state.

% There are a few aspects of probabilistic independence that makes it approachable
% from a programming languages and formal methods perspective. For instance, in
% probabilistic programs, probabilistic independence is often preserved under
% local operations: if we have functions $f, g: \nat \to \nat$ and independent,
% random inputs $x, y : \nat$, then $f(x)$ and $g(y)$ will be independent as well.
% Furthermore, many common operations in programming languages, such as pairing,
% also preserve independence.

Throughout this work we will use probabilistic effects as our guiding example.
We start by using a resource interpretation of probabilistic samples to
establish independence: if two computations use disjoint resources (i.e.,
probabilistic samples), then they produce independent random quantities.  Our
perspective yields two linear, higher-order languages that can reason about
probabilistic independence. Both languages have a product type constructor
$\otimes$ that enforces independence, in the sense that closed programs of type
$\nat \otimes \nat$ should be denoted by independent distributions.

Our first language \onelang is an linear $\lambda$-calculus with two product
types: the $\otimes$ type constructor enforces that the components of the pair
do not share any resources, while the $\times$ type constructor allows the
components to share resources.  Intuitively, $\otimes$ captures pairs of
independent values, while $\times$ captures pairs of general, possibly-dependent
values. We give a denotational semantics to \onelang and prove its soundness
theorem: the product $\otimes$ ensures probabilistic independence. 

While conceptually clean, \onelang has limited expressivity. For instance,
extending it with sum types breaks the soundness property. In order to mitigate
these issues, we define a richer, two-level language \twolang, where the two
product types of \onelang are restricted to different layers. Intuitively, one
layer allows computations that share randomness, while the other layer prevents
computations from sharing randomness. To enable the layers to interact, the
independent language has a modality that allows to soundly import programs
written in the shared language. This design is inspired by recent work by
\citet{azevedodeamorim2022sampling}, who proposed a two-level language to
combine the sampling and linear operator semantics of probabilistic programming
languages. We show that \twolang supports two different kinds of sum types: a
``shared'' sum in the sharing layer, and a ``separated'' sum in the independent
layer. We give a denotational semantics for the \twolang, prove soundness, and
give translations of two fragments of \onelang into \twolang.

\subsubsection*{Categorical Semantics and Concrete Models}

In order to show the generality of \twolang and how it connects to other classes
of effects, we propose a categorical semantics for \twolang and prove a general
soundness theorem of our type system. Then, we present concrete models of our
language inspired by a variety of existing effectful programming languages.

\begin{itemize}
  \item \textbf{Linear logic.}
    Models of linear logic have been used to give semantics to probabilistic
    languages~\citep{pcoh,stablecones,azevedodeamorim2022riesz}. We show that
    pairing these models with the category of Markov kernels yields models for
    \twolang. Our soundness theorem guarantees probabilistic independence; as
    far as we know, our method is the first to ensure independence in these
    models.
  \item \textbf{Distributed programming.}
    Next, we develop a relational model of \twolang for distributed programming.
    In this model, programs describe the implementation and communication
    patterns of multiple agents. Our soundness theorem shows that global
    programs of type $\tau_1 \otimes \tau_2$ can be compiled into two local
    programs that execute independently. This property is reminiscent of
    projection properties in choreographic languages \citep{choreographies}.
  \item \textbf{Name generation.}
    Programming languages with name generation include a primitive that
    generates a fresh identifier. In some contexts, it is important to control
    when and how many times a name is generated; for instance, reusing a
    \emph{nonce} value (``number once'') in cryptographic applications may make
    a protocol vulnerable to replay attacks. We define a model of \twolang based
    on name generation. Our soundness theorem states that the connective
    $\otimes$ enforces disjointness of the names used in each component.
  \item \textbf{Commutative effects.}
    We generalize the name generation and finite distribution models by noting
    that they are both example of monadic semantics of commutative effects.
    Under mild assumptions, every commutative monad gives rise to a model of
    \twolang.
  \item \textbf{Bunched and separation logics.}
    A long line of work uses \emph{bunched logics} to reason about separation of
    resources~\citep{DBLP:journals/bsl/OHearnP99,DBLP:conf/csl/OHearnRY01}.  We
    show that all models of affine bunched logics are also models of \twolang,
    but not vice-versa. To illustrate, we revisit O'Hearn's SCI+, a bunched type
    system for programming with memory
    allocation~\citep{DBLP:journals/jfp/OHearn03}. We define a model of \twolang
    based on SCI+, and give a sound translation of \twolang into SCI+.
\end{itemize}
The diversity of models suggests that \twolang is a suitable framework to reason
about separation and sharing in effectful higher-order programs.

\paragraph*{Outline.}
After reviewing mathematical preliminaries (\S \ref{sec:prelim}), we present our
main contributions:
\begin{itemize}
\item First, we define a linear, higher-order probabilistic $\lambda$-calculus called
  \onelang, with types that can capture probabilistic independence and
  dependence. We give a denotational semantics of our language and prove that
  $\otimes$ captures probabilistic independence (\S \ref{sec:langindep}). 
\item Next, we define a two-level, higher-order probabilistic $\lambda$-calculus
  called \twolang. This language combines an independent fragment and a sharing
  fragment with two distinct sum types: an independent sum, and a sharing sum.
  We give a probabilistic semantics and prove that $\otimes$ captures
  probabilistic independence; we also embed two fragments of \onelang into
  \twolang (\S \ref{sec:twolevel}).
\item Generalizing, we propose a categorical semantics for \twolang. Our
  semantics is a weaker version of Benton's linear/non-linear (LNL) model for
  linear logic~\citep{DBLP:conf/csl/Benton94} and of the calculus proposed
  by \citet{azevedodeamorim2022sampling} (\S \ref{sec:catmodel}).
\item We present a range of models for \twolang, described above. The soundness
  property of our type system ensures natural notions of independence in each of
  these models (\S \ref{sec:models}).
\item Finally, we prove a general soundness theorem: every program of type
  $\tau_1 \otimes \tau_2$ can be factored as two programs $t_1$ and $t_2$ of
  types $\tau_1$ and $\tau_2$, respectively. Our proof relies on a categorical
  gluing argument (\S \ref{sec:soundness}).
\end{itemize}
We survey related work in (\S \ref{sec:rw}), and conclude in (\S \ref{sec:conc}).

\section{Background}
\label{sec:prelim}

\subsection{Monads and their algebras}

We will assume knowledge of basic concepts from category theory, including
functors, products, coproducts, Cartesian closed categories, and symmetric
monoidal closed categories (SMCC). The interested reader can consult
\citet{leinster2014,mac2013} for good introductions to the subject.

\paragraph{Monads.}

Following seminal work by~\citet{DBLP:journals/iandc/Moggi91}, effectful
computations can be given a semantics via monads. A \emph{monad} over a category
$\cat{C}$ is a triple $(T, \mu, \eta)$ such that $T : \cat{C} \to \cat{C}$ is a
functor, $\mu_A : T^2 A \to T A$ and $\eta_A : A \to T A$ are natural
transformations such that $\mu_A \circ \mu_{TA} = \mu_A \circ T\mu_A$, $id_A =
\mu_A \circ T\eta_A$ and $id_A = \mu_A \circ \eta_{TA}$.

Another useful, and equivalent, definition of monads requires a natural
transformation $\eta_A$ and a lifting operation $(-)^* : \cat{C}(A, TB) \to
\cat{C}(TA, TB)$ such that objects from $\cat{C}$ and morphisms $A \to TB$ form
a category, usually referred to as the \emph{Kleisli category} $\cat{C}_T$. This
category has the same objects as $\cat{C}$, and has $Hom_{\cat{C_T}}(A, B) =
Hom_{\cat{C}}(A, TB)$. Kleisli categories are frequently used to give semantics
to effectful programming languages.

% \begin{figure}
% % https://q.uiver.app/?q=WzAsOCxbMCwwLCJUXjNBIl0sWzAsMSwiVF4yQSJdLFsxLDAsIlReMkEiXSxbMSwxLCJUQSJdLFszLDAsIlRBIl0sWzQsMCwiVF4yQSJdLFs0LDEsIlRBIl0sWzUsMCwiVEEiXSxbMCwxLCJUXFxtdV9BIiwyXSxbMCwyLCJcXG11X3tUQX0iXSxbMSwzLCJcXG11X0EiLDJdLFsyLDMsIlxcbXVfQSJdLFs0LDYsImlkX0EiLDJdLFs3LDYsImlkX0EiXSxbNSw2LCJcXG11X0EiXSxbNCw1LCJUXFxldGFfQSJdLFs3LDUsIlxcZXRhX3tUQX0iLDJdXQ==
% \[\begin{tikzcd}
% 	{T^3A} & {T^2A} && TA & {T^2A} & TA \\
% 	{T^2A} & TA &&& TA
% 	\arrow["{T\mu_A}"', from=1-1, to=2-1]
% 	\arrow["{\mu_{TA}}", from=1-1, to=1-2]
% 	\arrow["{\mu_A}"', from=2-1, to=2-2]
% 	\arrow["{\mu_A}", from=1-2, to=2-2]
% 	\arrow["{id_A}"', from=1-4, to=2-5]
% 	\arrow["{id_A}", from=1-6, to=2-5]
% 	\arrow["{\mu_A}", from=1-5, to=2-5]
% 	\arrow["{T\eta_A}", from=1-4, to=1-5]
% 	\arrow["{\eta_{TA}}"', from=1-6, to=1-5]
% \end{tikzcd}\]    
%     \caption{Monad laws}
%     \label{fig:monad}
% \end{figure}

\paragraph*{Monad algebras.}
Given a monad $T$, a $T$-\emph{algebra} is a pair $(A, f : T A \to A)$ such that
$id_A = f \circ \eta_A$ and $f \circ \mu_A = f \circ Tf$. A $T$-\emph{algebra
morphism} $h : (A, f) \to (B, g)$ is a $\cat{C}$ morphism $h : A \to B$ such
that $g \circ Th = h \circ f$. $T$-algebras and morphisms form a
category $\cat{C}^T$, the \emph{Eilenberg-Moore category}.

\subsection{Probability Theory}

We will use probabilistic programs and effects to illustrate our higher-order
languages. 

\begin{Def}
    A distribution over a set $X$ is a function $\mu : X \to [0,1]$ such that
    $\sum_{x \in X} \mu(x) = 1$.
\end{Def}

Joint distributions are distributions over sets $X \times Y$. Given a joint distribution
$\mu$ over $X\times Y$, its marginal distribution over $X$ is defined as
$\mu_X(x) = \sum_{y \in Y}\mu(x, y)$ with and the second marginal $\mu_Y$ being similarly defined.

\begin{Def}
  A distribution $\mu$ over $X \times Y$ is probabilistically
  \emph{independent} if it is a product of its marginals $\mu_X$ and $\mu_Y$,
  i.e., $\mu(x,y) = \mu_X(x) \cdot \mu_Y(y)$, $x \in X$ and $y \in Y$.
\end{Def}

A probability monad can be defined for $\cat{Set}$. Given a set $X$, let 
$\D X$ be the set of functions $\mu : X \to [0,1]$ which are non-zero on 
finitely many values, and satisfy $\sum_{x \in supp(\mu)}\mu(x) = 1$~\citep{fritz2020}.
The unit of the monad is given by $\delta(a, b) = 1$ iff $a = b$ and $0$ otherwise,
while the bind is defined as $\mathsf{bind}(f)(\mu) = \sum_{x \in X} f(x)\mu(x)$.

\section{A Linear Language for Independence}
\label{sec:langindep}

To motivate our language for separated and shared effects, we will focus on one
effect: probabilistic sampling. We will build up two higher-order languages
where types can ensure probabilistic independence, the natural notion of
separation for probabilistic effects.

\subsection{Independence Through Linearity}
In many probabilistic programs, independent quantities are initially generated
through sampling instructions. Then, a simple way to reason about independence
of a pair of random expressions is to analyze which sources of randomness each
component uses: if the two expressions use distinct sources of randomness, then
they are independent; otherwise, they are possibly-dependent.

For instance, consider a simply typed first-order call-by-value language with a
primitive $\vdash \coin : \bool$ that flips a fair coin. The program
\[
  \letin x {\coin} {\letin y {\coin} {(x, y)}}
\]
flips two fair coins and pairs the results. This program will produce a
probabilistically independent distribution, since $x$ and $y$ are distinct
sources of randomness. On the other hand, the program 
\[
  \letin x {\coin} {(x, x)}
\]
does not produce an independent distribution: the two components are always
equal, and hence perfectly correlated. These principles are a natural fit for
substructural type systems, which control when variables can be shared. To
investigate this idea, we develop a language \onelang with an affine type system
that can reason about probabilistic independence.

\subsection{Introducing the Language \onelang}

\paragraph{Syntax.}
Figure~\ref{fig:abstract-syntax} presents the syntax of types and terms. Along
with base types ($\bool$), there are two product types: we view $\times$ as the
shared, or possibly-dependent product, while $\otimes$ is the independent
product. The language is higher-order, with a linear arrow type. The
corresponding term syntax is fairly standard. We have variables, numeric
constants, and primitive distributions ($\coin$). The two kinds of products can
be created from two kinds of pairs, and eliminated using projection and
let-binding, respectively. Finally, we have the usual $\lambda$-abstraction and
application. Our examples will use the standard syntactic sugar $\letin{x}{t}{u}
\triangleq \app{(\lamb x u)}{t}$.

\begin{figure}[t]
  \begin{syntax}
    % Define a syntactic category with no right-hand side.
    \abstractCategory[Variables]{x, y, z}

    % % Define a syntactic category from a set
    % \categoryFromSet[Naturals]{n}{\mathbf{N}}

    % Start a category definition. Optional input is the category
    % description, the other declares the meta variable(s) you will
    % use to refer to this category.
    \category[Types]{\tau}
    \alternative{\bool}
    % \alternative{1}
    % You can place alternatives on a new line.
    \alternative{\tau \times \tau}
    \alternative{\tau \otimes \tau}
    \alternative{\tau \lto \tau}

    % \category[Values]{v}
    % \alternative{\lambda x . e}

    % You can add vertical spacing between categories to visually group them.
    % \separate
    % You can pass the amount of space explicitly if you want to manually control it:
    % \separate[5ex]

    \category[Expressions]{t, u}
    % Each alternative is introduced in its own line.
    \alternative{x}
    \alternative{b \in \bool}
    \alternative{\coin}
    % \alternative{()}
    \alternative{(t, u)}
    \alternative{\pi_i \, t}
    \alternativeLine{t \otimes u}
    \alternative{\letin{x \otimes y}{t}{u}}
    \alternative{\lamb x t}
    \alternative{\app t u}
    \category[Contexts]{\Gamma}
    \alternative{x_1 : \tau_1, \ldots, x_n : \tau_n}
  \end{syntax}
  \hrule
  \caption{Types and Terms: \onelang}
  \label{fig:abstract-syntax}
\end{figure}

\paragraph{Type system.}

\Cref{fig:1type} shows the typing rules for \onelang; the rules are standard
from linear logic. The variable rule \textsc{Var} is \emph{affine}: variables 
in the context may not be used, and variables cannot be freely
duplicated. For the sharing product $\times$, the introduction rule
\textsc{$\times$ Intro} shares the context across the premises: both components
can use the same variables. Either component can be projected out of these pairs
(\textsc{$\times$ Elim$_i$}). For the independent product $\otimes$, in contrast,
the introduction rule \textsc{$\otimes$ Intro} requires both premises to use
\emph{disjoint} contexts. Thus, the components cannot share variables.  Tensor
pairs are eliminated by a let-pair construct that consumes both components
(\textsc{$\otimes$ Elim}). In substructural type systems, $\times$ is called an
\emph{additive} product, while $\otimes$ is called a \emph{multiplicative}
product. The abstraction and application rules are standard.

\paragraph{An additive arrow?}
Note that the application rule is multiplicative: the function cannot share
variables with its argument. A natural question is whether the arrow should be
additive: can we share variables between the function and its argument?
Substructural type systems like bunched
logic~\citep{DBLP:journals/bsl/OHearnP99} include both a multiplicative and an
additive arrow.

While we haven't defined the semantics of our language yet, we sketch an example
showing that an additive arrow would make it difficult for $\otimes$ to capture
probabilistic independence. If we allowed variables to be shared between the
function and its argument, we would be able to type-check:
\[
 \cdot \vdash \letin{x}{\coin}{\app{(\lamb{y}{x\otimes y})}{x}}: \bool \otimes \bool
\]
Under our semantics, which we will see next, this program is equivalent to
$\letin{x}{\coin{}}{x \otimes x}$, which produces a pair of correlated values.
Thus, we take a multiplicative arrow for our language.

\begin{figure}[t]
\begin{mathpar}
    \inferrule[Const]{~}{\cdot \vdash b : \bool}
    \and
    \inferrule[Coin]{~}{\cdot \vdash \coin : \bool}
    \and
    \inferrule[Var]{~}{\Gamma, x : \tau \vdash x : \tau}
    \and
    \inferrule[$\times$ Intro]{\Gamma \vdash t_1 : \tau \\ \Gamma \vdash t_2 : \tau_2}{\Gamma \vdash (t_1, t_2) : \tau_1 \times \tau_2}
    \and
    \inferrule[$\times$ Elim$_i$]{\Gamma \vdash t : \tau_1 \times \tau_2}{\Gamma \vdash \pi_i\, t : \tau_i}
    \\
    \inferrule[$\otimes$ Intro]{\Gamma_1 \vdash t_1 : \tau \\ \Gamma_2 \vdash t_2 : \tau_2}{\Gamma_1, \Gamma_2 \vdash t_1\otimes t_2 : \tau_1 \otimes \tau_2}
    \and
    \inferrule[$\otimes$ Elim]{\Gamma_1 \vdash t : \tau_1 \otimes \tau_2 \\ \Gamma_2, x : \tau_1, y : \tau_2 \vdash u : \tau}{\Gamma_1, \Gamma_2 \vdash \letin{x\otimes y}{t}{u} : \tau}
    \\
    \inferrule[Abstraction]{\Gamma, x : \tau_1 \vdash t : \tau_2}{\Gamma \vdash \lamb{x}{t} : \tau_1 \lto \tau_2}
    \and
    \inferrule[Application]{\Gamma_1 \vdash t : \tau_1 \lto \tau_2 \\ \Gamma_2 \vdash u : \tau_1}{\Gamma_1, \Gamma_2 \vdash \app{t}{u} : \tau_2}
\end{mathpar}
\hrule

\caption{Typing Rules: \onelang}
\label{fig:1type}

\end{figure}

\subsection{Denotational Semantics}

\begin{figure}
\begin{align*}
    \sem{\bool} &= \bool \\
    \sem{\tau \times \tau} &= \sem\tau \times \sem \tau \\
    \sem{\tau\otimes\tau} &= \sem\tau \times \sem\tau \\
    \sem{\tau_1 \lto \tau_2} &= \sem{\tau_1} \to \D\sem{\tau_2}\\
    \\
    \sem{x_1 : \tau_1, \dots, x_n : \tau_n} &= \sem{\tau_1} \times \cdots \times \sem{\tau_n}\\
    \\
    \sem{\Gamma \vdash t : \tau} &: \sem{\Gamma} \to \D\sem{\tau}\\
    \\
    \sem{x}(\gamma, v_x) &= \return v_x \\
    \sem{b}(*) &= \return b \\
    \sem{\coin}(*) &= \frac{1}{2}(\delta_{\kwtt} + \delta_{\kwff})\\
    \sem{(t_1, t_2)}(\gamma) &= x \leftarrow \sem{t_1}(\gamma ); y \leftarrow \sem{t_2}(\gamma ); \return (x, y)\\
    \sem{\pi_i\ t}(\gamma) &= (x, y) \leftarrow \sem{t}(\gamma); \return x\\
    \sem{t_1 \otimes t_2}(\gamma_1, \gamma_2) &= x \leftarrow \sem{t_1}(\gamma_1); y \leftarrow \sem{t_2}(\gamma_2); \return (x, y)\\
    \sem{\letin{x \otimes y}{t}{u}}(\gamma_1, \gamma_2) &= (x, y) \leftarrow \sem{t}(\gamma_1); \sem{u}(\gamma_2, x, y)\\
    \sem{\lamb x t}(\gamma) &= \return (\lamb x {\sem t}(\gamma))\\ 
    \sem{\app t u}(\gamma_1, \gamma_2) &= f \leftarrow \sem{t}(\gamma_1); x \leftarrow \sem {u}(\gamma_2); f(x)
\end{align*}
\hrule
\caption{Denotational Semantics: \onelang}
\label{fig:monsem}

\end{figure}

We can give a semantics to this language using the category $\cat{Set}$ and the
finite probability monad $\D$. From top to bottom, \Cref{fig:monsem} defines
the semantics of types, contexts, and typing derivations producing well-typed
terms. For types, we interpret both product types as products of sets. Arrow
types are interpreted as the set of Kleisli arrows, i.e., maps $\sem{\tau_1}
\to \D\sem{\tau_2}$. Contexts are interpreted as products of sets.

We interpret well-typed terms as Kleisli arrows. We briefly walk through the
term semantics, which is essentially the same as the Kleisli semantics proposed
by \citet{DBLP:journals/iandc/Moggi91}. Variables are interpreted using the unit
of the monad, which maps a value $v$ to the point mass distribution $\delta_v$.
Coins are interpreted as the fair convex combination of two point mass
distributions over $\kwtt$ and $\kwff$.

The rest of the constructs involve sampling, which is semantically modeled by
composition of Kleisli morphisms. We use monadic arrow notation to denote
Kleisli composition, i.e., $x \leftarrow f; g \triangleq g^* \circ f$. The two
pair constructors have the same semantics: we sample from each component, and
then pair the results. The projections for $\times$ computes the marginal of a
joint distribution, while let-binding for $\otimes$ samples from the pair $t$
and then uses the sample in the body $u$. Lambda abstractions are interpreted as
point mass distributions, while applications are interpreted as sampling the
function, sampling the argument, and then applying the first sample to the
second one.

\begin{Ex}[Correlated pairs]
It may seem as if there is no way of creating non-independent pairs, since the
semantics for both kinds of pairs samples each component independently. However,
consider the program $\letin{x}{\coin{}}{(x,x)}$. By unfolding the definitions,
its semantics is 
\[
x \leftarrow \frac{1}{2}(\delta_0 + \delta_1); y \leftarrow \delta_x; z \leftarrow \delta_x; \delta_{(y, z)}
= x \leftarrow \frac{1}{2}(\delta_0 + \delta_1); \delta_{(x, x)}
= \frac{1}{2}(\delta_{(0,0)} + \delta_{(1,1)}) .
\]
The resulting samples are perfectly correlated, not independent.
\end{Ex}

\begin{Ex}[Independent pairs are correlated pairs]
  Independent distributions are also possibly-dependent distributions. In
  \onelang, this fact is reflected by the following program:
  \[
    \cdot \vdash \lamb{z}{\letin{x \otimes y}{z}{(x, y)}} : \tau_1 \otimes \tau_2 \lto \tau_1 \times \tau_2 .
  \]
  If we unfold the semantics of this program, we see that this program does not
  modify the input.
\end{Ex} 

\subsection{Soundness}

The type system of \onelang guarantees that $\otimes$ enforces probabilistic
independence. Concretely, if $\cdot \vdash t : \tau_1 \otimes \tau_2$ is
well-typed, then $\sem{t}(*)$ is an independent probability distribution over
$\sem{\tau_1} \times \sem{\tau_2}$. We show this soundness theorem by
constructing a logical relation $\mathcal{R}_\tau \subseteq \D(\sem{\tau})$,
defined as:
\begin{align*}
   \Rel_{\bool} &= \D(\bool) \\
   \Rel_{\tau_1 \otimes \tau_2} &= \set{\mu_1 \otimes \mu_2 \in \D(\sem{\tau_1} \times \sem{\tau_2})}{\mu_i \in \Rel_{\tau_i}}\\
   \Rel_{\tau_1 \times \tau_2} &= \set{\mu \in \D(\sem{\tau_1} \times \sem{\tau_2})}{ \pi_i(\mu) \in \Rel_{\tau_i} \text{ for } i \in \{ 1, 2 \} }\\
   \Rel_{\tau_1 \lto \tau_2} &= \set{\mu \in \D(\sem{\tau_1} \to \D(\sem{\tau_2})) }{\forall \mu' \in \Rel_{\tau_1}, x \leftarrow \mu'; f \leftarrow \mu; f(x) \in R_{\tau_2}} .
\end{align*}

\begin{Th}
\label{th:soundkl}
  If $x_1 : \tau_1, \dots, x_n : \tau_n \vdash t : \tau$ and $\mu_i \in \Rel_{\tau_i}$ then
  \[
    (x_1 \leftarrow \mu_1; \cdots ; x_n \leftarrow \mu_n ;  \sem{t}(x_1, \dots, x_n)) \in \Rel_\tau .
  \]
\end{Th}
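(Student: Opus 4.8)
The plan is to prove the statement by induction on the typing derivation of $\Gamma \vdash t : \tau$; this is the usual ``fundamental lemma'' for a logical relation, and because the relation $\Rel_\tau$ is indexed by types and is by construction closed under the sampling operations, no separate substitution lemma is required. Throughout, I would lean on two algebraic properties of the finite distribution monad $\D$. First, $\D$ is \emph{commutative}, so that samples drawn from distributions that do not depend on one another may be reordered freely; formally, the monadic notation $x \leftarrow \mu; y \leftarrow \nu; \dots$ is insensitive to the order of $\mu$ and $\nu$ when neither uses the other's variable. Second, $\D$ is \emph{affine} (distributions sum to one), so a sampled value that is never used can be marginalized away without changing the result. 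These two facts are exactly what align the affine, substructural discipline of the type system with the product-and-marginal structure of $\Rel_\tau$.

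For the base cases \textsc{Const}, \textsc{Coin}, and \textsc{Var} the computation is immediate: the first two produce fixed elements of $\D(\bool) = \Rel_{\bool}$, and for \textsc{Var} the unused context samples marginalize away by affineness, leaving exactly the hypothesized $\mu_x \in \Rel_\tau$. The additive cases are handled by the marginal clause of the relation: for \textsc{$\times$ Intro} both components run on the \emph{shared} context, and by commutativity the $i$-th marginal of the resulting joint is precisely the distribution denoted by $t_i$, which lies in $\Rel_{\tau_i}$ by the induction hypothesis; \textsc{$\times$ Elim$_i$} then simply reads off that marginal. The arrow cases use the definition of $\Rel_{\tau_1 \lto \tau_2}$ directly: for \textsc{Abstraction}, after sampling the context we obtain a point mass on $\lambda x.\, \sem{t}(\gamma)$, and applying it to any $\mu' \in \Rel_{\tau_1}$ reduces, after a commutativity-justified reordering, to the induction hypothesis for $t$ under the extended context $\Gamma, x : \tau_1$; for \textsc{Application}, the disjoint split $\Gamma_1, \Gamma_2$ lets me build $\mu_f \in \Rel_{\tau_1 \lto \tau_2}$ and $\mu_a \in \Rel_{\tau_1}$ from the two halves of the context, and the relation at the arrow type closes the argument.

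The crucial and most delicate cases are the two multiplicative $\otimes$ rules, since these are where genuine independence is both produced and consumed. In \textsc{$\otimes$ Intro}, the disjointness of $\Gamma_1$ and $\Gamma_2$ means that $\sem{t_1}$ and $\sem{t_2}$ depend on disjoint blocks of samples; by commutativity the resulting joint distribution factors exactly as a product of the output of $t_1$ and the output of $t_2$, each of which lies in $\Rel_{\tau_i}$ by the induction hypothesis, so the joint lands in $\Rel_{\tau_1 \otimes \tau_2}$ by definition. The dual difficulty appears in \textsc{$\otimes$ Elim}: the induction hypothesis on $t$ yields a distribution in $\Rel_{\tau_1 \otimes \tau_2}$, which by definition \emph{already} factors as a product $\mu_x \otimes \mu_y$, and I would use this factorization to re-present ``sample $(x,y)$ from $\sem{t}$'' as two independent samples $x \leftarrow \mu_x$ and $y \leftarrow \mu_y$, precisely the form needed to invoke the induction hypothesis for $u$ under $\Gamma_2, x : \tau_1, y : \tau_2$. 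I expect the main obstacle to be making these factoring and refactoring steps fully rigorous: one must carefully track which samples each sub-denotation consumes and verify that the commutativity and marginalization manipulations are valid at the level of the monadic notation, rather than reasoning about product distributions informally. The intended soundness statement then follows as the special case of a closed term $\cdot \vdash t : \tau_1 \otimes \tau_2$, whose denotation lies in $\Rel_{\tau_1 \otimes \tau_2}$ and is therefore the product of its marginals, i.e., independent.
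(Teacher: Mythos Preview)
The proposal is correct and follows essentially the same approach as the paper: induction on the typing derivation, using commutativity of $\D$ to reorder samples in the multiplicative cases and affineness to discard unused context samples, with the $\otimes$-elimination case handled by unfolding the factorization guaranteed by $\Rel_{\tau_1 \otimes \tau_2}$ and then invoking the induction hypothesis for the body. Your explicit identification of affineness as the justification for the \textsc{Var} case is a point the paper leaves implicit.
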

\begin{proof}
  Let the distribution above be $\nu$. We write $\overline{x_i}$ as shorthand
  for $x_1, \dots, x_n$, and $\overline{x_i \leftarrow \mu_i}$ as shorthand for
  $x_1 \leftarrow \mu_1; \cdots ; x_n \leftarrow \mu_n$. We prove $\nu \in
  \Rel_\tau$ by induction on the derivation of $\Gamma \vdash t : \tau$.
  \begin{description}
    \item[\textsc{Const}/\textsc{Coin}/\textsc{Var}.] Trivial.
      For instance, \textsc{Var}: $\nu = \overline{x_i \leftarrow \mu_i} ; \return
      x_i = \mu_i$ is in $\Rel_{\tau_i}$ by assumption.
    \item[$\times$ \textsc{Intro}.] We have $\nu = \overline{x_i \leftarrow
      \mu_i}; x \leftarrow \sem{t_1}(\overline{x_i}); y \leftarrow
      \sem{t_2}(\overline{x_i}); \return (x,y)$. It is straightforward to show
      that the first marginal of $\nu$ is $\overline{x_i \leftarrow \mu_i}; x
      \leftarrow \sem{t_1}(\overline{x_i}); \return x$ which, by the induction
      hypothesis, in an element of $\Rel_{\tau_1}$; similarly, the second
      marginal of $\nu$ is an element of $\Rel_{\tau_2}$.
    \item[$\times$ \textsc{Elim}.] We have $\nu = \overline{x_i \leftarrow
      \mu_i}; (x, y) \leftarrow \sem{t}(\overline{x_i}); \return x$. By the
      induction hypothesis, $\sem{t}(x_i) \in \Rel_{\tau_1 \times \tau_2}$ and,
      by assumption, its marginals are elements of $\Rel_{\tau_1}$ and
      $\Rel_{\tau_2}$.
    \item[$\otimes$ \textsc{Intro}.] Let $\overline{\mu_i}$ be the sequence of
      distributions corresponding to $\Gamma_1$, and let $\overline{\eta_i}$ be
      the sequence of distributions corresponding to $\Gamma_2$. Since $\D$ is a
      commutative monad~\citep{borceux1994}, we may apply associativity and
      commutativity to show:
      \begin{align*}
        \nu &= x_i \leftarrow \overline{\mu_i};
        y_i \leftarrow \overline{\eta_i};
        x \leftarrow \sem{t_1}(\overline{x_i});
        y \leftarrow \sem{t_2}(\overline{y_i});
        \return (x, y) \\
            &= \overline{x_i \leftarrow \mu_i};
            x \leftarrow \sem{t_1}(\overline{x_i});
            \overline{y_i \leftarrow \eta_i};
            y \leftarrow \sem{t_2}(\overline{y_i});
            \return (x,y) \\
            &= (\overline{x_i \leftarrow \mu_i};
            x \leftarrow \sem{t_1}(\overline{x_i});
            \return x)
            \otimes
            (\overline{y_i \leftarrow \eta_i};
            y \leftarrow \sem{t_2}(\overline{y_i});
            \return y)
            = \nu_1 \otimes \nu_2 .
      \end{align*}
      Furthermore, by induction hypothesis, $\nu_i \in \Rel_{\tau_i}$ so $\nu =
      \nu_1 \otimes \nu_2 \in \Rel_{\tau_1 \otimes \tau_2}$ as desired.
    \item[$\otimes$ \textsc{Elim}.] Let $\overline{\mu_i}$ be the sequence of
      distributions corresponding to $\Gamma_1$, and let $\overline{\eta_i}$ be
      the sequence of distributions corresponding to $\Gamma_2$. We have:
      \begin{align*}
        \nu &= \overline{x_i \leftarrow \mu_i};
        \overline{y_i \leftarrow \eta_i};
        (x, y) \leftarrow \sem{t}(\overline{x_i}); \\
            &= \overline{x_i \leftarrow \mu_i};
            (x, y) \leftarrow \sem{t}(\overline{x_i});
            \overline{y_i \leftarrow \eta_i};
            \sem{u}(\overline{y_i}, x, y) \\
            &= (x, y) \leftarrow \nu_1 \otimes \nu_2;
            \overline{y_i \leftarrow \eta_i};
            \sem{u}(\overline{y_i}, x, y) \\
            &= \overline{y_i \leftarrow \eta_i};
            x \leftarrow \nu_1;
            y \leftarrow \nu_2;
            \sem{u}(\overline{y_i}, x, y)
      \end{align*}
      where the third equality is by the induction hypothesis from the first
      premise. By the induction hypothesis from the second premise, the final
      distribution is in $\Rel_{\tau}$, as desired.
    \item[\textsc{Abstraction}.] By unfolding the definitions, we need to show
      \[
        x \leftarrow \mu; f \leftarrow (x_i \leftarrow \mu_i; \delta_{\lamb x
        {\sem t(x_i)}}); f(x) \in \Rel_{\tau_2} ,
      \]
      for some $\mu \in \Rel_{\tau_1}$. This distribution is equal to $x_i
      \leftarrow \mu_i; x \leftarrow \mu; f \leftarrow \delta_{\lamb x {\sem
      t(x_i)}}; f(x)$, by associativity and commutativity. By the induction
      hypothesis and the fact that $\delta$ is the unit of the monad, we can
      conclude this case.
    \item[\textsc{Application}.] This case follows directly from the induction
      hypotheses. \qedhere
  \end{description}
\end{proof}

Our soundness property for \onelang follows immediately.

\begin{Cor}
  If $\cdot \vdash t : \tau_1 \otimes \tau_2$ then $\sem{t}(*)$ is an
  independent probability distribution over $\sem{\tau_1} \times \sem{\tau_2}$.
\end{Cor}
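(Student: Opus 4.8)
The plan is to obtain the corollary directly from \Cref{th:soundkl} by specializing to the empty context. First I would instantiate the theorem with $n = 0$, so that the typing judgment becomes $\cdot \vdash t : \tau_1 \otimes \tau_2$ and the list of input distributions $\mu_i$ is empty. In this degenerate case the composite $(x_1 \leftarrow \mu_1; \cdots; x_n \leftarrow \mu_n; \sem{t}(x_1, \dots, x_n))$ collapses to $\sem{t}(*)$, since there are no samples to draw and $\sem{t}$ is applied to the unique element $*$ of the empty product of sets. The theorem then yields $\sem{t}(*) \in \Rel_{\tau_1 \otimes \tau_2}$.

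Next I would unfold the definition of the logical relation at the tensor type. By definition, $\Rel_{\tau_1 \otimes \tau_2} = \set{\mu_1 \otimes \mu_2 \in \D(\sem{\tau_1} \times \sem{\tau_2})}{\mu_i \in \Rel_{\tau_i}}$, so membership of $\sem{t}(*)$ produces distributions $\mu_1 \in \Rel_{\tau_1}$ and $\mu_2 \in \Rel_{\tau_2}$ with $\sem{t}(*) = \mu_1 \otimes \mu_2$, where $\mu_1 \otimes \mu_2$ denotes the product distribution $(\mu_1 \otimes \mu_2)(x,y) = \mu_1(x)\cdot\mu_2(y)$. It then remains to check that such a distribution is independent in the sense of the earlier definition, i.e.\ that it factors as the product of its own marginals. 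For this I would compute the marginals of the product: summing $\mu_1 \otimes \mu_2$ over the second coordinate recovers $\mu_1$, using $\sum_{y} \mu_2(y) = 1$, and symmetrically the second marginal is $\mu_2$. Hence $(\mu_1 \otimes \mu_2)(x,y) = \mu_1(x)\cdot\mu_2(y)$ coincides with the product of its marginals, which is exactly the condition for $\sem{t}(*)$ to be probabilistically independent.

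There is essentially no hard step here: all of the content is carried by \Cref{th:soundkl}, and what remains is the bookkeeping observation that the tensor product $\mu_1 \otimes \mu_2$ appearing in the definition of $\Rel_{\tau_1 \otimes \tau_2}$ is the product distribution, whose marginals are precisely $\mu_1$ and $\mu_2$. The only point worth stating carefully is that the factors $\mu_1,\mu_2$ supplied by the logical relation genuinely are the marginals of $\sem{t}(*)$, so that the factorization matches the definition of independence verbatim rather than merely exhibiting some product decomposition; once this is noted, the result follows.
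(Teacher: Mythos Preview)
Your proposal is correct and matches the paper's approach: the paper simply states that the corollary ``follows immediately'' from \Cref{th:soundkl}, and your argument---specialize to the empty context so the monadic prefix vanishes, then unfold $\Rel_{\tau_1 \otimes \tau_2}$ to obtain a product decomposition and check it against the definition of independence---is exactly the intended unpacking of that remark.
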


\section{A Two-Level Language for Independence}
\label{sec:twolevel}

The affine type system of \onelang can distinguish between independent and
possibly dependent random quantities, but the language is not as expressive as
we would like. We first discuss these limitations, and then introduce a
stratified, two-level language \twolang that resolves these problems. Finally,
we show how to embed two fragments of \onelang into \twolang.

\subsection{Limitations of \onelang: Sums and Let-Bindings}

\paragraph*{Adding sum types.}
Though there are base types like $\bool$ in \onelang, there are no conditionals.
Extending \onelang with sum types and case analysis immediately leads to
problems. Consider the program:
\[
  \ifthenelse{\coin}{\kwtt \otimes \kwtt}{\kwff \otimes \kwff}
\]
Operationally, this probabilistic program flips a fair coin and a pair with two
copies of the result, $\kwtt \otimes \kwtt$ or $\kwff \otimes \kwff$.  Since
$\kwtt$ and $\kwff$ are constants they do not share any variables, so both
branches can be given type $\bool \otimes \bool$ and a standard case analysis
rule would assign the whole program $\bool \otimes \bool$. However, this
extension would break soundness (\cref{th:soundkl}): the pair is not
probabilistically independent because its components are always equal to each
other.

This example illustrates that we should not allow case analysis to produce
programs of type $\tau_1 \otimes \tau_2$. However, note that it is safe to allow
case analysis to produce programs of type $\tau_1 \times \tau_2$ since this
product does not assert independence. Thus, incorporating sum types into
\onelang while preserving soundness seems to require ad hoc restrictions on the
elimination rule.

\paragraph*{Reusing variables.}
Another restriction is that function application is multiplicative.  The
limitation can be seen when using let-bindings, which are syntactic sugar for
application. In $\letin{x}{t}{u}$, the terms $t$ and $u$ \emph{cannot} share any
variables. For instance, \onelang does not allow the following program:
\begin{align*}
  &\letin{x_1}{\coin}{ }
  \letin{x_2}{\coin}{ }\\
  &\qquad\letin{y}{f(x_1, x_2)}{ }
  \letin{z}{g(x_1, x_2)}{(y, z)}
\end{align*}
However, there are useful sampling algorithms (e.g., the Box-Muller
transform~\citep{boxmuller}) that follow this shape. In order to write a
well-typed version of this program in \onelang, we could inline the definitions
of $y$ and $z$: the pair constructor $(-, -)$ is additive, so the two components
can both use $x_1$ and $x_2$. However, it is awkward to require this change.

Similarly, given a term of type $\tau_1 \times \tau_2$, we can't directly
project out both components at the same time. For instance, the program
\[
  \letin{x}{\pi_1\, z}{%
    \letin{y}{\pi_2\, z}{%
      f(x, y)
  }}
\]
is not well-typed, since the outer let-binding shares the variable $z$ with its
body. These problems would be solved if function application in \onelang was
additive; however, as we saw in \Cref{sec:langindep}, allowing a function and an
argument to share variables can also break soundness of \onelang.

\subsection{The Language \twolang: Syntax, Typing Rules and Semantics}

To address these limitations, we introduce a stratified language. We are guided
by a simple observation about products, sums, and distributions, which might be
of more general interest. In \onelang, the product types correspond to two
distinct ways of composing distributions with products: the sharing product
$\tau_1 \times \tau_2$ corresponds to \emph{distributions of products},
$M(\tau_1 \times \tau_2)$, while the separating product $\tau_1 \otimes \tau_2$
corresponds to \emph{products of distributions}, $M\tau_1 \times M\tau_2$.

Similarly, there are two ways of combining distributions and sums:
\emph{distributions of sums}, $M(\tau_1 + \tau_2)$, and \emph{sums of
distributions}, $M\tau_1 + M\tau_2$. We think of the first combination as a
\emph{sharing sum}, since the distribution can place mass on both components of
the sum. In contrast, the second combination is a \emph{separating sum}, since
the distribution either places all mass on $\tau_1$ or all mass on $\tau_2$.

Finally, there are interesting interactions between sharing and separating, sums
and products. For instance, the problematic sum example we saw above performs
case analysis on $\coin$---a sharing sum, because it has some probability of
returning true and some probability of returning false---but produces a
separating product $\bool \otimes \bool$. If we instead perform case analysis on
a \emph{separating} sum, then the program either always takes the first branch
or always takes the second branch, and now there is no problem with producing a
separating product.

These observations lead us to design a two-level language, where one layer
includes the sharing connectives and the other layer includes the separating
connectives. We call this language \twolang, where INI stands for
\emph{independent/non-independent}.

% As we will see in \cref{sec:models}, the semantics of \twolang resembles
% Benton's linear/non-linear (LNL) semantics for linear
% logic~\citep{DBLP:conf/csl/Benton94}.

\paragraph*{Syntax.} 
The program and type syntax of \twolang, summarized in Figure~\ref{fig:emlang},
is stratified into two layers: a non-independent (\shr) layer, and an
independent (\sep) layer. We will color-code them: the \shr-language will be
\orange{orange}, while the \sep-language will be \purple{purple}.

The \shr layer has base, product ($\textcolor{orange}{\times}$), and sum types
($\textcolor{orange}{+}$). The language is mostly standard: we have variables,
constants, basic distributions (\orange{$\coin$}), and a set
$\orange{\mathcal{O}(\tau_1, \tau_2)}$ of primitive operations from
$\orange{\tau_1}$ to $\orange{\tau_2}$, along with the usual pairing and
projection constructs for products, and injection and case analysis constructs
for sums. The \shr layer does not have arrows, but it does allow let-binding.

The \sep-layer is quite similar to \onelang: it has its own product
($\textcolor{purple}{\otimes}$) and sum ($\textcolor{purple}{\oplus}$) types,
and a linear arrow type ($\textcolor{purple}{\lto}$). The type
$\M(\orange{\tau})$ brings a type from the \shr-layer into the \sep-layer. The
language is also fairly standard, with constructs for introducing and
eliminating products and sums, and functions and applications. The last
construct
$(\sample{\textcolor{purple}{\overline{t}}}{\textcolor{orange}{\overline{x}}}{\textcolor{orange}{M}})$
is from \cite{azevedodeamorim2022sampling}: it allows the two layers to interact. Here, $\overline{t}$ and
$\overline{x}$ are two (possibly empty) lists of the same length.

Intuitively, the \shr-language allows sharing while the \sep-language disallows
sharing. Each language has its own sum type, a sharing and separated sum,
respectively, each of which interacts nicely with its own product type. The $\M$
modality can be thought of as an abstraction barrier between both languages that
enables the manipulation of shared programs in a separating program while not
allowing its sharing to be inspected, except when producing another boxed term.

\begin{figure}[h]
  \begin{syntax}
    % Define a syntactic category with no right-hand side.
    \abstractCategory[Variables]{x, y, z}

    % % Define a syntactic category from a set
    % \categoryFromSet[Naturals]{n}{\mathbf{N}}

    % Start a category definition. Optional input is the category
    % description, the other declares the meta variable(s) you will
    % use to refer to this category.
    \category[\shr-types]{\textcolor{orange}{\tau}}
    \alternative{\textcolor{orange}{\bool}}
    % \alternative{1}
    % You can place alternatives on a new line.
    \alternative{\textcolor{orange}{\tau \times \tau}}
    \alternative{\textcolor{orange}{\tau + \tau}}
    \category[\sep-types]{\textcolor{purple}{\underline\tau}}
    % \alternative{1}
    % You can place alternatives on a new line.
    \alternative{\textcolor{purple}{\underline\tau \otimes \underline \tau}}
    \alternative{\textcolor{purple}{\underline\tau \oplus \underline \tau}}
    \alternative{\textcolor{purple}{\underline\tau \lto \underline \tau}}
    \alternative{\M(\textcolor{orange}{\tau})}

    % \category[Values]{v}
    % \alternative{\lambda x . e}

    % You can add vertical spacing between categories to visually group them.
    \separate
    % You can pass the amount of space explicitly if you want to manually control it:
    % \separate[5ex]

    \category[\shr-expressions]{\textcolor{orange}{M, N}}
    % Each alternative is introduced in its own line.
    \alternative{\textcolor{orange}{x}}
    \alternative{\textcolor{orange}{b} \in \bool}
    \alternative{\textcolor{orange}{\coin}}
    \alternative{\textcolor{orange}{f \in \mathcal{O}(\tau_1, \tau_2)}}
    % \alternative{()}
    \alternative{\textcolor{orange}{(M, N)}}
    \alternative{\textcolor{orange}{\pi_i \, M}}
    \alternative{\textcolor{orange}{\mathsf{in_i\ t}}}
    \alternativeLine{\textcolor{orange}{\caseof{t}{u_1}{u_2}}}
    \alternative{\textcolor{orange}{\letin{x}{M}{N}}}
    
    \category[\sep-expressions]{\textcolor{purple}{t, u}}
    \alternative{\textcolor{purple}{x}}
    \alternative{\textcolor{purple}{t \otimes u}}
    \alternative{\textcolor{purple}{\letin{x \otimes y}{t}{u}}}
    \alternative{\purple{\mathsf{op}}}
    \alternative{\textcolor{purple}{\mathsf{in_i\ t}}}
    \alternativeLine{\textcolor{purple}{\caseof{t}{u_1}{u_2}}}
    \alternative{\textcolor{purple}{\lamb x t}}
    \alternative{\textcolor{purple}{\app t u}}
    \alternative{\sample{\textcolor{purple}{\overline{t}}}{\textcolor{orange}{\overline{x}}}{\textcolor{orange}{M}}}

    \category[\shr-contexts]{\textcolor{orange}{\Gamma}}
    \alternative{\textcolor{orange}{x}_1 : \textcolor{orange}{\tau}_1, \ldots, \textcolor{orange}{x}_n : \textcolor{orange}{\tau}_n}

    \category[\sep-contexts]{\textcolor{purple}{\Gamma}}
    \alternative{\textcolor{purple}{x}_1 : \textcolor{purple}{\underline\tau}_1, \ldots, \textcolor{purple}{x}_n : \textcolor{purple}{\underline\tau}_n}
  \end{syntax}
\hrule
  \caption{Types and Terms: \twolang}
  \label{fig:emlang}
\end{figure}

\paragraph*{Typing rules.}
The typing rules of \twolang are presented in \Cref{fig:EMtyping}. We have two
typing judgments for the two layers; we use subscripts on the turnstiles to
indicate the layer. We start with the first group of typing rules, for the
sharing (\shr) layer. These typing rules are entirely standard for a first-order
language with products and sums. Note that all rules allow the context to be
shared between different premises. In particular, the let-binding rule is
\emph{additive} instead of multiplicative as in \onelang: a let-binding is
allowed to share variables with its body.

The second group of typing rules assigns types to the independent (\sep) layer.
These rules are the standard rules for multiplicative additive linear logic
(MALL), and are almost identical to the typing rules for \onelang. Just like
before, the rules treat variables affinely, and do not allow sharing variables
between different premises. The rules for the sum $\textcolor{purple}{\tau_1
\oplus \tau_2}$ are new. Again, the elimination (\textsc{Case}) rule does not
allow sharing variables between the guard and the body.

The final rule, \textsc{Sample}, gives the interaction rule between the two
languages. The first premise is from the sharing (\shr) language, where the
program $\textcolor{orange}{M}$ can have free variables $\textcolor{orange}{x_1,
\dots, x_n}$. The rest of the premises are from the independent (\sep) language,
where linear programs $\textcolor{purple}{t_i}$ have boxed type $\M
\orange{\tau_i}$. The conclusion of the rule combines programs
$\textcolor{purple}{t_i}$ with $\textcolor{orange}{M}$, producing an
\sep-program of boxed type. Intuitively, this rule allows a program in the
sharing language to be imported into the linear language.  Operationally,
$\sample {\textcolor{purple}{t}} {\textcolor{orange}{x}}
{\textcolor{orange}{M}}$ constructs a distribution $\textcolor{purple}{t}$ using
the independent language, samples from it and binds the sample to
$\textcolor{orange}{x}$ in the shared program $\textcolor{orange}{M}$, and
finally boxes the result into the linear language.

% \begin{Rem}
%     We note that even though we have chosen a linear typing discipline, it would
%     not be problematic to assume an affinity. It would simplify a little bit
%     the programming model while adding one extra assumption in the semantics.
%     For the sake of simplicity of the semantics we have decided on a linear system.
% \end{Rem}

\paragraph*{Probabilistic Semantics}
To keep the presentation concrete, in this section we will work with a concrete
semantics motivated by probabilistic independence, where programs are
probabilistic programs with discrete sampling. In the next section, we will
present the general categorical semantics of \twolang and consider other models.

The probabilistic semantics for \twolang is defined in Figure~\ref{fig:EMsem}.
For the \shr-layer, we use the same semantics of \onelang, i.e., well-typed
programs are interpreted as Kleisli arrows for the finite distribution monad
$\D$. The Kleisli category $\cat{Set}_\D$ has sets as objects, so we may simply
define the semantics of each type to be a set. It is also known that
$\cat{Set}_\D$ has products and coproducts, which can be used to interpret
well-typed programs in \shr.

For the $\sep$-language, we use the category of algebras for the finite
distribution monad $\D$ and plain maps, $\widetilde{\cat{Set}^\D}$.  Concretely,
its objects are pairs $(A, f)$, where $f$ is an $M$-algebra, and a morphism $(A,
f) \to (B, g)$ is a function $A \to B$. Given two objects $(A, f)$ and $(B, g)$
we can define a product algebra over the set $A \times B$. Furthermore, it is also possible
to equip the set-theoretic disjoint union $A + B$ and exponential $A \Rightarrow
B$ with algebra structures, making it a model of higher-order programming with
case analysis \cite{simpson1992}. We only need to explicitly define the algebraic structure when
interpreting the type constructor $\M$, which is interpreted as the free
$\D$-algebra with the multiplication for the monad as the algebraic structure.

%This category has been used before in the context of programming languages
%\cite{levythesis}, \cite{sim92} and also has many desirable properties, such as
%inherting coproducts and Cartesian closure (higher-order functions) from the
%base category. In particular, the Cartesian structure is given by the product
%of the underlying sets and function types are defined using the strength of the
%monad in its internal-hom form.

% Therefore the semantics of types, presented in Figure~\ref{fig:EMsem}, is
% essentially a set theoretic semantics, since we can make the set-theoretic
% operations for products ($\times$), coproducts ($+$), functions ($\Rightarrow$),
% into algebra operations. 

\begin{figure}
%\centering
\begin{mathpar}
  \inferrule[Const]{ \orange{b} \in \orange{\bool}}{\textcolor{orange}{\Gamma} \vdashni \orange b : \textcolor{orange}{\bool}}
% \and
%   \inferrule[Coin]{~}{\orange{\Gamma} \vdashni \orange \coin : \orange \bool}
\and
  \inferrule[Primitive]{\textcolor{orange}\Gamma \vdashni \textcolor{orange}M : \textcolor{orange}{\tau_1} \\ \textcolor{orange}f \in \orange{\mathcal{O}_{NI}(\tau_1, \tau_2)}}{\textcolor{orange}\Gamma \vdashni \textcolor{orange}{f(M)} : \textcolor{orange}{\tau_2}}
\\
  \inferrule[Var]{~}{\textcolor{orange}{\Gamma}, \textcolor{orange}{x} : \textcolor{orange}{\tau} \vdashni \textcolor{orange}{x} : \textcolor{orange}{\tau}}
\and
  \inferrule[Let]{\textcolor{orange}{\Gamma} \vdashni \textcolor{orange}{t} : \textcolor{orange}{\tau_1} \\ \textcolor{orange}{\Gamma} , \textcolor{orange}{x} : \textcolor{orange}{\tau_1} \vdashni \textcolor{orange}{u} : \textcolor{orange}{\tau}}{\textcolor{orange}{\Gamma} \vdashni \textcolor{orange}{\letin x t u} : \textcolor{orange}{\tau}}
\\
  \inferrule[$\times$ Intro]{\textcolor{orange}{\Gamma} \vdashni \textcolor{orange}{M} : \textcolor{orange}{\tau_1} \\ \textcolor{orange}{\Gamma} \vdashni \textcolor{orange}{N} : \textcolor{orange}{\tau_2}}{\textcolor{orange}{\Gamma} \vdashni \textcolor{orange}{(M, N)} : \textcolor{orange}{\tau_1 \times \tau_2}}
\and
  \inferrule[$\times$ Elim$_i$]{\textcolor{orange}{\Gamma} \vdashni \textcolor{orange}{M} : \textcolor{orange}{\tau_1 \times \tau_2}}{ \textcolor{orange}{\Gamma} \vdashni \textcolor{orange}{\pi_i M} : \textcolor{orange}{\tau_i}}
\\
  \inferrule[$\oplus$ Intro$_i$]{ \textcolor{orange}{\Gamma} \vdashni \textcolor{orange}{M} : \textcolor{orange}{\tau_i}}{\textcolor{orange}{\Gamma} \vdashni \textcolor{orange}{\mathsf{in}_i \, M} : \textcolor{orange}{\tau_1 + \tau_2}}
\and
  \inferrule[$\oplus$ Elim]{ \textcolor{orange}{\Gamma} \vdashni \textcolor{orange}{M} : \textcolor{orange}{\tau_1 + \tau_2} \\ \textcolor{orange}{\Gamma, x : \tau_1} \vdashni \textcolor{orange}{N_1} : \textcolor{orange}{\tau} \\ \textcolor{orange}{\Gamma, x : \tau_2} \vdashni \textcolor{orange}{N_2} : \textcolor{orange}{\tau} }{\textcolor{orange}{\Gamma} \vdashni \textcolor{orange}{\mathsf{case}\, M \, \mathsf{of}\, (|\, \mathsf{in_1}\, x \Rightarrow N_1 \, |\, \mathsf{in_2}\, y \Rightarrow N_2)} : \textcolor{orange}{\tau}}
\\
\\
  \inferrule[Var]{ }{\textcolor{purple}{\Gamma}, \textcolor{purple}{x} : \purple {\underline{\tau}} \vdashi \textcolor{purple}{x} : \purple {\underline{\tau}}}
\and
  \inferrule[Operations] {\purple{\mathsf{op}} \in \purple{\mathcal{O}_{I}(\underline{\tau_1}, \underline{\tau_2})}} {\purple \Gamma \vdashi \purple{\mathsf{op}} : \purple{\underline{\tau_1}} \lto \purple{\underline{\tau_2}}}
\\
  \inferrule[Abstraction]{\textcolor{purple}{\Gamma}, \textcolor{purple}{x} : \purple {\underline{\tau_1}} \vdashi \textcolor{purple}{t} : \purple {\underline{\tau_2}}}{\textcolor{purple}{\Gamma} \vdashi \textcolor{purple}{\lamb x t} : \textcolor{purple}{\underline{\tau_1} \lto \underline{\tau_2}}}
\and
  \inferrule[Application]{\textcolor{purple}{\Gamma_1} \vdashi \textcolor{purple}{t} : \textcolor{purple}{\underline{\tau_1} \lto \underline{\tau_2}} \\ \textcolor{purple}{\Gamma_2} \vdashi \textcolor{purple}{u} : \textcolor{purple}{\underline{\tau_1}}}{\textcolor{purple}{\Gamma_1}, \textcolor{purple}{\Gamma_2} \vdashi \textcolor{purple}{\app t u} : \textcolor{purple}{\underline{\tau_2}}}
\\
  \inferrule[$\otimes$ Intro]{\textcolor{purple}{\Gamma_1} \vdashi \textcolor{purple}{t} : \textcolor{purple}{\underline{\tau_1}} \\ \textcolor{purple}{\Gamma_2} \vdashi \textcolor{purple}{u} : \textcolor{purple}{\underline{\tau_2}}}{\textcolor{purple}{\Gamma_1}, \textcolor{purple}{\Gamma_2} \vdashi \textcolor{purple}{t \otimes u} : \textcolor{purple}{\underline{\tau_1} \otimes \underline{\tau_2}}}
\and
  \inferrule[$\otimes$ Elim]{\textcolor{purple}{\Gamma_1}  \vdashi \textcolor{purple}{t} : \textcolor{purple}{\underline{\tau_1} \otimes \underline{\tau_2}} \\ \textcolor{purple}{\Gamma_2}, \textcolor{purple}{x} : \textcolor{purple}{\underline{\tau_1}}, \textcolor{purple}{y} : \textcolor{purple}{\underline{\tau_2}} \vdashi \textcolor{purple}{u} : \textcolor{purple}{\underline{\tau}}}{\textcolor{purple}{\Gamma_1}, \textcolor{purple}{ \Gamma_2} \vdashi \textcolor{purple}{\letin {x\otimes y} t u} : \textcolor{purple}{\underline{\tau}}}
\\
  \inferrule[$\oplus$ Intro$_i$]{ \textcolor{purple}{\Gamma} \vdashi \textcolor{purple}{t} : \textcolor{purple}{\underline{\tau_i}}}{\textcolor{purple}{\Gamma} \vdashi \textcolor{purple}{\mathsf{in}_i \, t} : \textcolor{purple}{\underline{\tau_1} \oplus \underline{\tau_2}}}
\and
  \inferrule[$\oplus$ Elim]{ \textcolor{purple}{\Gamma_1} \vdashi \textcolor{purple}{t} : \textcolor{purple}{\underline{\tau_1} \oplus \underline{\tau_2}} \\ \textcolor{purple}{\Gamma_2, x : \underline{\tau_1}} \vdashi \textcolor{purple}{u_1} : \textcolor{purple}{\underline{\tau}} \\ \textcolor{purple}{\Gamma_2, y : \underline{\tau_2}} \vdashi \textcolor{purple}{u_2} : \textcolor{purple}{\underline{\tau}} }{\textcolor{purple}{\Gamma_1, \Gamma_2} \vdashi \textcolor{purple}{\mathsf{case}\, t \, \mathsf{of}\, (|\, \mathsf{in_1}\, x \Rightarrow u_1 \, |\, \mathsf{in_2}\, y \Rightarrow u_2)} : \textcolor{purple}{\underline{\tau}}}
\\
\\
  \inferrule[Sample]{\textcolor{orange}{x_1} : \textcolor{orange}{\tau_1}, \dots
  , \textcolor{orange}{x_n} : \textcolor{orange}{\tau_n} \vdashni
\textcolor{orange}{M} : \textcolor{orange}{\tau} \quad
\textcolor{purple}{\Gamma_i} \vdashi \textcolor{purple}{t_i} :
\M(\textcolor{orange}{\tau_i}) \\ 0 < i \leq n}{\textcolor{purple}{\Gamma_1},
\dots, \textcolor{purple}{\Gamma_n} \vdashi \sample {\textcolor{purple}{t_i}}
{\textcolor{orange}{x_i}} {\textcolor{orange}{M}} : \M (\textcolor{orange}{\tau})}
\end{mathpar}
\hrule
\caption{Typing Rules: \twolang}
\label{fig:EMtyping}

\end{figure}

\begin{figure}
  \begin{align*}
    \bsem{\bool} &= \bool
                 &\sem{\M \tau} &= (\D\sem{\tau}, \mu_{\sem{\tau}})
    \\
    \bsem{\tau \times \tau} &= \bsem\tau \times \bsem \tau
                            &\sem{\underline\tau \otimes \underline\tau} &= \sem{\underline\tau} \times \sem{\underline\tau}
    \\
    \bsem{\tau + \tau} &= \bsem\tau + \bsem \tau
                       &\sem{\underline\tau \oplus \underline\tau} &= \sem{\underline\tau} + \sem{\underline\tau}
    \\
                       && \sem{\underline\tau \lto \underline\tau} &= \sem{\underline\tau} \to \sem{\underline\tau}
    \\
    \bsem{x_1 : \tau_1, \dots, x_n : \tau_n} &= \bsem{\tau_1} \times \cdots \times \bsem{\tau_n}
                                             &\sem{x_1 : \underline{\tau}_1, \dots, x_n : \underline{\tau}_n} &= \sem{\underline{\tau}_1} \times \cdots \times \sem{\underline{\tau}_n}
    \\
    \bsem{\Gamma \vdash M : \tau} &\in \cat{Set}_\D(\bsem{\Gamma}, \bsem{\tau})
                                  &\sem{\Gamma \vdash t : \underline{\tau}} &\in \widetilde{\cat{Set}^\D}(\sem{\Gamma}, \sem{\underline{\tau}})
  \end{align*}
  \hrule
  \begin{align*}
    \sem{x}(\gamma, v_x) &= v_x \\
    \sem{t \otimes u}(\gamma_1, \gamma_2) &= \sem{t}(\gamma_1) \times \sem{u}(\gamma_2)\\
    \sem{\letin{x \otimes y}{t}{u}}(\gamma_1, \gamma_2) &= \sem{u}(\gamma_2, \sem{t}(\gamma_1))\\
    \sem{\lamb x t}(\gamma)(x) &= \sem{t}(\gamma)(x) \\ 
    \sem{\app t u}(\gamma_1, \gamma_2) &= \sem{t}(\gamma_1, \sem{u}(\gamma_2)\\ 
    \sem{\mathsf{in}_i t}(\gamma) &= in_i(\sem{t}(\gamma))\\
    \sem{\caseof{t}{u_1}{u_2}}(\gamma_1, \gamma_2) &= \begin{cases}
      \sem{u_1}(\gamma_2, v), & \sem{t}(\gamma_1) = in_1(v) \\
      \sem{u_2}(\gamma_2, v), & \sem{t}(\gamma_1) = in_2(v)
    \end{cases}\\%[\sem{u_1}(\gamma_2), \sem{u_2}(\gamma_2)](\sem{t}(\gamma_1))\\
      \sem{\sample{t_i}{x_i}{N}} &= \mu \circ \D\bsem{N} \circ (\sem{t_1} \times \cdots \times \sem{t_n})
    \end{align*}

  \hrule

\caption{Concrete Semantics: \twolang}
\label{fig:EMsem}

\end{figure}

Now that we have defined the probabilistic semantics of the \twolang, we can
prove its soundness theorem: just like in \onelang, the type constructor
$\otimes$ enforces probabilistic independence.

\begin{Th}
\label{th:soundem}
  If $\cdot \vdashi t : \M\tau_1 \otimes \M\tau_2$ then $\sem{t}$ is an independent distribution.
\end{Th}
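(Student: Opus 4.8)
The plan is to observe that, in contrast with the logical-relations argument behind \Cref{th:soundkl}, this theorem requires no induction on $t$ and no auxiliary relation: independence is forced by the type-directed semantics alone. The key structural fact is that the \sep-layer is interpreted in $\widetilde{\cat{Set}^\D}$, where $\otimes$ is the categorical product and $\M\tau_i$ denotes the free algebra $(\D\sem{\tau_i}, \mu_{\sem{\tau_i}})$. Consequently the underlying set of $\sem{\M\tau_1 \otimes \M\tau_2}$ is $\D\sem{\tau_1} \times \D\sem{\tau_2}$, a genuine product of two spaces of distributions rather than a space of joint distributions.

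First I would unfold the interpretation of the empty \sep-context as the terminal object $\mathbf 1$ and recall that morphisms of $\widetilde{\cat{Set}^\D}$ are plain functions on the underlying sets. A closed derivation $\cdot \vdashi t : \M\tau_1 \otimes \M\tau_2$ therefore denotes a map $\mathbf 1 \to \D\sem{\tau_1} \times \D\sem{\tau_2}$, which is exactly the data of a single pair $(\mu_1, \mu_2)$ with $\mu_i \in \D\sem{\tau_i}$. I would stress that this conclusion depends only on the \emph{type} of $t$: no matter whether $t$ is built from \textsc{Sample}, case analysis, application, or pairing, its denotation at the empty context is already a pair of separate distributions.

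Next I would make precise the sense in which such a pair ``is an independent distribution.'' Since $\D$ is a commutative, and hence monoidal, monad, the canonical map $\D\sem{\tau_1} \times \D\sem{\tau_2} \to \D(\sem{\tau_1} \times \sem{\tau_2})$ sends $(\mu_1, \mu_2)$ to the product measure defined by $(x,y) \mapsto \mu_1(x)\cdot\mu_2(y)$. A one-line computation of marginals shows that this joint distribution has first marginal $\mu_1$ and second marginal $\mu_2$, so it factors as the product of its marginals; by the definition of probabilistic independence given in \S\ref{sec:prelim}, it is independent.

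I do not anticipate a genuine obstacle in the argument itself. The difficulty of the development has been front-loaded into the \emph{design} of the stratified semantics---in particular, into checking that every \twolang construct, including \textsc{Sample} and the two sum types, is interpretable in $\widetilde{\cat{Set}^\D}$ and that $\M$ lands in the free algebras. The only step that deserves explicit care is the coercion above: one must confirm that the monoidal structure of $\D$ really does produce the product measure, which is what licenses reading a closed $\otimes$-typed denotation as a single independent joint distribution.
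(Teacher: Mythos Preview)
Your proposal is correct and matches the paper's own argument: the paper simply observes that the semantics of $\cdot \vdashi t : \M\tau_1 \otimes \M\tau_2$ is a function $1 \to \D\sem{\tau_1} \times \D\sem{\tau_2}$, which is (isomorphic to) an independent distribution. Your write-up is more detailed in spelling out the coercion via the monoidal structure of $\D$, but the underlying idea is identical.
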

\begin{proof}
  The semantics of $\cdot \vdashi t : \M\tau_1 \otimes \M\tau_2$ is a
  set-theoretic function $\sem{t} : 1 \to \D \sem{\tau_1} \times \D
  \sem{\tau_2}$, which is isomorphic to an independent distribution.
\end{proof}

% \paragraph{Categorical motivation}
% For readers who are more familiar with category theory, the design and semantics
% of \twolang can be justified by the relation between the Kleisli category and
% the Eilenberg-Moore category of the monad $M$.

% In more detail, the first language is interpreted in the Kleisli category for
% $M$ which, from an algebraic perspective, can be seen as the category of free
% $M$-algebras.  Therefore, in order to interpret our language using $\cat{C}_M$
% we would need $M \tau_1 \times M \tau_2$ to be a free algebra which,
% unfortunately, is not the case in general. Fortunately, under a few assumptions
% which are satisfied by $\cat{Set}$ and $M$, one can show that the category of
% $M$-algebras is symmetric monoidal closed. 

% Furthermore, there is an inclusion functor $\iota : \cat{C}_T \to \cat{C}^T$
% which is full and faithful. This suggests that it is possible to define a
% two-level language: one for programming free $M$-algebras, another one for
% programming general $M$-algebras and the $\iota$ functor connects them.

% \jh{The above discussion is now out of date if we use plain maps. Maybe just drop it?}

% The semantics of the free language is identical to the one present in
% Figure~\ref{fig:monsem} and the semantics of the EM language is depicted in
% Figure~\ref{fig:EMsem}.

\subsection{Revisiting Sums and Let-Binding}

% \jh{Removing, this paragraph is too vague.}
% If we were only interested in programming with the two distinct product types
% then, semantic issues aside, a one-level language presentation might be enough.
% However, when extending the language with more expressive types, their
% interaction with dependence and independence can get subtle. An advantage of
% \twolang is that it does not suffer from these drawbacks since as long as we
% have the lax monoidal functor, we can simply extend each language separately
% without worrying too much about unexpected interactions.
%For instance, imagine that we want to extend the one-level language with an arrow type, should it be linear? This case is actually similar to the $\mathsf{let}$ construct, where application being additive breaks the soundness property.

%\[
%  x : \nat \vdash \app {\lamb y {x \otimes y}} x : \nat \otimes \nat
%\]

%This program is obviously problematic, as it allows for defining a dependent distribution over $\nat \times \nat$. What this suggests is that there should be two function types, one for each product type. This is precisely the solution adopted by type systems based on bunched logic. However, as we mentioned above, these models are somewhat rare in practice.

Let us revisit the problematic if-then-else program at the beginning of the
section. The type system of \twolang makes it impossible to produce an
independent pair by pattern matching on values:
\[
\mathsf{dist} : \M(1 + 1) \nvdash_{\sep}\, \ifthenelse{\mathsf{dist}}{(\kwtt \otimes \kwtt)}{(\kwff \otimes \kwff)} : \M \bool \otimes \M \bool
\]
where if-statements are simply elimination of sum types over booleans. However,
we can write a well-typed version of this program if we use the sharing product:
\[
    \mathsf{dist} : \M(1 + 1) \vdashi \sample{\mathsf{dist}}{x}{(\ifthenelse{x}{(\kwtt, \kwtt)}{(\kwff, \kwff)})} : \M(\bool \times \bool)
\]
The design of \twolang also removes the limitations on let-bindings we discussed
before, since the sharing layer has an \emph{additive} let-binding. In
particular, it is also possible to express the problematic let-binding program
we saw before:
  \begin{align*}
    \cdot \vdashi &\sample{\coin, \coin}{x_1, x_2}{ }\\
                 &\qquad\letin{y}{f(x_1, x_2)}{ }
                 \letin{z}{g(x_1, x_2)}{ }
                 M : \M(\tau)
  \end{align*}
We can also project both components out of pairs in the sharing layer:
  \[
    \cdot \vdashni \letin{x}{\pi_1\, M_1}{%
      \letin{y}{\pi_2\, M_2}{%
        M : \tau
    }}
  \]

\subsection{Embedding from \onelang to \twolang}

Now that we have seen both \onelang and \twolang, a natural question is how
these languages are related. We first show how to embed the fragment of \onelang
without arrow types into \twolang. The idea is that the semantics of \onelang is
given by a Kleisli category, so there is a translation into the \shr-layer of
\twolang. The types are translated as follows:
\begin{mathpar}
    \mathcal{T}(\bool) \triangleq \bool
    \and
    \mathcal{T}(\tau_1 \times \tau_2) = \mathcal{T}(\tau_1 \otimes \tau_2) \triangleq \mathcal{T}(\tau_1) \times \mathcal{T}(\tau_2)
\end{mathpar}
At the term-level, the translation is the identity function.

\begin{Th}
  If $\Gamma \vdash M : \tau$ in \onelang then $\mathcal{T}(\Gamma) \vdashni
  \mathcal{T}(M) : \mathcal{T}(\tau)$ in \twolang.
\end{Th}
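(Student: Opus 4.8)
The plan is to induct on the derivation of $\Gamma \vdash M : \tau$ in \onelang, restricted to the arrow-free fragment on which $\mathcal{T}$ is actually defined (so no instances of \textsc{Abstraction} or \textsc{Application}). The conceptual crux is a mismatch of context disciplines: the $\otimes$ rules of \onelang are \emph{multiplicative}, splitting the context between their premises, whereas every rule of the \shr layer of \twolang is \emph{additive}, threading one context through all premises. Since $\mathcal{T}$ collapses both \onelang products onto the single \shr product, using $\mathcal{T}(\tau_1 \otimes \tau_2) = \mathcal{T}(\tau_1 \times \tau_2) = \mathcal{T}(\tau_1) \times \mathcal{T}(\tau_2)$, the translated introduction and elimination of $\otimes$ must be rebuilt from the additive $\times$-rules and the \shr \textsc{Let}, and reconciling the two context disciplines is where the work lies.

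Before the induction I would establish the key auxiliary fact: \emph{weakening is admissible in the \shr layer}, i.e. if $\orange{\Gamma} \vdashni \orange{M} : \orange{\tau}$ and $\orange{\Gamma}, \orange{\Delta}$ is well-formed, then $\orange{\Gamma}, \orange{\Delta} \vdashni \orange{M} : \orange{\tau}$. This is a routine induction on the \shr derivation: the leaf rules \textsc{Var}, \textsc{Const}, and the rule for $\orange{\coin}$ already tolerate an arbitrary ambient context, and every other \shr rule carries the context along additively, so the extra hypotheses are simply propagated. Note that $\mathcal{T}$ touches only types and leaves variable names fixed, so $\mathcal{T}(\Gamma_1, \Gamma_2) = \mathcal{T}(\Gamma_1), \mathcal{T}(\Gamma_2)$ remains well-formed whenever $\Gamma_1, \Gamma_2$ was. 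With weakening available, the structural cases of the main induction are immediate: \textsc{Const}, \textsc{Coin}, and \textsc{Var} match the corresponding \shr leaves, while \textsc{$\times$ Intro} and \textsc{$\times$ Elim}$_i$ are already additive in \onelang and map directly onto the identically named \shr rules, with the type equation above aligning the conclusions.

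The two multiplicative cases require care. For \textsc{$\otimes$ Intro}, the induction hypotheses give $\mathcal{T}(\Gamma_1) \vdashni \mathcal{T}(t_1) : \mathcal{T}(\tau_1)$ and $\mathcal{T}(\Gamma_2) \vdashni \mathcal{T}(t_2) : \mathcal{T}(\tau_2)$ over \emph{disjoint} translated contexts. Weakening each derivation up to the combined context $\mathcal{T}(\Gamma_1), \mathcal{T}(\Gamma_2)$ places both premises over a common context, so the additive \shr \textsc{$\times$ Intro} rule yields $\mathcal{T}(\Gamma_1), \mathcal{T}(\Gamma_2) \vdashni (\mathcal{T}(t_1), \mathcal{T}(t_2)) : \mathcal{T}(\tau_1) \times \mathcal{T}(\tau_2)$, which is exactly the required translated judgment. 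At this point I would make precise the claim that the term translation is "the identity": the separated pair is literally sent to the \shr sharing pair, and the separated eliminator is sent to a combination of \shr \textsc{Let} and projections, because the \shr layer has no let-pair. Concretely, $\letin{x \otimes y}{t}{u}$ translates to $\letin{z}{\mathcal{T}(t)}{\letin{x}{\pi_1 z}{\letin{y}{\pi_2 z}{\mathcal{T}(u)}}}$.

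For \textsc{$\otimes$ Elim}, the hypotheses give $\mathcal{T}(\Gamma_1) \vdashni \mathcal{T}(t) : \mathcal{T}(\tau_1) \times \mathcal{T}(\tau_2)$ and $\mathcal{T}(\Gamma_2), x : \mathcal{T}(\tau_1), y : \mathcal{T}(\tau_2) \vdashni \mathcal{T}(u) : \mathcal{T}(\tau)$. Weakening the first to $\mathcal{T}(\Gamma_1), \mathcal{T}(\Gamma_2)$ and the second to the same context extended with $z, x, y$ lets me type the translated term above by threading $\mathcal{T}(t)$ through the additive \shr \textsc{Let} and binding $x$ and $y$ to $\pi_1 z$ and $\pi_2 z$ via \textsc{$\times$ Elim}$_i$, producing $\mathcal{T}(\Gamma_1), \mathcal{T}(\Gamma_2) = \mathcal{T}(\Gamma_1, \Gamma_2) \vdashni \mathcal{T}(\letin{x \otimes y}{t}{u}) : \mathcal{T}(\tau)$. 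I expect this case to be the main obstacle, not because it is deep, but because it is where the informal statement that the translation is the identity must be pinned down, and where the additive \shr \textsc{Let} is genuinely needed to simulate \onelang's multiplicative let-pair; the key enabling ingredient throughout remains the \shr weakening lemma.
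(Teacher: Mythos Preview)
The paper does not actually give a proof of this theorem; it is stated without argument, presumably because the authors view it as routine once the type translation is fixed and the term translation is declared to be ``the identity function.'' Your induction on the \onelang\ derivation, together with the admissibility of weakening in the \shr layer, is exactly the right way to make this precise, and all of your cases are correct.

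In fact, you go further than the paper in one respect that is worth noting. The paper's assertion that the term translation is the identity is literally false for $\otimes$-elimination: the \shr layer has no let-pair construct, so $\letin{x\otimes y}{t}{u}$ cannot be carried over verbatim. Your proposed encoding via the additive \textsc{Let} and projections,
\[
  \letin{z}{\mathcal{T}(t)}{\letin{x}{\pi_1 z}{\letin{y}{\pi_2 z}{\mathcal{T}(u)}}},
\]
is the natural fix, and your typability argument for it (weaken both inductive hypotheses to the combined context, then chain three instances of \textsc{Let} with two \textsc{$\times$ Elim}s) is sound precisely because the \shr rules are additive and allow reusing $z$. Likewise, $t_1\otimes t_2$ must become $(\mathcal{T}(t_1),\mathcal{T}(t_2))$. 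So your proof both follows the intended approach and repairs a small imprecision in the paper's description of $\mathcal{T}$.
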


Furthermore, this translation preserves equations between programs and is fully abstract.

\begin{Th}
  Let $\Gamma \vdash t_1 : \tau$ and $\Gamma \vdash t_2 : \tau$ in \onelang then
  $\sem{t_1} = \sem{t_2}$ if, and only if, $\sem{\mathcal{T}(t_1)} =
  \sem{\mathcal{T}(t_2)}$.
\end{Th}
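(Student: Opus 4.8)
The plan is to prove the full-abstraction statement by leveraging the fact that both the semantics of \onelang and the \shr-layer semantics of \twolang are given by the \emph{same} Kleisli category $\cat{Set}_\D$. First I would establish a semantic preservation lemma: for every well-typed \onelang term $\Gamma \vdash t : \tau$, we have $\sem{t} = \sem{\mathcal{T}(t)}$ as morphisms in $\cat{Set}_\D$, after identifying $\sem{\tau}$ with $\bsem{\mathcal{T}(\tau)}$ via the evident isomorphism. This identification is immediate at the level of types: the type translation $\mathcal{T}$ sends both $\times$ and $\otimes$ to the \shr-product, and $\bsem{\tau_1 \times \tau_2} = \bsem{\tau_1} \times \bsem{\tau_2}$ matches $\sem{\tau_1 \times \tau_2} = \sem{\tau_1} \times \sem{\tau_2} = \sem{\tau_1 \otimes \tau_2}$ from \Cref{fig:monsem}. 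Since the arrow-free fragment is all that is in scope here, there is no mismatch coming from the linear arrow.

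The heart of the argument is this preservation lemma, proved by induction on the derivation of $\Gamma \vdash t : \tau$ in \onelang. For each term constructor I would compare the \onelang clause in \Cref{fig:monsem} against the corresponding \shr clause: variables, constants, and $\coin$ are interpreted identically by the monad unit; the $\times$-pair and $\otimes$-pair of \onelang both unfold to ``sample each component and pair,'' which is exactly the \shr-pairing $\sem{(M,N)}$; projection matches $\times$-elimination; and the $\otimes$-let of \onelang coincides with a \shr-let followed by a projection pattern. The key point making all these clauses agree is that the term-level translation is the identity, so the only thing to check is that the two \emph{semantic} clauses compute the same Kleisli morphism, which they do by definition up to the type isomorphism above. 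The full-abstraction statement then follows directly: since $\sem{t_i} = \sem{\mathcal{T}(t_i)}$ for $i \in \{1,2\}$, the equation $\sem{t_1} = \sem{t_2}$ holds if and only if $\sem{\mathcal{T}(t_1)} = \sem{\mathcal{T}(t_2)}$.

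I expect the main subtlety to lie not in any deep calculation but in making the type-level identification rigorous and coherent. Because $\mathcal{T}$ collapses the two distinct \onelang products $\times$ and $\otimes$ onto the single \shr-product, I must check that the isomorphism $\sem{\tau} \cong \bsem{\mathcal{T}(\tau)}$ is natural enough that it commutes with all the semantic clauses simultaneously; in particular, the two semantically-equal-but-syntactically-distinct product introduction rules of \onelang must both land on the same \shr-morphism under this identification. This is where the induction could become delicate, since I am effectively proving that the \onelang semantics factors through the translation even though \onelang distinguishes information (independence) that the \shr-layer does not track. The saving grace is that \onelang's own semantics already identifies $\otimes$ and $\times$ at the level of sets and Kleisli maps---the distinction between them lives only in the type system and the logical relation $\Rel_\tau$, not in the denotations---so the semantic clauses genuinely coincide and the isomorphism can be taken to be the identity on underlying sets.
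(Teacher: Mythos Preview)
Your proposal is correct and is essentially an unfolded version of the paper's one-line argument. The paper proves the theorem by observing that ``the translation is a faithful functor''; your semantic preservation lemma $\sem{t} = \sem{\mathcal{T}(t)}$ is precisely what makes that functor faithful (indeed, it makes it essentially the identity on $\cat{Set}_\D$), and your induction on typing derivations is the content hidden behind the paper's categorical phrasing. One minor point: you correctly note that the $\otimes$-let must be expanded into a \shr-let with projections, which shows the paper's claim that the term-level translation is ``the identity function'' is slightly informal---but you handle this appropriately.
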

\begin{proof}
  The proof follows from the fact that the translation is a faithful functor.
\end{proof}

It is also possible to translate the multiplicative ($\otimes$, $\lto$) fragment
of \onelang into the \sep-layer of \twolang, by translating the types as
follows:
\begin{mathpar}
  \mathcal{T}'(\bool) \triangleq \M\bool
  \and
  \mathcal{T}'(\tau_1 \otimes \tau_2) \triangleq \mathcal{T}'(\tau_1) \otimes \mathcal{T}'(\tau_2)
  \and
  \mathcal{T}'(\tau_1 \lto \tau_2) \triangleq \mathcal{T}'(\tau_1) \lto \mathcal{T}'(\tau_2)
\end{mathpar}
Once again, the term translation is the identity function.

\begin{Th}
If $\Gamma \vdash t : \tau$ in \onelang then $\mathcal{T}'(\Gamma) \vdashi
\mathcal{T}'(t) : \mathcal{T}'(\tau)$ in \twolang.
\end{Th}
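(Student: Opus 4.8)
The plan is to proceed by structural induction on the \onelang typing derivation $\Gamma \vdash t : \tau$, restricted to the multiplicative fragment (i.e., derivations that use none of the additive rules \textsc{$\times$ Intro} and \textsc{$\times$ Elim}$_i$, on which $\mathcal{T}'$ is not defined). Two purely syntactic observations drive the induction. First, by construction $\mathcal{T}'$ is a homomorphism for the multiplicative type constructors: $\mathcal{T}'(\tau_1 \otimes \tau_2) = \mathcal{T}'(\tau_1) \otimes \mathcal{T}'(\tau_2)$ and $\mathcal{T}'(\tau_1 \lto \tau_2) = \mathcal{T}'(\tau_1) \lto \mathcal{T}'(\tau_2)$. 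Second, since $\mathcal{T}'$ acts pointwise on context entries, it commutes with context concatenation: $\mathcal{T}'(\Gamma_1, \Gamma_2) = \mathcal{T}'(\Gamma_1), \mathcal{T}'(\Gamma_2)$. Both facts are exactly what is needed to align the translated judgments with the \sep-layer rules, which split contexts multiplicatively in the same way as \onelang.

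Given these facts, the inductive cases reduce to a rule-by-rule match, since every multiplicative rule of \onelang has an identically-shaped counterpart in the \sep-layer of \twolang. For \textsc{Var}, the affine \onelang rule maps to the affine \sep \textsc{Var} rule, which likewise tolerates unused variables. For \textsc{$\otimes$ Intro}, the induction hypotheses give $\mathcal{T}'(\Gamma_1) \vdashi t_1 : \mathcal{T}'(\tau_1)$ and $\mathcal{T}'(\Gamma_2) \vdashi t_2 : \mathcal{T}'(\tau_2)$; applying the \sep \textsc{$\otimes$ Intro} rule and rewriting the conclusion with the two observations yields $\mathcal{T}'(\Gamma_1, \Gamma_2) \vdashi t_1 \otimes t_2 : \mathcal{T}'(\tau_1 \otimes \tau_2)$. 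The cases \textsc{$\otimes$ Elim}, \textsc{Abstraction}, and \textsc{Application} follow in precisely the same manner, each time invoking the induction hypotheses on the premises and reassembling the conclusion using that $\mathcal{T}'$ respects the relevant constructor and the context splitting.

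The one step that is not a mechanical match of rule shapes, and which I expect to be the main obstacle, is the base cases \textsc{Const} and \textsc{Coin}, where the slogan that the term translation is ``the identity'' must be read with care. A \onelang leaf such as $\cdot \vdash \coin : \bool$ must land at type $\mathcal{T}'(\bool) = \M\bool$, yet the boolean constants and $\coin$ inhabit only the \shr-layer of \twolang and have no direct \sep-layer form of boxed type. The remedy is to lift them through the \textsc{Sample} rule instantiated with empty lists ($n = 0$): from the \shr-layer judgment $\cdot \vdashni \coin : \bool$ we obtain $\cdot \vdashi \sample{}{}{\coin} : \M\bool$, and symmetrically for each constant $b$. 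Hence the translation is genuinely the identity on the term formers shared by the two languages (variables, tensor pairs, let-bindings, abstraction, application) but silently boxes base-type leaves via \textsc{Sample}. The only thing to verify here is that this boxing delivers exactly $\M\bool = \mathcal{T}'(\bool)$ and that no other rule introduces an unboxed base-type leaf; once this is confirmed, the induction goes through.
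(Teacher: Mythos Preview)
Your proposal is correct and follows exactly the paper's approach: the paper's proof is the single line ``by induction on the typing derivation $\Gamma \vdash t : \tau$,'' and your rule-by-rule analysis is precisely that induction spelled out. Your treatment of the \textsc{Const} and \textsc{Coin} leaves via \textsc{Sample} with $n=0$ is a genuine clarification of what the paper leaves implicit when it says the term translation is ``the identity function''; the paper does not address this point, but your reading (using the empty-list instance of \textsc{Sample}, which the paper explicitly allows) is the intended one.
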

\begin{proof}
  The proof follows by induction on the typing derivation $\Gamma \vdash t : \tau$.
\end{proof}

This translation is functorial and faithful, and therefore is sound and fully
abstract with respect with the denotational semantics of \onelang and \twolang.

\begin{Rem}
It is not possible to translate the whole \onelang into \twolang. Since only one
of the languages of \twolang has arrow types and there is no way of moving from
\sep into \shr, the translation would need to map \onelang programs into \sep
programs, which can only write probabilistically independent programs, making it
impossible to translate the $\times$ type constructor. By adding an additive
function type to the \shr-layer of \twolang, it would be possible to extend the
first translation so that it encompasses the whole language; however, many of
the concrete models that we will consider in the next section do not support an
additive function type in the \shr-layer.
\end{Rem}

\section{Categorical Semantics and Concrete Models}

In this section, we present the general, categorical semantics of \twolang, by
abstracting the probabilistic semantics we saw in the previous section. Then, we
present a variety of concrete models for \twolang, based on existing semantics
for effectful languages. Our soundness theorem ensures natural notions of
separation across these models.

\subsection{Categorical Semantics of \twolang}
\label{sec:catmodel}

Suppose we have two effectful languages, $\mathcal{L}_1$ and $\mathcal{L}_2$.
The first one has a product type $\times$ which allows for the sharing of
resources, while the second one has the disjoint product type $\otimes$.
Furthermore, we assume that $\mathcal{L}_2$ has a unary type constructor $\M$
linking both languages. The intuition behind this decision is that an element of
type $\M \tau$ is a computation which might share resources. From a language
design perspective, the constructor $\M$ serves to encapsulate a possibly
dependent computation in an independent environment.

The first question is to understand is how the connectives $\times$ and
$\otimes$ should be interpreted categorically. For $\times$, we need a
comonoidal structure to duplicate and erase computation. This kind of structure
is captured by \emph{CD categories}, which are monoidal categories where every
object $A$ comes equipped with a commutative comonoid structure $A \to A \otimes
A$ and $A \to I$ making certain diagrams
commute~\citep{DBLP:journals/mscs/ChoJ19}. For $\otimes$, we want to restrict
copying---the separating layer of our language has an affine type system---so
$\otimes$ should be a monoidal product with discard maps.

Finally, to model the type constructor $\M$, the typical categorical idea is
that it should be some kind of functor from $\mathcal{L}_1$ to $\mathcal{L}_2$.
Let us look at some of the intuitions provided by the type system. The type
$\M(\tau_1 \times \tau_2)$ is for computations that may share resources and
output both $\tau_1$ and $\tau_2$.  Meanwhile, the type $\M\tau_1 \otimes
\M\tau_2$ is for computations that output $\tau_1$ and $\tau_2$ while using
separate resources. This reading suggest that there should not be maps from
$\M(\tau_1 \times \tau_2)$ to $\M \tau_1 \otimes \M \tau_2$, since there is no
way of separating resources once they have been shared, but there should be maps
from $\M \tau_1 \otimes \M \tau_2$ to $\M(\tau_1 \times \tau_2)$, since
separation is a specific example of sharing.

Categorically, the existence of these maps is captured by applicative functors,
also known as lax monoidal functors, which are functors $F : (\cat{C},
\otimes_C, I_C) \to (\cat{D}, \otimes_D, I_D)$ between monoidal categories,
equipped with morphisms $\mu_{A, B} : F(A) \otimes_D F(B) \to F(A \otimes_C B)$
and $\epsilon : I_D \to F I_C$ making certain diagrams
commute~\citep{borceux1994}.

Thus, we are led to our categorical model for \twolang.

\begin{Def}
  A \twolang model is a triple $(\cat{C}, \cat{M}, \M)$ where $\cat{C}$ is a
  symmetric monoidal closed category with coproducts and with morphisms $del_A :
  A \to I_C$, natural in $A$; $\cat M$ is a distributive CD category with
  coproducts, i.e., $A \otimes_M (B +_M C) \cong (A \otimes_M B) +_M (A
  \otimes_M C)$; and $\M : \cat{M} \to \cat{C}$ is lax monoidal.
\end{Def}

While we need to assume distributivity in $\cat{M}$, distributivity in $\cat{C}$
holds automatically.

\begin{Lemma}
  In every symmetric monoidal closed category with coproducts, the following
  isomorphism holds: $A \otimes (B + C) \cong (A \otimes B) + (A \otimes C)$. 
\end{Lemma}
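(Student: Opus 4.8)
The plan is to recognize this isomorphism as an instance of the general principle that the tensor product preserves coproducts in a closed monoidal category. Let $\cat{C}$ denote the symmetric monoidal closed category in question. First I would observe that, by closedness, for every object $A$ the functor $A \otimes (-) : \cat{C} \to \cat{C}$ admits a right adjoint, namely the internal hom $A \lto (-)$; this is precisely the defining natural isomorphism $\cat{C}(A \otimes X, Y) \cong \cat{C}(X, A \lto Y)$. Symmetry of the monoidal structure means it is immaterial whether we tensor on the left or the right, so $A \otimes (-)$ is genuinely a left adjoint.

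From here the result is immediate: any functor that is a left adjoint preserves all colimits that exist in its domain. Binary coproducts are a particular kind of colimit, and $\cat{C}$ has coproducts by hypothesis, so $A \otimes (-)$ carries the coproduct diagram $B \to B + C \leftarrow C$ to a coproduct diagram, which exhibits $A \otimes (B + C)$ as the coproduct of $A \otimes B$ and $A \otimes C$. This yields the desired isomorphism $A \otimes (B + C) \cong (A \otimes B) + (A \otimes C)$.

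If a more explicit, self-contained argument is preferred, I would instead build the comparison map by hand and invert it across the adjunction. Functoriality of $A \otimes (-)$ applied to the two coproduct injections gives maps $A \otimes B \to A \otimes (B + C)$ and $A \otimes C \to A \otimes (B + C)$, which copair to a canonical morphism $(A \otimes B) + (A \otimes C) \to A \otimes (B + C)$. For the inverse, I would transpose: a morphism out of $A \otimes (B + C)$ corresponds to a morphism $B + C \to A \lto \big((A \otimes B) + (A \otimes C)\big)$, which I define by copairing the transposes of the two injections into $(A \otimes B) + (A \otimes C)$, then transpose back. Verifying that the two composites are identities is then a routine diagram chase using the universal property of the coproduct together with naturality of the adjunction bijection.

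The main obstacle is essentially bookkeeping rather than anything conceptual: the slick proof reduces to the one-line slogan that left adjoints preserve colimits, and the only point requiring genuine care is confirming that closedness supplies a right adjoint to $A \otimes (-)$ on the relevant side, which symmetry guarantees. I therefore expect no real difficulty.
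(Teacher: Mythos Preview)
Your proposal is correct and takes essentially the same approach as the paper: the paper's proof is the one-liner that $A \otimes (-)$ is a left adjoint by closedness and therefore preserves coproducts. Your additional explicit construction is sound but unnecessary here.
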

\begin{proof}
  By assumption, the functor $A \otimes (-)$ is a left adjoint and, therefore,
  preserves coproducts and we can conclude the isomorphism $A \otimes (B + C)
  \cong (A \otimes B) + (A \otimes C)$.
\end{proof}

The denotational semantics is given in \Cref{fig:twosem} and most of the
equational theory is presented in \Cref{fig:twoeq}. The lax monoidal equations
for $\M$ are long and not very informative, so we leave them to the
Appendix~\ref{app:proof}.

% \[
%   \sample {\coin} x {(x, x)} \equiv \frac{1}{2}(\delta_{(0,0)} + \delta_{(1,1)})
% \]

% What the example $\mathsf{if }\, \coin\, \mathsf{ then }\, (t, t)\, \mathsf{ else }\, (f, f)$ shows is that in the independent fragment of the language the type system should forbid to pattern match over a measure over a sum type --- e.g. $\M(\R + \R)$. Instead, we should only be able to pattern match over a sum of measures --- e.g. $\M \R \oplus \M \R$. The typing rules are depicted in Figure~\ref{fig:typmk}

% Even though this language captures probabilistic independence, it is still not totally satisfactory, as it is impossible to use an independent product in the possibly dependent fragment of the language. It might be possible to circumvent this issue by using the model defined by \cite{geoffroy2021extensional} where the author shows that the Kleisli category for the Giry monad over $\cat{QBS}$ is a model of classical linear logic. In such case we also get a model of CBPV which allows us to get back to the possibly dependent fragment, as long as we wrap the program with a $\mathsf{thunk}$. 

%The soundness property that we want out of this type system is that programs of type $\M\tau_1 \otimes \M\tau_2$ correspond to independent probability distributions:

\paragraph*{Soundness.}
In categorical models, the soundness theorem of \twolang can be stated
as follows:

\begin{Th}[Soundness]
\label{th:soundness}
  Let $\cdot \vdash_{I} t : \tau_1 \otimes \tau_2$ then $\sem{t} = f \otimes g$, where $f$ and $g$ are morphisms $I \to \sem{\tau_1}$ and $I \to \sem{\tau_2}$, respectively.
\end{Th}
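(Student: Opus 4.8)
The plan is to prove the factorization property by a categorical gluing (subscone) argument: lift the entire model structure along the global-points functor $\cat{C}(I, -) : \cat{C} \to \cat{Set}$, verify that the glued data is again a \twolang model, and then read off the theorem by compositionality of the semantics.

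First I would build the glued category $\cat{Gl}$ whose objects are pairs $(A, S)$ with $A \in \cat{C}$ and $S \subseteq \cat{C}(I, A)$ a predicate on global points, and whose morphisms $(A,S) \to (B,T)$ are $\cat{C}$-maps $h : A \to B$ with $h \circ S \subseteq T$. The forgetful functor $U : \cat{Gl} \to \cat{C}$, $(A,S) \mapsto A$, strictly preserves the underlying objects of every connective. The key design choices are the predicates attached to the connectives: for the tensor I set the predicate on $A \otimes B$ to be $\{f \otimes g \mid f \in S,\ g \in T\}$, which is exactly the image of $S \times T$ under the lax monoidal structure map of $\cat{C}(I,-)$; for the coproduct I take the union of the two injection-images; for the internal hom I use the standard logical-relations clause $\{h \mid \forall f \in S,\ \mathrm{ev}\circ(h \otimes f) \in T\}$; and $I$ gets the predicate $\{\mathrm{id}_I\}$. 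Crucially, for the sharing layer $\cat{M}$ I glue \emph{trivially}: every object carries the maximal (full) predicate, and the lifted functor sends $X$ to $(\M X, \cat{C}(I, \M X))$, i.e. the full predicate on every boxed type.

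Next I would verify that this glued triple is again a \twolang model and that $U$ is a strict morphism of models. The symmetric monoidal closed, affine (discard), coproduct and distributivity requirements are inherited by the usual subscone checks over a monoidal closed base. The lax monoidal structure maps of the lifted $\M$ lift for free: their codomains carry the full predicate, so the morphism condition $h \circ S \subseteq T$ is vacuous there, which is why the trivial gluing of $\cat{M}$ suffices. Once $\cat{Gl}$ is known to be a model, the fundamental lemma is immediate from compositionality: interpreting $t$ in $\cat{Gl}$ and applying $U$ recovers $\sem{t}$, so $\sem{t}$ lands in the predicate attached to $\sem{\tau_1 \otimes \tau_2}$. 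By construction that predicate is $\{f \otimes g \mid f : I \to \sem{\tau_1},\ g : I \to \sem{\tau_2}\}$; since the theorem only asks for the \emph{existence} of such $f$ and $g$, this is exactly the claimed factorization, independently of what the predicates at $\tau_1$ and $\tau_2$ happen to be.

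The main obstacle is establishing that the glued category genuinely satisfies all the axioms of a \twolang model simultaneously — in particular that the logical-relations internal hom really is the exponential in $\cat{Gl}$, and that coproducts together with distributivity lift compatibly with the tensor predicate so that $U$ remains a \emph{strict} model morphism. These verifications are routine for subscones but must be carried out coherently; the only genuinely new ingredient, the treatment of $\M$, is handled cheaply by the full-predicate trick, so the real effort concentrates on the monoidal closed and distributive-coproduct structure.
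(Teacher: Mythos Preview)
Your proposal is correct and follows essentially the same gluing argument as the paper: you glue $\cat{C}$ along global points, leave $\cat{M}$ untouched (your ``trivial gluing'' is equivalent to the paper's keeping $\cat{M}$ as is), put the full predicate on $\M$-types, and read off the factorization from the tensor predicate. The only cosmetic differences are that the paper packages your ``compositionality + strict $U$'' step as initiality of a syntactic model in a category $\cat{Mod}$ of \twolang models, and takes the unit predicate to be all of $\cat{C}(I,I)$ rather than your $\{\mathrm{id}_I\}$.
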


From a proof-theoretic perspective, the soundness theorem states that for every
proof of type $\cdot \vdash \tau_1 \otimes \tau_2$, we can assume that the last
rule is the introduction rule for $\otimes$. From a semantic perspective, the
soundness theorem means that for every closed term $\cdot \vdash t : \tau_1
\otimes \tau_2$, the semantics $\sem{t}$ can be factored as two morphisms $f_1$
and $f_2$ such that $\sem{t} = f_1 \otimes f_2$.

Establishing soundness requires additional categorical machinery, so we defer
the proof to \Cref{sec:soundness}. Here, we will exhibit a range of concrete
models for \twolang.

\begin{figure}
    \begin{mathpar}
    \inferrule[Var]
    {~}
    {\orange\tau \times \orange\Gamma \xrightarrow[]{id_{\orange\tau} \times del_{\orange\Gamma}}\orange\tau}
\and
    \inferrule[Let]
    {\orange \Gamma \xrightarrow[]{\orange M} \orange \tau_1 \\ \orange \Gamma \times \orange \tau_1 \xrightarrow{\orange N}  \orange \tau_2}
    {\orange \Gamma \xrightarrow[]{copy; (id \times \orange M); \orange N} \orange \tau_2}
\and
    \inferrule[$\times$ Intro]
    {\orange\Gamma \xrightarrow[]{\orange M} \orange{\tau}_1 \\ \orange\Gamma \xrightarrow[]{\orange N} \orange{\tau}_2}
    {\orange\Gamma \xrightarrow[]{copy; \orange M \times \orange N} \orange{\tau}_1 \times \orange{\tau}_2}
\and
    \inferrule[$\times$ Elim$_i$]
    {\orange\Gamma \xrightarrow[]{\orange M} \orange \tau_1 \times \orange \tau_2}
    {\orange \Gamma \xrightarrow{\orange M; (id_{\orange \tau_i} \times del)} \orange \tau_i}
\\
    \inferrule[$+$ Intro$_i$]
    {\orange \Gamma \xrightarrow[]{\orange M} \orange \tau_1}
    {\orange \Gamma \xrightarrow[]{\orange M; in_i} \orange \tau_1 + \orange \tau_2}
\and
    \inferrule[$+$ Elim]
    {\orange \Gamma_1 \xrightarrow[]{\orange N}  \orange \tau_1 + \orange \tau_2 \\ \orange \Gamma_2 \times \orange \tau_1 \xrightarrow[]{\orange M_1} \orange \tau \\ \orange \Gamma_2 \times \orange \tau_2 \xrightarrow[]{\orange M_2} \orange \tau}
    {\orange \Gamma_1,\orange \Gamma_2 \xrightarrow[]{\orange N \times id_{\orange \Gamma_2}} (\orange \tau_1 + \orange \tau_2) \times\orange \Gamma_2 \cong (\orange \tau_1 \times \orange \Gamma_2) + (\orange \tau_2 \times \orange \Gamma_2) \xrightarrow{[\orange M_1, \orange M_2]} \orange \tau}
\\
\\
    \inferrule[Var]
    {~}
    {\purple \Gamma, \purple {\underline{\tau}} \xrightarrow[]{del_{\purple \Gamma} \otimes id_{\purple \tau}} \purple {\underline{\tau}}}
\and 
    \inferrule[Abstraction]
    {\purple \Gamma \otimes  \purple{\underline {\tau_1}} \xto{\purple t} \purple{\underline {\tau_2}}}
    {\purple \Gamma \xto{\mathsf{cur}(\purple t)} \purple{\underline {\tau_1}} \lto \purple{\underline {\tau_2}}}
\and
    \inferrule[Application]
    {\purple \Gamma_1 \xto{\purple t} \purple{\underline {\tau_1}} \lto \purple{\underline {\tau_2}} \\ \Gamma_2 \xto{u} \purple{\underline {\tau_1}}}
    {\purple \Gamma_1\otimes \purple \Gamma_2 \xto{(\purple t \otimes \purple u); \mathsf{ev}} \purple {\underline{\tau_2}}}
\\
    \inferrule[$\otimes$ Intro]
    {\purple \Gamma_1 \xrightarrow[]{\purple t} \purple {\underline{\tau_1}} \\ \purple \Gamma_2 \xrightarrow[]{\purple u} \purple {\underline{\tau_2}}}
    {\purple \Gamma_1 \otimes  \purple \Gamma_2 \xrightarrow[]{\purple t \otimes \purple u} \purple {\underline{\tau_1}} \otimes \purple {\underline{\tau_2}}}
\and
    \inferrule[$\otimes$ Elim]
    {\purple \Gamma_1 \xto{\purple t} \purple{\underline {\tau_1}} \otimes \purple{\underline {\tau_2}} \\ \purple \Gamma_2 \otimes \purple{\underline {\tau_1}} \otimes \purple{\underline {\tau_2}} \xto{\purple u} \purple{\underline \tau}}
    {\purple \Gamma_1 \otimes \purple \Gamma_2 \xto{(id \otimes \purple t); \purple u} \purple{\underline \tau}}
\\
    \inferrule[$\oplus$ Intro$_i$]
    {\purple \Gamma \xrightarrow[]{\purple t} \purple {\underline{\tau_i}}}
    {\purple \Gamma \xrightarrow[]{\purple t; in_i} \purple {\underline{\tau_1}} + \purple {\underline{\tau_2}}}
\and
    \inferrule[$\oplus$ Elim]
    {\purple \Gamma_1 \xrightarrow[]{\purple u}  \purple {\underline{\tau_1}} + \purple {\underline{\tau_2}} \\ \purple {\underline{\tau_1}} \otimes \purple \Gamma_2 \xrightarrow[]{\purple t_1} \purple {\underline{\tau}} \\ \purple {\underline{\tau_2}} \otimes \purple \Gamma_2 \xrightarrow[]{\purple t_2} \purple {\underline{\tau}}}
    {\purple \Gamma_1, \purple \Gamma_2 \xrightarrow[]{\purple u \otimes id_{\purple \Gamma_2}} (\purple {\underline{\tau_1}} + \purple {\underline{\tau_2}}) \otimes\purple \Gamma_2 \cong (\purple {\underline{\tau_1}} \otimes \purple \Gamma_2) + (\purple {\underline{\tau_2}} \otimes \purple \Gamma_2) \xrightarrow{[\purple t_1, \purple t_2]} \purple {\underline{\tau}}}
\\\\
    \inferrule[Sample]
    {\orange \tau_1 \times \cdots \times \orange \tau_n \xto{\orange M} \orange \tau \\ \purple \Gamma_i \xto{\purple t_i} \M \orange \tau_i}
    {\purple \Gamma_1 \otimes \cdots \otimes \purple \Gamma_n \xto{\purple t_1
    \otimes \cdots \otimes \purple t_n} \M \orange \tau_1 \otimes \cdots \otimes
  \M \orange \tau_n \xto{\mu} \M (\orange \tau_1 \times \cdots \times \orange \tau_n)\xto{\M \orange M} \M \orange \tau}
  \end{mathpar}
\hrule
  \caption{Categorical Semantics: \twolang}
  \label{fig:twosem}
\end{figure}

% \begin{Rem}[Adding an additive arrow]
% Due to reasons mostly related to the models we will work with in this paper we
% will assume that the language has been extended with sum types at both languages
% and with an arrow type at the linear language. That being said, we reiterate
% that, syntactically, it is straightforward to extend \twolang with other
% connectives, such as a non-linear arrow, it is only a matter of finding a
% concrete model for it. As a rule of thumb, models based on monadic semantics can have an additive arrow as well.
% \end{Rem}

\subsection{Concrete models}
\label{sec:models}

To warm up, we present some basic probabilistic models \twolang. While prior
work has also investigated similar models~\citep{azevedodeamorim2022sampling},
we adapt these models to \twolang and explain how our soundness theorem ensures
independence.

\begin{figure}
    \begin{align*}
    \caseof {(\mathsf{in}_1 M)} {N_1} {N_2}  &\;\;\equiv\;\; \subst{N_1}{x}{M}
    \\
    \caseof {(\mathsf{in}_2 M)} {N_1} {N_2}  &\;\;\equiv\;\; \subst{N_2}{x}{M}
    \\[1.5ex]
    \letin{x}{t}{x} &\;\;\equiv\;\; t
    \\
    \letin{x}{x}{t} &\;\;\equiv\;\; t
    \\
    \letin{y}{(\letin{x}{M_1}{M_2})}{{M_3}} &\;\;\equiv\;\; \letin{x}{M_1}{(\letin{y}{M_2}{M_3}})
    \\
    \\
    \app {(\lamb x t)} u &\;\;\equiv\;\; \subst t x u
    \\
    \letin{x_1 \otimes x_2}{t_1 \otimes t_2}{u} &\;\;\equiv\;\; \subst {\subst u {x_1}{t_1}}{x_2}{t_2}
    \\[1.5ex]
    \caseof {(\mathsf{in}_1 t)} {u_1} {u_2}  &\;\;\equiv\;\; \subst{u_1}{x}{t}
    \\
    \caseof {(\mathsf{in}_2 t)} {u_1} {u_2}  &\;\;\equiv\;\; \subst{u_2}{x}{t}
    \\[1.5ex]
    \sample t x x &\;\;\equiv\;\; t
    \\
    \sample {(\sample t x M)} y {N} &\;\;\equiv\;\; \sample t x {(\letin y M N)}
  \end{align*}
\hrule
  \caption{Equational Theory: \twolang}
  \label{fig:twoeq}
\end{figure}

\subsubsection{Discrete Probability}

Our first concrete model is a different semantics for discrete probability. For
the sharing category, we take the category $\cat{CountStoch}$ with countable
sets as objects, and transition matrices as morphisms, i.e. functions $f : A
\times B \to [0,1]$ such that for every $a \in A$, $f(a, -)$ is a (discrete)
probability distribution~\citep{fritz2020}.

For the independent category, we take the probabilistic coherence space model of
linear logic, a well-studied semantics for discrete probabilistic
languages~\citep{pcoh}. This model was originally used to explore the connections
between probability theory and linear logic, and has recently been used to
interpret recursive probabilistic programs and recursive
types~\citep{tasson2019}; it is also fully-abstract for probabilistic PCF
\citep{DBLP:journals/jacm/EhrhardPT18}.  

\begin{Def}[\citet{pcoh}]
  A \emph{probabilistic coherence space (PCS)} is a pair $(|X|, \Pcal (X))$ where $|X|$
  is a countable set and $\Pcal (X) \subseteq |X| \to \R^+$ satisfies:

  \begin{itemize}
  \item $\forall a \in |X|\ \exists \varepsilon_a > 0\ \varepsilon_a \cdot \delta_a \in \mathcal P (X)$, where $\delta_a(a') = 1$ iff $a = a'$ and $0$ otherwise;
  \item $\forall a \in |X|\ \exists \lambda_a\ \forall x \in \mathcal P (X)\ x_a \leq \lambda_a$;
  \item $\mathcal P (X)^{\perp\perp} = \mathcal P (X)$, where $\mathcal P (X)^\perp = \set {x \in |X| \rightarrow \R^+} {\forall v \in \mathcal P(X)\ \sum_{a \in |X|}x_av_a \leq 1}$.
  \end{itemize}

\end{Def}

We can define a category $\cat{PCoh}$ where objects are probabilistic coherence
spaces and morphisms $X \lto Y$ are matrices $f : |X| \times |Y| \to \R^+$ such
that for every $v \in \Pcal{(X)}$, $f\, v \in \Pcal{(Y)}$, where $(f \, v)_b =
\sum_{a \in |X|}f_{(a,b)}v_a$.  It is well-known that this category is a SMCC; we
will use the explicit definition of the monoidal product.

\begin{Def}
\label{def:monpcoh}
  Let $(|X|, \Pcal{(X)})$ and $(|Y|, \Pcal{(Y)})$ be PCS, we define $X \otimes Y
  = (|X| \times |Y|, \set{x \otimes y}{ x \in \Pcal{(X)}, y \in
  \Pcal{(Y)}}^{\perp\perp})$, where $(x \otimes y)(a, b) = x(a)y(b)$.
\end{Def}

We can now define a functor $\M : \cat{CountStoch} \to \cat{PCoh}$.

\begin{Lemma}[see, e.g., \citet{azevedodeamorim2022sampling}]
  \label{lem:natpcs}
  Let $X$ be a countable set, the pair $(X, \set{\mu : X \to \R^+}{\sum_{x \in
  X} \mu(x) \leq 1})$ is a PCS. Any $\cat{CountStoch}$ morphism $X \to Y$ is
  also a $\cat{PCoh}$ morphism.
\end{Lemma}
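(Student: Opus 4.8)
The plan is to verify the three defining conditions of a PCS for $\Pcal(X) = \set{\mu : X \to \R^+}{\sum_{x} \mu(x) \leq 1}$ (the subprobability distributions on $X$), and then to check the morphism condition separately. The first two conditions are immediate. For the first, take $\varepsilon_a = 1$: the point mass $\delta_a$ has $\sum_{x} \delta_a(x) = 1 \leq 1$, so $\varepsilon_a \cdot \delta_a = \delta_a \in \Pcal(X)$. For the second, take $\lambda_a = 1$: every $\mu \in \Pcal(X)$ satisfies $\mu(a) \leq \sum_{x} \mu(x) \leq 1$ by nonnegativity.

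The main work is the closure condition $\Pcal(X)^{\perp\perp} = \Pcal(X)$, which I would establish by computing the orthogonal explicitly. Since each $\delta_a \in \Pcal(X)$, any $x \in \Pcal(X)^\perp$ must satisfy $x_a = \sum_{a'} x_{a'} (\delta_a)_{a'} \leq 1$; conversely, if $x_a \leq 1$ for all $a$ then $\sum_{a} x_a \mu_a \leq \sum_{a} \mu_a \leq 1$ for every $\mu \in \Pcal(X)$. Hence $\Pcal(X)^\perp = \set{x : X \to \R^+}{\forall a,\ x_a \leq 1}$ is the sup-norm unit ball of the positive cone. Applying the same reasoning one level up, a vector $v \in \Pcal(X)^{\perp\perp}$ must pair to at most $1$ against every such $x$; testing against the indicator of an arbitrary finite subset $F \subseteq X$ gives $\sum_{a \in F} v_a \leq 1$, and taking the supremum over all finite $F$ yields $\sum_{a} v_a \leq 1$, i.e.\ $v \in \Pcal(X)$. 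The reverse inclusion $\Pcal(X) \subseteq \Pcal(X)^{\perp\perp}$ is automatic from the definition of $(-)^\perp$, so equality holds and $(X, \Pcal(X))$ is a PCS.

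For the second claim, let $f : X \times Y \to [0,1]$ be a $\cat{CountStoch}$ morphism, so that $\sum_{b} f(a,b) = 1$ for every $a \in X$. Given $v \in \Pcal(X)$, the image $(f\,v)_b = \sum_{a} f_{(a,b)} v_a$ is nonnegative, and $\sum_{b}(f\,v)_b = \sum_{b}\sum_{a} f(a,b) v_a = \sum_{a} v_a \sum_{b} f(a,b) = \sum_{a} v_a \leq 1$, so $f\,v \in \Pcal(Y)$ and $f$ is a $\cat{PCoh}$ morphism $X \lto Y$. The one step I would flag is the interchange of the two summations in the countably-infinite case; since every summand is nonnegative this is justified by Tonelli's theorem for series, and the same nonnegativity underlies the finite-subset supremum argument used in the double-orthogonal computation.
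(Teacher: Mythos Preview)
Your proof is correct; the paper does not actually give a proof of this lemma, instead deferring to the cited reference. Your direct verification of the three PCS axioms and the morphism condition is the standard argument and exactly what one would expect here.
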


\begin{Th}
  There is a lax monoidal functor $\M : \cat{CountStoch} \to \cat{PCoh}$.
\end{Th}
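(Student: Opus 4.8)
The plan is to build $\M$ directly from \Cref{lem:natpcs} and to realize the entire lax monoidal structure by identity matrices, so that the only genuine content is a single bipolar-closure containment. On objects, I would send a countable set $X$ to the subprobability PCS $(X, \set{\mu : X \to \R^+}{\sum_x \mu(x) \leq 1})$ provided by \Cref{lem:natpcs}, and on morphisms I would reinterpret a stochastic matrix $f : X \to Y$ as a $\cat{PCoh}$ morphism. The second half of \Cref{lem:natpcs} makes this legal: if $v \in \Pcal(\M X)$ then $\sum_b (fv)_b = \sum_a v_a \sum_b f_{(a,b)} = \sum_a v_a \leq 1$, so $fv \in \Pcal(\M Y)$. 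Functoriality is then immediate, since composition in $\cat{CountStoch}$ is matrix multiplication, which coincides with composition in $\cat{PCoh}$, and the identity kernel is the identity matrix.

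For the monoidal structure I would exploit the fact that the $\cat{CountStoch}$ product and the $\cat{PCoh}$ product of \Cref{def:monpcoh} share the underlying set $|A| \times |B|$, and that the tensor of morphisms in each category is the same entrywise (Kronecker) formula $f_{(a,a')} g_{(b,b')}$. The tensor unit of $\cat{PCoh}$ is $(\{*\}, [0,1])$, which coincides on the nose with $\M(1)$, so I take the unit map $\epsilon$ to be the identity; likewise I take the binary structure map $\mu_{A,B} : \M A \otimes \M B \to \M(A \times B)$ to be the identity matrix on $|A| \times |B|$.

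The one nontrivial step, and the main obstacle, is checking that $\mu_{A,B}$ is actually a $\cat{PCoh}$ morphism, i.e. that $\Pcal(\M A \otimes \M B) \subseteq \Pcal(\M(A \times B))$. Each generator satisfies $\sum_{a,b} x(a)y(b) = (\sum_a x(a))(\sum_b y(b)) \leq 1$, so the generating set $\set{x \otimes y}{x \in \Pcal(\M A), y \in \Pcal(\M B)}$ lies inside the subprobability simplex $\Pcal(\M(A \times B))$. Since $(-)^{\perp}$ is antitone, $(-)^{\perp\perp}$ is monotone, and since $\Pcal(\M(A \times B))$ is already bipolar-closed (being a PCS, by the third axiom of the definition), applying $(-)^{\perp\perp}$ preserves this containment: $\Pcal(\M A \otimes \M B) = \set{x \otimes y}{x \in \Pcal(\M A), y \in \Pcal(\M B)}^{\perp\perp} \subseteq \Pcal(\M(A \times B))^{\perp\perp} = \Pcal(\M(A \times B))$.

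Finally, naturality of $\mu$ reduces to the matrix identity expressing that $\M$ commutes with the two tensors of morphisms (both being the Kronecker product), and the associativity and unit coherence diagrams reduce to equalities of the canonical set-theoretic associators and unitors on the shared underlying sets. All of these hold automatically because every structure map is an identity, so once the containment above is established the remaining conditions are pure bookkeeping.
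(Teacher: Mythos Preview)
Your proposal is correct and follows exactly the paper's approach: define $\M$ via \Cref{lem:natpcs} and take both $\epsilon$ and $\mu_{A,B}$ to be identity matrices. The paper's proof is a terse three-sentence sketch that asserts functoriality (``the functor is the identity on arrows'') and names the structure maps without verifying they are $\cat{PCoh}$ morphisms; your bipolar-closure argument for $\Pcal(\M A \otimes \M B) \subseteq \Pcal(\M(A \times B))$ is precisely the check the paper leaves implicit.
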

\begin{proof}
  The functor is defined using the previous above. Functoriality holds because
  the functor is the identity on arrows. The lax monoidal structure is given by
  $\epsilon = id_{1}$ and $\mu_{X, Y} = id_{X \times Y}$.
\end{proof}

Summing up, we have a model of \twolang based on probabilistic coherence spaces.

\begin{Th}
  The triple $(\cat{ \cat{PCoh}, CountStoch}, \M)$ is a \twolang model.
\end{Th}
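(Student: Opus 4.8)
The plan is to unpack the definition of a \twolang model and discharge its three obligations for the triple $(\cat{PCoh}, \cat{CountStoch}, \M)$: that $\cat{PCoh}$ is a symmetric monoidal closed category with coproducts and with natural discard maps $del_A : A \to I$; that $\cat{CountStoch}$ is a distributive CD category with coproducts; and that $\M$ is lax monoidal. The last obligation is already settled by the lax monoidal functor constructed in the preceding theorem, so only the structural requirements on the two categories remain, and the proof is essentially a checklist assembled from known facts about probabilistic coherence spaces together with one genuinely delicate point.

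For $\cat{CountStoch}$, I would verify the CD structure directly: the comonoid copy map $A \to A \otimes A$ is the deterministic kernel $a \mapsto \delta_{(a,a)}$, the counit $A \to I$ is the unique kernel into the one-point set, and these satisfy the commutative-comonoid and naturality diagrams because the underlying monoidal product is the cartesian product of countable sets. Coproducts are disjoint unions with the evident deterministic injections, and distributivity $A \otimes_M (B +_M C) \cong (A \otimes_M B) +_M (A \otimes_M C)$ reduces to set-level distributivity of product over disjoint union, which transports to transition matrices blockwise. For $\cat{PCoh}$, the symmetric monoidal closed structure is standard~\citep{pcoh}: the tensor is the one from \Cref{def:monpcoh}, the unit is the one-dimensional space $1$ with $\Pcal(1) = [0,1]$, and the internal hom is the space of $\lto$-matrices already used to define the category. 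Coproducts are the additive $\oplus$ of probabilistic coherence spaces, with $|X \oplus Y| = |X| + |Y|$, injections given by zero-padding, and cotupling given by concatenation of matrices; I would check the universal property against the explicit description $\Pcal(X \oplus Y) = (\Pcal(X)^\perp \times \Pcal(Y)^\perp)^\perp$.

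The main obstacle is the natural family of discard maps $del_A : A \to I$. A morphism $A \to I$ is exactly a covector in $\Pcal(A)^\perp$, so the natural candidate is the total-mass functional $del_A = \mathbf{1}$ (all ones), which lies in $\Pcal(A)^\perp$ precisely when every $v \in \Pcal(A)$ satisfies $\sum_{a} v_a \le 1$, i.e., when $A$ is \emph{affine}. This holds for the unit, for every boxed space $\M \tau$ (the subprobability spaces of \Cref{lem:natpcs}), and is preserved by $\otimes$ and $\oplus$; the delicate point is that it can fail for function spaces, since, for example, the identity matrix witnesses an element of $\Pcal(A \lto A)$ of total mass exceeding one. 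I would therefore focus the argument on showing that the total-mass covectors assemble into a natural transformation, the naturality square $del_B \circ f = del_A$ being precisely the statement that $\cat{PCoh}$ morphisms are mass-nonincreasing; reconciling this with non-affine objects such as linear function spaces---either by restricting $\cat{C}$ to its affine full subcategory or by checking that only affine objects arise in the interpretation of the \sep-layer---is where the real care is needed, and is the step I expect to require the most work.
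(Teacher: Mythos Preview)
Your overall decomposition matches the paper's: cite the CD/coproduct structure of $\cat{CountStoch}$, cite the SMCC/coproduct structure of $\cat{PCoh}$ from the linear-logic model, and invoke the preceding theorem for lax monoidality of $\M$. The paper does exactly this, only more tersely (by citation rather than explicit verification).

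The one substantive divergence is your choice of $del_A$. You take the all-ones covector and then correctly diagnose that this fails on non-affine objects such as $A \lto A$, leaving you to propose restricting $\cat{C}$ or arguing that only affine objects arise. Neither patch works cleanly: the definition of a \twolang model requires $del_A : A \to I$ to be natural in \emph{all} objects of $\cat{C}$, not just the denotable ones, and cutting down to the affine fragment of $\cat{PCoh}$ would cost you monoidal closure. The paper sidesteps the whole issue by taking $del_A$ to be the \emph{zero} matrix. Since $0 \in \Pcal(X)$ for every PCS $X$ (it lies in any bipolar), the zero matrix is always a $\cat{PCoh}$ morphism $A \to I$, and naturality is immediate because post-composing any morphism with zero yields zero. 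So the ``delicate point'' you flag is simply absent in the paper's argument; you should replace your candidate $del$ with the zero map and drop the affine-restriction discussion.
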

\begin{proof}
  $\cat{CountStoch}$ is well-known to be a CD category with coproducts
  \citep{fritz2020}, and $\cat{PCoh}$ is a symmetric monoidal closed category
  with coproducts because it is a model of linear logic~\citep{pcoh}. The
  morphism $del_X$ is given by the constant $0$ function, where the monoidal
  unit is the interval $[0,1]$.  Finally, lax monoidality of $\M$ is given by
  the previous theorem. 
\end{proof}

In $\cat{PCoh}$ it is possible to show that $\M \tau_1 \otimes \M \tau_2
\subseteq \M (\tau_1 \times \tau_2)$ meaning that well typed programs of type
$\M \tau_1 \otimes \M \tau_2$ are denoted by joint distributions over $\tau_1
\times \tau_2$. Furthermore, by taking a closer look at Definition~\ref{def:monpcoh} 
we see that $\mu_A \otimes \mu_B$ corresponds exactly to the product 
distribution of $\mu_A$ and $\mu_B$, so our soundness theorem
implies that closed programs of type $\M \tau_1 \otimes \M \tau_2$ are
denoted by independent probability distributions.

\subsubsection{Continuous Probability}

Next, we consider models for continuous probability. For the sharing layer, the
generalization of $\cat{CountStoch}$ to continuous probabilities is
$\cat{BorelStoch}$, which has standard Borel spaces as objects and Markov
kernels as morphisms~\citep{fritz2020}; see Appendix~\ref{app:meas} for details.
For the separating layer, we want a model of linear logic that can interpret
continuous randomness. We use a model based on perfect Banach lattices.

%There are a number of models of linear logic that can model continuous probability. We will choose the category of regular ordered Banach spaces and regular linear functions due 

% \begin{Def}
%   A Banach space is a normed vector space such that 
% \end{Def}

\begin{Def}[\citet{azevedodeamorim2022riesz}]
  The category $\cat{PBanLat_1}$ has perfect Banach lattices as objects and
  order-continuous linear functions with norm at most one as morphisms.
\end{Def}

Intuitively, a perfect Banach lattice is a Banach space equipped with a lattice
structure and an involutive linear negation. For every measurable space $(X,
\Sigma_X)$ the space of signed measures over it is a perfect Banach space,
meaning that it can, for instance, interpret continuous probability
distributions over the real line. Furthermore, the map assigning $(X, \Sigma_X)$
to its space of signed measures is functorial and lax monoidal.

\begin{Th}[\citet{azevedodeamorim2022riesz}]
  There is a lax monoidal functor $\M : \cat{BorelStoch} \to \cat{PBanLat_1}$.
\end{Th}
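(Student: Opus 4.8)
The plan is to mirror the discrete construction $\M : \cat{CountStoch} \to \cat{PCoh}$, replacing subprobability distributions by finite signed measures. On objects, I would send a standard Borel space $(X, \Sigma_X)$ to the space $\M X$ of finite signed measures on it, equipped with the total-variation norm and the setwise order (so $\mu \leq \nu$ when $\nu - \mu$ is a positive measure). That this is a perfect Banach lattice is exactly the functional-analytic fact recalled before the statement, so I would cite it rather than reprove it. On morphisms, a Markov kernel $k : X \to Y$ is sent to the integration operator $(\M k)(\mu)(B) = \int_X k(x, B)\, d\mu(x)$, i.e.\ the pushforward of a measure along the kernel.

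First I would check that $\M k$ is a legitimate morphism of $\cat{PBanLat_1}$. Linearity is immediate from linearity of the integral; positivity of each $k(x, -)$ makes $\M k$ a positive operator, and order-continuity follows from the monotone and dominated convergence theorems applied along norm-bounded increasing nets. For the norm bound, since $k$ is Markov it preserves total mass on positive measures, so $\|\M k\, \mu\| = \|\mu\|$ for $\mu \geq 0$; decomposing an arbitrary signed measure via its Jordan decomposition gives $\|\M k\, \mu\| \leq \|\mu\|$, hence $\|\M k\| \leq 1$. Functoriality is then a Fubini/Tonelli computation: the Dirac kernel $x \mapsto \delta_x$ induces the identity, and Kleisli composition of kernels, $(k ; l)(x, C) = \int_Y l(y, C)\, k(x, dy)$, is carried to composition of integration operators, $\M(k ; l) = \M l \circ \M k$.

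For the lax monoidal structure I would take the unit $\epsilon : I \to \M I$ to be the canonical isomorphism $\R \cong \M 1$ sending a scalar to the corresponding multiple of the point mass, and the laxator $\mu_{X, Y} : \M X \otimes \M Y \to \M(X \times Y)$ to be the map induced by $\mu \otimes \nu \mapsto \mu \times \nu$, where $\mu \times \nu$ is the product measure. The bilinear product-measure assignment is bounded with norm one and separately order-continuous, so by the universal property of the Banach-lattice tensor it factors through $\otimes$ as a morphism of $\cat{PBanLat_1}$. The coherence diagrams for $(\M, \mu, \epsilon)$ then reduce to associativity and unitality of product measures (the product with the point mass on the one-point space is the identity), all of which are instances of Fubini's theorem.

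The main obstacle I anticipate is naturality of the laxator with respect to kernels, together with the functional-analytic bookkeeping of the tensor product. Naturality amounts to the identity $\M(k \otimes l)(\mu \times \nu) = (\M k\, \mu) \times (\M l\, \nu)$ for the product kernel $k \otimes l$, which is again a Fubini-type interchange of integrals but must be justified for kernels on standard Borel spaces; extending the equation from product measures to all of $\M X \otimes \M Y$ is handled cleanly via the universal property of $\otimes$, so that it suffices to verify it on elementary tensors. Everything else is routine once the perfect-Banach-lattice structure of $\M X$ and the tensor's universal property are in hand, which is why I would lean on the cited prior work of \citet{azevedodeamorim2022riesz} for those two ingredients.
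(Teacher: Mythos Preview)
The paper does not prove this theorem at all: it is stated with the attribution \citet{azevedodeamorim2022riesz} and used as a black box, with no proof or proof sketch given in the present paper. So there is nothing in the paper's own argument to compare your proposal against.

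That said, your sketch is a faithful outline of the construction one finds in the cited work: signed measures with the total-variation norm on objects, pushforward along kernels on morphisms, and the product-measure bilinear map as the laxator. The points you flag as needing care (order-continuity of the pushforward operator, the universal property of the $\cat{PBanLat_1}$ tensor for extending the laxator off elementary tensors, and the Fubini-style naturality square) are exactly the places where the actual work lies, and you are right to defer the perfect-Banach-lattice structure of $\M X$ and the tensor's universal property to the cited paper rather than redevelop them. In short, your proposal is correct in spirit and more detailed than anything the present paper provides; it simply goes beyond what the paper itself does, which is to quote the result.
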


\begin{Th}
  The triple $(\cat{PBanLat_1}, \cat{BorelStoch}, \M)$ is a \twolang model.
\end{Th}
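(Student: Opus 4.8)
The goal is to verify that $(\cat{PBanLat_1}, \cat{BorelStoch}, \M)$ satisfies all three clauses in the definition of a \twolang model. This mirrors exactly the structure of the analogous discrete-probability theorem, whose proof decomposed the verification into: the sharing category is a distributive CD category with coproducts, the separating category is a symmetric monoidal closed category with coproducts equipped with natural discard maps $del_A : A \to I_C$, and the linking functor is lax monoidal. I will carry out the same three checks in order.

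First I would establish that $\cat{BorelStoch}$ is a distributive CD category with coproducts. This should be entirely citational: $\cat{BorelStoch}$ is one of the standard Markov categories studied by \citet{fritz2020}, where it is shown to be a CD category, to have coproducts (given by disjoint unions of standard Borel spaces, which remain standard Borel), and to be distributive. So this clause reduces to citing the synthetic-probability literature, exactly as the $\cat{CountStoch}$ case did.

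Next I would verify that $\cat{PBanLat_1}$ is a symmetric monoidal closed category with coproducts and natural discard maps. That it is an SMCC with coproducts follows from its being a model of linear logic, which is established by \citet{azevedodeamorim2022riesz}; I would cite that work for the monoidal closed structure and the existence of coproducts. For the discard maps $del_A : A \to I_C$ natural in $A$, I would exhibit the monoidal unit $I_C$ concretely (the signed measures on a one-point space, i.e.\ $\R$) and take $del_A$ to be the zero map, which is order-continuous, linear, and of norm at most one, and is manifestly natural since it is zero; this parallels the ``constant $0$ function'' discard map used in the $\cat{PCoh}$ model. The final clause, that $\M : \cat{BorelStoch} \to \cat{PBanLat_1}$ is lax monoidal, is supplied directly by the immediately preceding theorem of \citet{azevedodeamorim2022riesz}, so no further work is needed there.

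I expect the only genuine subtlety to lie in the second clause, specifically in pinning down the monoidal unit and checking that the zero map legitimately serves as a natural family of discard maps with the required norm and order-continuity constraints; everything else is a matter of assembling cited facts. Since the whole argument is a structural assembly with the hard analytic content (the SMCC structure on Banach lattices and the lax monoidality of the signed-measure functor) already black-boxed in the cited theorems, the proof will be short, of the same shape as the $\cat{PCoh}$ model's proof: cite $\cat{BorelStoch}$ as a distributive CD category with coproducts, cite $\cat{PBanLat_1}$ as an SMCC model of linear logic with coproducts, specify $del$ as the zero map into the unit, and invoke the preceding theorem for lax monoidality of $\M$.
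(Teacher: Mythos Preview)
Your proposal is correct and matches the paper's proof essentially point for point: cite \citet{fritz2020} for the CD-with-coproducts structure on $\cat{BorelStoch}$, cite \citet{azevedodeamorim2022riesz} for the linear-logic (SMCC with coproducts) structure on $\cat{PBanLat_1}$, take $del$ to be the zero map into the unit $\R$, and invoke the preceding theorem for lax monoidality of $\M$. The only cosmetic difference is that the paper justifies the $\cat{BorelStoch}$ facts by noting it is (isomorphic to) the Kleisli category of a commutative monad on $\cat{Meas}$, whereas you appeal directly to its status as a Markov category.
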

\begin{proof}
  The category $\cat{BorelStoch}$ has a CD structure and has coproducts because
  it is isomorphic to the Kleisli category of a commutative monad over the
  category $\cat{Meas}$~\citep{fritz2020}. The category $\cat{PBanLat_1}$ is a
  model of classical linear logic, making it a SMCC with
  coproducts~\citep{azevedodeamorim2022riesz}. The morphism $del_V$ is the
  constant $0$ function, where the monoidal unit is $\R$. The lax monoidality of
  $\M$ follows from the previous theorem. 
\end{proof}

This model can be seen as the continuous generalization of the previous model,
since there are full and faithful embeddings $\cat{CountStoch}
\hookrightarrow \cat{BorelStoch}$ and $\cat{PCoh} \hookrightarrow
\cat{PBanLat}_1$ \citep{azevedodeamorim2022riesz}.  In this model, our soundness
theorem once again ensures probabilistic independence, i.e. programs of type $\M
\tau_1 \otimes \M \tau_2$ are denoted by independent distributions.

\subsubsection{Non-Determinism and Communication}

Next, we show that the relational model of linear logic gives rise to a \twolang
model, with applications with distributed programming.

\paragraph*{Semantics}
Our starting point is the category $\cat{Rel}$ of sets and binary relations, one
of the most well-known models for linear logic. By pairing this category with
the Kleisli category $\cat{Set}_{\Pcal}$, for the powerset monad $\Pcal$ we 
immediately obtain a model for \twolang.

\begin{Th}
    The triple $(\cat{Rel}, \cat{Set}_{\Pcal}, id)$ is a \twolang model.
\end{Th}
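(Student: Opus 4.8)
The plan is to verify the three clauses of the \twolang model definition for the triple $(\cat{Rel}, \cat{Set}_{\Pcal}, id)$, and the key simplification is the standard identification $\cat{Set}_{\Pcal} \cong \cat{Rel}$: a Kleisli map $A \to \Pcal(B)$ is exactly a relation $R \subseteq A \times B$, and under this identification the Kleisli monoidal product induced by the (commutative) powerset monad coincides on the nose with the relational tensor $A \otimes B = A \times B$, with unit the singleton $I = \{*\}$. Thus both categories in the triple are, up to this identification, $\cat{Rel}$ equipped with the same symmetric monoidal structure, and the functor $\M = id : \cat{Set}_{\Pcal} \to \cat{Rel}$ is the evident identity-on-objects, identity-on-relations functor. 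Once this is in place, most clauses reduce to well-known facts about $\cat{Rel}$.

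First I would establish the $\cat{C}$-side structure. The category $\cat{Rel}$ is compact closed (self-dual, with $A^{*} = A$), hence in particular symmetric monoidal closed, and it has coproducts given by disjoint union (indeed biproducts), with injections the evident relations. The one clause requiring genuine attention is the natural family $del_A : A \to I_C$: since $\cat{Rel}$ is not semicartesian ($I$ is not terminal), the obvious ``total'' relation $A \times \{*\}$ is \emph{not} natural, because the square $del_A = del_B \circ R$ fails for non-total $R$. Testing the square against the empty relation forces $del_A$ to be the empty relation, and conversely the empty relation does assemble into a natural transformation $\mathrm{Id} \Rightarrow \mathrm{const}_{I}$; this is the step I expect to be the main obstacle, and I would discharge the clause by taking $del_A$ to be this (unique) natural family.

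Second, for the $\cat{M}$-side I would argue that $\cat{Set}_{\Pcal}$ is a distributive CD category with coproducts. It is a CD category because it is the Kleisli category of a commutative monad on the Cartesian category $\cat{Set}$, so every object carries its canonical commutative comonoid structure (copy given by the Kleisli diagonal, discard by the map into $I$) compatibly with the monoidal product, following the standard constructions of \citet{fritz2020,DBLP:journals/mscs/ChoJ19}. Coproducts are inherited from $\cat{Set}$ since the free Kleisli functor is a left adjoint and so preserves colimits, giving disjoint union as the coproduct. Distributivity $A \otimes_M (B + C) \cong (A \otimes_M B) + (A \otimes_M C)$ then follows exactly as in the Lemma proved earlier for symmetric monoidal closed categories: transporting along $\cat{Set}_{\Pcal} \cong \cat{Rel}$, compact closedness makes $A \otimes_M (-)$ a left adjoint, hence coproduct-preserving.

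Finally, lax monoidality of $\M = id$ is immediate: because the monoidal structures on $\cat{Set}_{\Pcal}$ and $\cat{Rel}$ coincide under the identification above, with the same unit $I_M = I_C = \{*\}$, the functor $id$ is strictly monoidal, with structure maps $\mu_{A,B} = id_{A \times B}$ and $\epsilon = id_I$; all coherence diagrams commute trivially, and a strict monoidal functor is a fortiori lax monoidal. Collecting the three clauses establishes that $(\cat{Rel}, \cat{Set}_{\Pcal}, id)$ is a \twolang model, with the verification of the natural discard $del_A$ in $\cat{Rel}$ being the only non-formal point.
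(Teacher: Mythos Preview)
Your proposal is correct and follows essentially the same route as the paper: both exploit the identification $\cat{Set}_{\Pcal}\cong\cat{Rel}$ to reduce everything to standard facts about $\cat{Rel}$ as a model of linear logic and to make lax monoidality of $id$ immediate. The one point of divergence is $del_A$: the paper simply takes ``the delete operation from $\cat{Set}_{\Pcal}$'' (the total relation), whereas you correctly observe that naturality against non-total relations forces the empty relation; your choice is the more careful one, and is in line with the paper's own use of the zero morphism for $del$ in its $\cat{PCoh}$ and $\cat{PBanLat_1}$ models.
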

\begin{proof}
    Binary relations over sets $A$ and $B$ are represented either as subsets $R \subseteq A \times B$ or,
    equivalently, as functions $A \to \Pcal(B)$. From this observation it is possible to show that the
    identity functor is an isomorphism and it easily follows from this that $id$ is lax monoidal. Since
    $\cat{Rel}$ is a model of linear logic, it has coproducts and, by isomorphism, so does $\cat{Set}_{\Pcal}$.
    The natural transformation $del$ in $\cat{Rel}$ is the delete operation from $\cat{Set}_{\Pcal}$.
\end{proof}

\paragraph*{Application to Distributed Programming}
While this model arises from linear logic, we show that it leads to a suitable
language for distributed programming.  We assume a two-tier approach to
programming with communication: the \shr language is used for writing local
programs, while the \sep language is used to orchestrate the communication
between local code. Programs of type $\M \textcolor{purple}{\underline{\tau}}$
correspond to local computations that can be manipulated by the communication
language. Programs in the $\sep$ language are interpreted as maps of the form $A
\to \Pcal(B)$; we view these maps as allowing \emph{non-deterministic} or
\emph{lossy} communication.

To align the syntax with this interpretation, we tweak the syntax $\sample {t_i}
{x_i} {M}$ to $\mathsf{send}\, t_i \, \mathsf{as}\,  x_i\, \mathsf{in}\, M$
which sends the values computed by the local programs $t_i$, binds them to $x_i$
and continues as the local program $M$. To see how how distributed programs can
be written in this language, we consider a simple distributed voting protocol
between two parties. We suppose that there is a leader that receives two
messages containing the votes and if they are the same, the election is decided
and the leader announces the winner. If the votes disagree, the leader outputs a
tagged unit value saying that there has been a draw. In \twolang, the leader can
be implemented as:
\begin{align*}
\mathsf{\purple {leader}} &: \M \orange \nat \otimes \M \orange \nat \lto \M(\orange{\nat \oplus 1})\\
\mathsf{\purple {leader}} &= \purple \lambda \, \purple{x_1\, x_2}.\, \mathsf{send}\, \purple {x_1, x_2}\, \mathsf{as} \, \orange{n_1, n_2} \, \mathsf{in} \, \orange{\ifthenelse{n_1 = n_2}{(\mathsf{in_1}\, n_1)}{(\mathsf{in_2}\, ())}}
\end{align*}
Given a program $\purple {\mathsf{votes}} : \M \orange \nat \otimes \M \orange
\nat$ that computes what each agent will vote, the full distributed program can
be represented as the application $\purple{\app
{\mathsf{leader}}{\mathsf{votes}}}$.

\paragraph*{Soundness theorem}
In this model, our soundness result ensures that if we have a closed program of
type $\M \tau_1 \otimes \M \tau_2$, then it can be factored as two local
programs that can be run locally, and do not require any extra communication
other than the $\mathsf{send}$ instructions.  To understand why this guarantee
is non-trivial, consider the problematic program from
Section~\ref{sec:twolevel}:
\[
\purple{\mathsf{message}} : \M(\purple{1 + 1}) \nvdash_{\sep}\, \purple{\ifthenelse{\mathsf{message}}{(\kwtt \otimes \kwtt)}{(\kwff \otimes \kwff)}} : \M \purple\bool \,\purple\otimes\, \M \purple\bool
\]
Under our interpretation, the if-statement is conditioning on the contents of
the program variable $\purple{\mathsf{message}}$ and producing two local computations
that have the same outputs. There are two potential sources of implicit
communication in this program. First, the contents of $\purple{\mathsf{message}}$ are
non-deterministic, so the local computations must communicate in order to agree
on what value to return. Second, by conditioning on the same value, the message
must be sent to both local computations.  These indirect communications have
already been addressed in the choreography literature, as illustrated by
\citet{hirsch2022}, where their language allows pattern matching on local
computation but the chosen branch must be broadcast to programs that depend on
it, which is not problematic in a setting where communication is reliable.

To illustrate the soundness guarantee, we can revisit the distributed voting
example. By the soundness theorem, the program $\purple{\mathsf{votes}}$ is
equal to $\purple {t_1 \otimes t_2}$ for programs $\purple {t_1}, \purple{t_2}
: \M \orange \nat$. Thus, the only communication required are explicit sends.

\paragraph{Expressivity and Limitations}
Intuitively, closed programs in \twolang of type $\M \tau$ are equivalent to
$\send{t_i}{x_i}{M}$, which we view as a local program $M$ that starts by
receiving $n$ different messages, runs its body $M$ with the received messages
as bound variables, and makes its output available to be sent to a different
local computation.  Therefore, each local program may only have one block of
receives at the beginning and one send at the end, limiting the allowed
communication patterns.

These limitations have been addressed in other modal logic approaches to
distributed programming by having a static set of agents and a modality
annotated by elements of this set representing computations that are executed by
a particular agent of the distributed system~\citep{hirsch2022}. We conjecture that by
extending \twolang with type constructors $\M_\ell \tau$, where $\ell$ is the
name of an agent, it might be possible to represent more intricate communication
patterns, but we leave this for future work. 

\paragraph{Related Work}
% The standard way of programming with communication is by having channels and ways of
% sending and receiving messages through them using $\mathsf{send}$/$\mathsf{recv}$ primitives.
% This is the programming model adopted by the Turing-complete $\pi$-calculus.
% The problem with this formalism is that
% it cannot guarantee that the sends and the receives are lined-up. In particular,
% it is easy to write a program that reads from a channel that no other program
% sends messages to.

Distributed programming is challenging and error-prone, and there is a long
history of language design in this setting. Two notable examples are session
types~\citep{session-types} and choreographic
programming~\citep{choreographies}. Session types adopts a linear typing
discipline where type constructors model the desired protocol.  On the other
hand, choreographic programming adopts a monolithic approach: The entire system
is written as a single program that can be compiled to ``local computations'',
with the compiler adding the appropriate communication instructions.

Our model of \twolang blends aspects of both approaches. It still has
a substructural communication type system, but it also represents protocols using a
single global program with a two-tier language that distinguishes between local
and global computation. We leave a more thorough comparison between these
languages for future work.

\begin{comment}

Imagine a small program with agents
Alice and Bob where Alice sends Bob a number, Bob adds $1$ to it and sends it back to Alice
who multiplies it by two:
%
\begin{align*}
    &\letin{x_1 : \M_{Alice} \nat}{(\sample{\_}{\_} 1)}{ }\\
    &\letin{x_2 : \M_{Bob}\nat}{(\sample{x_1}{y_1}{y_1 + 1})}{ }\\
    &\sample{x_2}{y_2}{2*y_2} : \M_{Alice}\nat
\end{align*}
%
We would like it to be compiled to the following two local programs:
%
\begin{align*}
&x = 1;\\ 
&\mathsf{send}\, Bob \, x;\\
&y \leftarrow \mathsf{recv}\, Bob;\\
&\mathsf{return}\, 2*y;
\end{align*}
%
\begin{align*}
&x \leftarrow \mathsf{recv}\, Alice;\\ 
&\mathsf{send}\, Alice \, (x + 1);\\
\end{align*}
%
\pa{make the programs above prettier}
In the absence of higher-order functions, \sep programs would be basically straight line, making 
an eventual compilation scheme into local programs relatively straightforward. In the presence of
arrow types, however, the compilation is not so simple and we leave such investigation to future work.

To conclude, we view our formalism as a middle ground between session types and choreographic programming.
We use a linear typing discipline while still assuming that there is a distinction between local and
communication programs. Such syntactic restrictions are necessary when you need to limit how many messages
are sent. 

\end{comment}

\subsubsection{Commutative Effects}

In this section we will present a large class of models based on commutative monads which are monads where, in a Kleisli semantics of effects, the program equation $(\letin{x}{t}{\letin{y}{u}{w}}) \equiv
 (\letin{y}{u}{\letin{x}{t}{w}})$ holds. %Concretely, they are strong monads such that the diagram below commutes:

% % https://q.uiver.app/?q=WzAsNixbMSwwLCJUQSBcXHRpbWVzIFRCICJdLFswLDEsIlQoVEEgXFx0aW1lcyBCKSJdLFswLDIsIlReMihBXFx0aW1lcyBCKSJdLFsxLDMsIlQoQVxcdGltZXMgQikiXSxbMiwxLCJUKEEgXFx0aW1lcyBUQikiXSxbMiwyLCJUXjIoQVxcdGltZXMgQikiXSxbMCwxLCJcXHNpZ21hIiwyXSxbMiwzLCJcXG11IiwyXSxbMSwyLCJUXFx0YXUiLDJdLFswLDQsIlxcc2lnbWEiXSxbNSwzLCJcXG11Il0sWzQsNSwiVFxcc2lnbWEiXV0=
% \[\begin{tikzcd}
% 	& {TA \times TB } \\
% 	{T(TA \times B)} && {T(A \times TB)} \\
% 	{T^2(A\times B)} && {T^2(A\times B)} \\
% 	& {T(A\times B)}
% 	\arrow["\sigma"', from=1-2, to=2-1]
% 	\arrow["\mu"', from=3-1, to=4-2]
% 	\arrow["T\tau"', from=2-1, to=3-1]
% 	\arrow["\sigma", from=1-2, to=2-3]
% 	\arrow["\mu", from=3-3, to=4-2]
% 	\arrow["T\sigma", from=2-3, to=3-3]
% \end{tikzcd}\]

The Kleisli category of commutative monads has many useful properties.

\begin{Th}[\citet{fritz2020}]
  Let $\cat{C}$ be a Cartesian category and $T$ a commutative monad over it. The category $\cat{C}_T$ is a CD category.
\end{Th}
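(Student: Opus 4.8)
The plan is to obtain the CD structure on $\cat{C}_T$ by transporting the canonical comonoid structure that every object already carries in the Cartesian category $\cat{C}$ along the Kleisli inclusion. First I would recall that a commutative monad $T$ on $(\cat{C}, \times, 1)$ is in particular a symmetric (lax) monoidal functor: its strength and costrength combine into a single natural transformation $\nabla_{A,B} : TA \times TB \to T(A \times B)$, and commutativity is precisely the statement that the two evident composites $TA \times TB \to T(A \times B)$ agree, so that $\nabla$ is well defined and coherent. This lax monoidal structure is exactly what equips the Kleisli category $\cat{C}_T$ with a symmetric monoidal structure whose tensor $\otimes$ is given on objects by the product $\times$ of $\cat{C}$ and whose unit is the terminal object $1$; this is the standard construction of a monoidal Kleisli category from a commutative (equivalently, symmetric monoidal) monad.

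Next I would observe that the identity-on-objects Kleisli inclusion $J : \cat{C} \to \cat{C}_T$, sending $f : A \to B$ to $\eta_B \circ f$, is a strong symmetric monoidal functor from $(\cat{C}, \times, 1)$ to $(\cat{C}_T, \otimes, 1)$. Its structure morphisms are identities: the lax monoidal coherence law $\nabla_{A,B} \circ (\eta_A \times \eta_B) = \eta_{A \times B}$ for $T$ gives $J(f) \otimes J(g) = \nabla \circ (\eta f \times \eta g) = \eta \circ (f \times g) = J(f \times g)$, and $J$ preserves the unit strictly. Since strong monoidal functors preserve commutative comonoids together with their compatibility conditions, the CD structure then transports automatically.

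Concretely, in $\cat{C}$ every object $A$ carries its canonical Cartesian commutative comonoid, with comultiplication the diagonal $\Delta_A : A \to A \times A$ and counit the terminal map $!_A : A \to 1$; moreover these comonoids are \emph{uniform}, i.e.\ the comonoid on $A \times B$ is the tensor of those on $A$ and $B$ (up to the symmetry $A \times B \times A \times B \cong A \times A \times B \times B$), which is a routine consequence of the universal property of products. Applying $J$ yields on each object the Kleisli copy map $\eta_{A \times A} \circ \Delta_A$ and delete map $\eta_1 \circ {!_A}$; because $J$ is strong symmetric monoidal, these satisfy the comonoid axioms and the uniformity conditions required of a CD category, as all of these are equalities of morphisms preserved by $J$. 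Naturality of the delete maps in $\cat{C}_T$ likewise reduces to naturality of $!$ in $\cat{C}$ together with the monad unit laws.

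The main obstacle is the first step: verifying that commutativity of $T$ is exactly the hypothesis needed to make $\otimes$ a well-defined symmetric monoidal bifunctor on $\cat{C}_T$. With a strength alone one obtains two a priori distinct maps $TA \times TB \to T(A \times B)$, and without commutativity the tensor of two Kleisli morphisms is ambiguous; once this coherence is in place, the remaining comonoid and uniformity checks are formal and follow from functoriality and strong monoidality of $J$.
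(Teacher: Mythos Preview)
The paper does not give its own proof of this statement; it is simply cited from \citet{fritz2020}. Your sketch is the standard argument and is essentially correct: equip $\cat{C}_T$ with the symmetric monoidal structure induced by the commutative monad, then push the Cartesian comonoids forward along the strong monoidal Kleisli inclusion $J$.

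One point needs correction. Your final sentence claims that naturality of the delete maps in $\cat{C}_T$ ``reduces to naturality of $!$ in $\cat{C}$ together with the monad unit laws.'' This is false in general. Unwinding Kleisli composition, naturality of $J(!_{-})$ against an arbitrary Kleisli morphism $f : A \to TB$ amounts to $T(!_B) \circ f = \eta_1 \circ {!}_A$, and taking $f = id_{TB}$ forces $T(!_B) = \eta_1 \circ {!}_{TB}$, i.e.\ $T1 \cong 1$. That is an \emph{affineness} hypothesis on $T$, not a consequence of commutativity; the powerset monad on $\cat{Set}$ is a counterexample. Fortunately a CD category in the sense of \citet{DBLP:journals/mscs/ChoJ19} (the reference the paper uses) does \emph{not} require the counit to be natural---that extra condition is what distinguishes a Markov category from a mere CD category---so this claim is both unnecessary for the theorem and should be dropped. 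The rest of your argument stands.
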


\begin{Lemma}
  Let $\cat{C}$ be a distributive category and $T$ a monad over it. Its Kleisli category $\cat{C}_T$ has coproducts and is also distributive. 
\end{Lemma}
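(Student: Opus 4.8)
The plan is to establish both properties by transporting structure from $\cat{C}$ along the free functor $F : \cat{C} \to \cat{C}_T$, which is the identity on objects and sends $f : A \to B$ to $\eta_B \circ f : A \to TB$. This functor is the key device throughout: it is left adjoint to the forgetful functor $U : \cat{C}_T \to \cat{C}$ (with $U A = TA$ and $U(h : A \to TB) = \mu_B \circ Th$), and it is bijective on objects.

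For the existence of coproducts I would invoke the adjunction $F \dashv U$ directly: left adjoints preserve all colimits, hence the initial object and binary coproducts of $\cat{C}$. Because $F$ is bijective on objects, this shows that $A + B$ (as computed in $\cat{C}$) is again the coproduct of $A$ and $B$ in $\cat{C}_T$, with injections $F(in_i) = \eta_{A+B} \circ in_i$, and that the initial object $0$ of $\cat{C}$ stays initial. If a hands-on verification is preferred, one checks that for Kleisli arrows $f : A \to TX$ and $g : B \to TX$ the $\cat{C}$-copairing $[f,g] : A + B \to TX$ is the unique Kleisli arrow whose Kleisli composites with $F(in_1)$ and $F(in_2)$ recover $f$ and $g$; this reduces to the monad unit law $h^* \circ \eta = h$, which collapses the Kleisli composite $[f,g]^* \circ \eta_{A+B} \circ in_i$ to the ordinary composite $[f,g] \circ in_i$.

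For distributivity I would use that the monoidal product $\otimes$ of $\cat{C}_T$ agrees on objects with the Cartesian product $\times$ of $\cat{C}$, and that $F$ is strict monoidal from $(\cat{C}, \times)$ to $(\cat{C}_T, \otimes)$: on objects $F(A \times B) = A \otimes B$, and on morphisms $F(f) \otimes F(g) = F(f \times g)$, since the strength coherence law $\mathrm{dst} \circ (\eta \times \eta) = \eta$ collapses the strength-based definition of $\otimes$ on morphisms back to $\eta \circ (f \times g)$. A functor that is simultaneously strict monoidal and coproduct-preserving must carry the canonical distributivity morphism of $\cat{C}$, namely $[\,id_A \otimes in_1,\, id_A \otimes in_2\,] : (A \otimes B) + (A \otimes C) \to A \otimes (B + C)$, to the corresponding canonical morphism of $\cat{C}_T$. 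Since $\cat{C}$ is distributive, this morphism is an isomorphism in $\cat{C}$ with a two-sided inverse, and applying $F$ to that inverse exhibits the $\cat{C}_T$-morphism as an isomorphism as well; hence $\cat{C}_T$ is distributive.

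The coherence bookkeeping is routine, but the one point that needs genuine care — and which I expect to be the main obstacle — is confirming that $F$ actually sends the canonical distributivity map of $\cat{C}$ onto the canonical distributivity map of $\cat{C}_T$, rather than merely an abstractly isomorphic one. This amounts to matching the Kleisli injections, the copairing, and the strength-based action of $A \otimes (-)$ against their $\cat{C}$-counterparts. Once $F$ is verified to be strict monoidal and to preserve coproducts together with their universal morphisms, this identification is forced, and the distributivity isomorphism in $\cat{C}_T$ follows formally from that of $\cat{C}$.
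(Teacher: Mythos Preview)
Your argument is correct and follows essentially the same route as the paper: lift the distributivity isomorphism of $\cat{C}$ into $\cat{C}_T$ via the free functor and use functoriality to conclude it remains an isomorphism. The paper's proof is a two-line sketch (``apply $T$ to the distributive structure and use the functor laws''), whereas you spell out the adjunction argument for coproducts, the strict monoidality of $F$, and the verification that the lifted isomorphism is the \emph{canonical} distributivity map---exactly the point you flag as the one needing care, and which the paper leaves implicit.
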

\begin{proof}
  It is straightforward to show that Kleisli categories inherit coproducts from the base category. Furthermore, by using the distributive structure of $\cat{C}$, applying $T$ to it and using the functor laws, it follows that $\cat{C}_T$ is distributive.
\end{proof}

Another useful category of algebras is the category of algebras and plain maps $\widetilde{\cat{C}^T}$ which has $T$ algebras as objects and $\widetilde{\cat{C}^T}((A, f), (B, g)) = \cat{C}(A, B)$.

\begin{Th}[\citet{simpson1992}]
  Let $\cat{C}$ be a Cartesian closed category and $T$ a commutative monad over it. The
  category of $T$-algebras and plain maps is Cartesian closed, and $1$ is a
  terminal object.
\end{Th}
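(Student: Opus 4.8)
The plan is to exploit the defining feature of $\widetilde{\cat{C}^T}$: since its morphisms are \emph{plain} maps, the hom-set $\widetilde{\cat{C}^T}((A,a),(B,b))$ is literally $\cat{C}(A,B)$. Consequently every universal property we need will be inherited verbatim from $\cat{C}$, and the only real work is to equip the underlying $\cat{C}$-objects $1$, $A \times B$, and $B^A$ with suitable $T$-algebra structures. I would organize the argument around the three pieces of Cartesian-closed structure, deferring the genuine computation to the exponential.

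For the terminal object I would take $(1,{!})$, where $1$ is terminal in $\cat{C}$ and ${!}\colon T1 \to 1$ is the unique map; both algebra laws hold because any two parallel maps into $1$ coincide. It is terminal in $\widetilde{\cat{C}^T}$ because $\widetilde{\cat{C}^T}((A,a),(1,{!})) = \cat{C}(A,1)$ is a singleton. For binary products I would put on $A \times B$ the Eilenberg--Moore product structure $c = \langle a \circ T\pi_1,\, b \circ T\pi_2\rangle$, which is a genuine algebra since $\cat{C}^T$ has products created by the forgetful functor; the projections are plain maps, and the universal property follows from the bijections $\widetilde{\cat{C}^T}(Z, (A\times B, c)) = \cat{C}(Z, A \times B) \cong \cat{C}(Z,A) \times \cat{C}(Z,B)$, natural in $Z$.

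The substance is the exponential. For algebras $(A,a)$ and $(B,b)$ I would take as underlying object the $\cat{C}$-exponential $B^A$ and equip it with the structure map $\beta\colon T(B^A) \to B^A$ obtained by currying
\[
  T(B^A) \times A \xrightarrow{\ \mathsf{st}\ } T(B^A \times A) \xrightarrow{\ T\,\mathsf{ev}\ } T B \xrightarrow{\ b\ } B,
\]
where $\mathsf{st}\colon TX \times Y \to T(X\times Y)$ is the strength of $T$ (available since a commutative monad is in particular strong) and $\mathsf{ev}$ is the evaluation map of $\cat{C}$. Writing $\widehat{\beta}$ for this uncurried composite, the evaluation $\mathsf{ev}\colon (B^A,\beta)\times(A,a)\to(B,b)$ and every transpose are automatically plain maps, so the exponential adjunction $\widetilde{\cat{C}^T}(Z\times(A,a),(B,b)) = \cat{C}(Z\times A, B)\cong\cat{C}(Z,B^A)=\widetilde{\cat{C}^T}(Z,(B^A,\beta))$ is just the Cartesian-closed adjunction of $\cat{C}$, and its naturality in $Z$ is inherited.

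The one step that is not purely formal is verifying that $(B^A,\beta)$ really is a $T$-algebra, and I expect this diagram chase to be the main obstacle. The unit law $\beta \circ \eta = id$ is quick: transposing, $\widehat{\beta} \circ (\eta \times id_A) = b \circ T\,\mathsf{ev}\circ \mathsf{st}\circ(\eta\times id_A) = b \circ \eta \circ \mathsf{ev} = \mathsf{ev}$, using the unit law of the strength, naturality of $\eta$, and the algebra-unit law for $b$. The multiplication law $\beta\circ\mu = \beta\circ T\beta$ is the delicate part: after transposing both sides I would reduce each to $b \circ T\widehat{\beta} \circ \mathsf{st}_{T(B^A),A}$, using on the left the compatibility of $\mathsf{st}$ with $\mu$ together with naturality of $\mu$ and the algebra-multiplication law $b\circ\mu = b\circ Tb$, and on the right naturality of $\mathsf{st}$ in its first argument. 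Since currying is a bijection, equality of the two transposes yields the law, and assembling the three constructions gives Cartesian closedness with terminal object $(1,{!})$.
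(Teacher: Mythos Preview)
The paper does not supply its own proof of this theorem; it is quoted as a result of \citet{simpson1992} and used as a black box. Your argument is correct and is essentially the standard one: because $\widetilde{\cat{C}^T}((A,a),(B,b)) = \cat{C}(A,B)$, every universal property is inherited from $\cat{C}$, and the only content is endowing $1$, $A\times B$, and $B^A$ with $T$-algebra structures. Your construction of $\beta$ on $B^A$ via the strength and your verification of the two algebra laws are correct; in particular the reduction of both transposes in the multiplication law to $b \circ T\widehat{\beta} \circ \mathsf{st}_{T(B^A),A}$ goes through exactly as you sketch. One small remark: the diagram chase you outline uses only that $T$ is strong, not that it is commutative, so the Cartesian closedness of $\widetilde{\cat{C}^T}$ already holds for strong monads; commutativity is needed elsewhere in the paper (for the lax monoidal structure on $\iota$), not for this statement.
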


Therefore, we choose the Kleisli category to interpret \shr and the category of $T$-algebras and plain maps to interpret \sep. We only have to show that there is an applicative functor between them.

\begin{Th}
  There exists an applicative functor $\iota : \cat{C}_T \to \widetilde{\cat{C}^T}$.
\end{Th}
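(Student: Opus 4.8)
The plan is to exhibit an applicative (lax monoidal) functor $\iota : \cat{C}_T \to \widetilde{\cat{C}^T}$ as the free-algebra (left adjoint) functor, and verify that it carries the required lax monoidal structure relating the Cartesian product on $\cat{C}_T$ to the product on the category of algebras and plain maps. On objects, I would send $A$ to the free $T$-algebra $(TA, \mu_A)$; on a Kleisli morphism $f : A \to TB$ (i.e.\ a morphism $A \to B$ in $\cat{C}_T$) I would send it to its Kleisli extension $f^* : TA \to TB$, viewed as a plain map in $\widetilde{\cat{C}^T}((TA,\mu_A),(TB,\mu_B))$. Functoriality is exactly the statement that $(-)^*$ respects Kleisli identities and composition, which is one of the equivalent monad axioms recalled in \S\ref{sec:prelim}, so that step is immediate.

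The substance is the lax monoidal structure. The monoidal product on $\cat{C}_T$ is the Cartesian product $\times$ inherited from $\cat{C}$ (this is the CD structure), with unit the terminal object $1$; the monoidal product on $\widetilde{\cat{C}^T}$ is the product algebra on $A \times B$ guaranteed by \citet{simpson1992}, with unit the terminal algebra on $1$. I would supply the natural transformation $\mu_{A,B} : (TA,\mu_A) \otimes (TB,\mu_B) \to (T(A\times B),\mu_{A\times B})$ as the \emph{double strength} of the commutative monad $T$, i.e.\ the canonical map $TA \times TB \to T(A \times B)$ built from the left and right strengths; commutativity of $T$ is precisely the condition that the two ways of building this map (strength-then-costrength versus costrength-then-strength) agree, so the map is well-defined and canonical. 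The unit $\epsilon : 1 \to T1$ is $\eta_1$. I would check that $\mu_{A,B}$ is in fact an algebra morphism (a plain map always is at this level, since $\widetilde{\cat{C}^T}$ only asks for a $\cat{C}$-morphism) and, more importantly, that it is natural in $A$ and $B$ with respect to Kleisli morphisms, which reduces to a strength-naturality calculation.

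The main obstacle is verifying the two lax monoidal coherence diagrams: the associativity pentagon relating $\mu_{A\times B, C}$, $\mu_{A,B}$, and the associator, and the two unit triangles relating $\mu$, $\epsilon = \eta_1$, and the unitors. These are the standard coherence laws for the double strength of a commutative monad, and they follow from the compatibility axioms of the strengths with the monad unit and multiplication together with commutativity; I would discharge them by the usual strength diagram chase rather than by hand. I expect this to be routine but bookkeeping-heavy, which is exactly why the analogous lax monoidal equations for $\M$ in the probabilistic model are deferred to Appendix~\ref{app:proof}; here the same diagrams hold at the level of a generic commutative monad, so I would either cite the general fact that a commutative monad's double strength makes the free-algebra functor (into algebras-and-plain-maps) lax monoidal, or reproduce the coherence chase once.
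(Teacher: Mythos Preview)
Your proposal is correct and matches the paper's proof essentially verbatim: the paper defines $\iota$ on objects by $A \mapsto (TA,\mu_A)$ and on Kleisli arrows by $f \mapsto f^*$, gives the lax monoidal structure as the double strength $\mu \circ T\tau \circ \sigma : TA \times TB \to T(A\times B)$ together with $\eta_1 : 1 \to T1$, and appeals to commutativity of $T$ (and naturality of the discard $A \to 1$) for the coherence laws. Your write-up is in fact more explicit than the paper's about the coherence pentagon and unit triangles, but the construction and the ingredients are identical.
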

\begin{proof}
  The functor acts by sending objects $A$ to the free algebra $(TA, \mu_A)$ and morphisms $f : A \to TB$ to $f^*$. Now, for the lax monoidal structure, consider the natural transformation $\mu \circ T \tau \circ \sigma : T A \times T B \to T(A \times B)$ and $\eta_1 : 1 \to T 1$, where $\tau$ and $\sigma$ are the strengths of $T$. It is possible to show that this corresponds to an applicative functor by using the fact that $T$ is commutative and that the comonoid structure $A \to 1$ is natural.
\end{proof}

%Furthermore, it is possible to show that the inclusion functor $\cat{C}_M \to
%EM(M)$ is lax monoidal.

%Two interesting examples are the probability monad in the category of quasi
%Borel spaces and the name generation monad. In the context of quasi Borel
%spaces, the soundness theorem can still be interpreted as an independence
%result.

\begin{Th}
\label{th:kleisli}
  The triple $(\widetilde{\cat{C}^T}, \cat{C}_T, \iota)$ is a \twolang model.
\end{Th}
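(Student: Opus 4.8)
The plan is to verify, one by one, the three clauses in the definition of a \twolang model for the triple $(\widetilde{\cat{C}^T}, \cat{C}_T, \iota)$, assembling each clause from results already established in this section. Concretely, I must show that the sharing category $\cat{C}_T$ is a distributive CD category with coproducts, that the separating category $\widetilde{\cat{C}^T}$ is a symmetric monoidal closed category with coproducts and with natural discard maps $del_A : A \to I_C$, and that $\iota$ is lax monoidal. I work under the standing hypotheses of this section: $\cat{C}$ is a distributive Cartesian closed category with coproducts, and $T$ is a commutative monad on $\cat{C}$.

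First I would dispatch the sharing layer. That $\cat{C}_T$ is a distributive CD category with coproducts follows by combining two earlier results: \citet{fritz2020} gives that $\cat{C}_T$ is a CD category (using that $\cat{C}$ is Cartesian and $T$ commutative), while the Kleisli lemma above gives that $\cat{C}_T$ inherits coproducts from $\cat{C}$ and is distributive (using that $\cat{C}$ is distributive). Together these are exactly the structure required of $\cat{M}$, with the monoidal product $\otimes_M$ being the one the Cartesian product of $\cat{C}$ induces on the Kleisli category.

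Next I would treat the separating layer $\widetilde{\cat{C}^T}$. By \citet{simpson1992}, since $\cat{C}$ is Cartesian closed and $T$ commutative, $\widetilde{\cat{C}^T}$ is Cartesian closed with terminal object $1$; a Cartesian closed category is in particular symmetric monoidal closed, with monoidal product the categorical product of algebras and unit $I_C = 1$, which matches the concrete semantics in which the separating $\otimes$ is interpreted by the product algebra. The discard maps $del_A : A \to 1$ are then the unique maps to the terminal object, natural by uniqueness. For coproducts I would use the observation that hom-sets in $\widetilde{\cat{C}^T}$ are just $\cat{C}$-hom-sets of underlying objects, so the universal property forces the underlying object of a coproduct to be the coproduct $A + B$ taken in $\cat{C}$ and makes any algebra structure on $A + B$ into a valid coproduct; such a structure is exactly the disjoint-union-of-algebras construction of \citet{simpson1992} that equips $\widetilde{\cat{C}^T}$ to model case analysis. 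Finally, $\iota$ is lax monoidal by the immediately preceding theorem, which already exhibits it as an applicative functor.

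The one step that demands genuine work --- and the main obstacle --- is the coproduct structure on $\widetilde{\cat{C}^T}$. Although the hom-set observation cleanly reduces the problem to placing some algebra structure on the disjoint union $A + B$, actually producing that structure and checking the algebra laws is monad-specific and is where Simpson's construction does the real work; everything else amounts to transporting and bundling the already-established properties of $\cat{C}_T$ and $\widetilde{\cat{C}^T}$ and confirming that the Cartesian monoidal structure on the separating layer is indeed the one interpreting $\otimes$.
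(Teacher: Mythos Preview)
Your proposal is correct and matches the paper's approach: the theorem is stated there without proof, as an immediate consequence of the preceding results in the section, and your assembly of those ingredients (Fritz for the CD structure on $\cat{C}_T$, the Kleisli distributivity lemma, Simpson for the Cartesian closed structure on $\widetilde{\cat{C}^T}$, and the lax monoidality of $\iota$) is exactly what is intended. Your identification of the coproduct structure on $\widetilde{\cat{C}^T}$ as the one genuinely non-trivial step is apt and, if anything, more careful than the paper itself, which leaves that point implicit in this section.
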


\paragraph{Name generation}
Simple concrete examples of commutative effects are probability and
non-determinism, which we saw before.  A more interesting example is the name
generation monad used to give semantics to the $\nu$-calculus, a language that
has a primitive for generating ``fresh'' symbols~\citep{stark1996}. This is a
useful abstraction, for instance, in cryptography, where a new symbol might be a
secret that you might not want to share with adversaries.

A concrete semantics to the $\nu$-calculus was presented by \citet{stark1996}
where the base category is the functor category $[\cat{Inj}, \cat{Set}]$, with
$\cat{Inj}$ being the category of finite sets and injective functions. In this case
the (commutative) name generation monad acts on functors as 
\[
T(A)(s) = \set{(s', a')}{s' \in \cat{Inj}, a' \in A(s + s')}/\sim
\]
where $(s_1, a_1) \sim (s_2, a_2)$ if, and only if, for some $s_0$ there are
injective functions $f_1 : s_1 \to s_0$ and $f_2 : s_2 \to s_0$ such that
$A(id_s + f_1)a_1 = A(id_s + f_2)a_2$. The intuition is that $T(A)$ is a
computation that, given a finite set $s$ of names used, produces the newly
generated names $s'$, and a value $a'$. By Theorem~\ref{th:kleisli} the triple
$(\widetilde{[\cat{Inj}, \cat{Set}]^T}, [\cat{Inj}, \cat{Set}]_T, \iota)$ is a
\twolang model.

Syntactically, we can extend the type grammar of the $\shr$ language with a type
$\mathsf{Name}$ for names, and the $\shr$ language with an operation $\cdot
\vdash \mathsf{fresh} : \mathsf{Name}$ for name generation.  Our soundness
theorem says that for a program of type $\M \tau \otimes \M \tau$, the names
used to compute the first component are \emph{disjoint} from the ones used to
compute the second component.

% \jh{this discussion is too vague, make precise or drop. is this a second name generation model?}
% A similar reasoning principle happens in distributed systems, where the name
% generation primitive is used, for instance, to generate an identifier to a
% server. In this context, it is important to guarantee that the identifiers are
% unique. This property can also be formulated in terms of our soundness theorem.

% Recent work by Sabok et al shows how probability theory provides a good model
% for name generation.

It is also possible to define a variant to this algebra model using the Eilenberg-Moore
category since this category is known to be symmetric monoidal closed under a
few minor hypothesis \citep{azevedodeamorim2022sampling}.

%\footnote{\href{https://ncatlab.org/nlab/show/tensor+product+of+algebras+over+a+commutative+monad#for_monoidal_closed_categories}}.

% Dually, the closed structure is given by an equalizer to the following diagram.

% Using these constructions it is possible to show that the inclusion functor is
% lax monoidal.

% The main drawback of this construction is that the category of algebras require
% the existence of reflective coequalizers, which is not necessarily easy to
% prove.

\begin{Rem}[Call-by-Value and Call-by-Name Semantics of Effects]

Categories of algebras and plain maps were used as a denotational foundation for
call-by-name programming languages while Kleisli categories can be used to
interpret call-by-value languages~\citep{simpson1992}. Thus, the \sep language can be
seen as a CBN interpretation of effects, while \shr can be seen as a CBV
interpretation of effects. The operational interpretation of
$\sample{\textcolor{purple}{\overline{t}}}{\textcolor{orange}{\overline{x}}}{\textcolor{orange}{M}}$
is to force the execution of CBN computations $\overline{t}$, bind the results
to $\overline{x}$, and run them eagerly in the program $M$.
\end{Rem}

\subsubsection{Affine Bunched Typing}

The logic of bunched implications (BI)~\citep{DBLP:journals/bsl/OHearnP99} is a
substructural logic, developed for reasoning about sharing and separation of
resources like pointers to a heap memory~\citep{DBLP:conf/csl/OHearnRY01}, or
permissions to enter some critical section in concurrent
code~\citep{DBLP:conf/iwmm/OHearn07}.  The proof theory of BI gives rise to
functional languages with bunched type systems, where contexts are trees
(so-called \emph{bunches}) rather than lists~\citep{DBLP:journals/jfp/OHearn03}.

% Concretely, there are two context concatenation operations $\Gamma, \Gamma$ and
% $\Gamma; \Gamma$.  The first operation combines two disjoint contexts, whereas
% the second operation combines two contexts that might share resources.

It is natural to wonder how BI is related to \twolang. Semantically, bunched
calculi are interpreted using a \emph{doubly closed category} (DCC), a single category that has both a Cartesian closed
and a (usually distinct) monoidal closed structure.  In order to understand how
these systems are related, let us consider the affine variant of the bunched
calculus, i.e., when the monoidal unit is a terminal object in the semantic category, meaning that there is a
discard operation $A \otimes B \to A$. Given an affine BI model $\cat{C}$,
there is a morphism $A \otimes B \to A \times B$ given by the universal property of products applied to the discard morphisms $A\otimes B \to A$ and $A \otimes B \to B$. Furthermore,
by assumption $I \cong 1$, where $1$ is the unit for the Cartesian product and $I$ is the unit for the monoidal product.
Finally, such a structure makes the lax monoidality diagrams commute,
making the identity functor $id : (\cat{C}, \times, 1) \to (\cat{C}, \otimes, I)$ a lax monoidal functor between
the two monoidal structures over $\cat{C}$. Thus:

\begin{Th}
\label{th:ABItwolang}
  For every model of affine BI $\cat{C}$ the triple $(\cat C, \cat C, id)$ is a model of \twolang.
\end{Th}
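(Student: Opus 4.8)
The plan is to unfold the definition of a \twolang model and verify its three requirements directly, taking the separating category to be the affine BI model $\cat{C}$ equipped with its monoidal closed structure $(\otimes, I)$, the sharing category to be the \emph{same} underlying category re-equipped with its Cartesian closed structure $(\times, 1)$, and the connecting functor $\M$ to be the identity. Since an affine BI model is by definition doubly closed, $(\cat{C}, \otimes, I)$ is already symmetric monoidal closed; it has coproducts because BI models interpret disjunction; and the required family $del_A : A \to I$ is supplied by the affine hypothesis $I \cong 1$, since then $I$ is terminal and $del_A$ is the unique map $A \to 1 \cong I$, which is automatically natural.

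Next I would discharge the requirements on the sharing category $(\cat{C}, \times, 1)$. Every Cartesian category is a CD category: the commutative comonoid on each object is given by the diagonal $A \to A \times A$ and the terminal map $A \to 1$, and the comonoid and naturality diagrams commute by the universal property of products. Coproducts are literally the same coproducts as in $\cat{C}$, and distributivity $A \times (B + C) \cong (A \times B) + (A \times C)$ follows by the same left-adjoint argument used in the distributivity lemma above: Cartesian closure makes $A \times (-)$ a left adjoint, hence coproduct-preserving.

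The substantive step is to equip the identity with a lax monoidal structure $\M : (\cat{C}, \times, 1) \to (\cat{C}, \otimes, I)$, i.e., to exhibit natural maps $\mu_{A,B} : A \otimes B \to A \times B$ and a unit $\epsilon : I \to 1 = \M(1)$. For $\mu_{A,B}$ I take exactly the map constructed in the discussion preceding the theorem: the two discard-induced projections $A \otimes B \xrightarrow{id \otimes del_B} A \otimes I \cong A$ and $A \otimes B \xrightarrow{del_A \otimes id} I \otimes B \cong B$ pair, by the universal property of the product, into a canonical $\langle -, - \rangle : A \otimes B \to A \times B$. For $\epsilon$ I take the unique map to the terminal object $1$, which is an isomorphism by affineness. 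Naturality of $\mu$ reduces, again by the universal property of products, to naturality of $del$ and functoriality of $\otimes$.

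The main obstacle is verifying the lax monoidal coherence diagrams (associativity and the two unit laws), but I expect this to be routine rather than deep. The key observation is that every structure map has a Cartesian product as its codomain, so two parallel maps into it are equal as soon as their composites with the product projections agree. Thus each coherence square can be checked componentwise, where it collapses to the symmetric monoidal coherence of $(\otimes, I)$ together with the uniqueness of maps into the terminal object $1 \cong I$ and the naturality of $del$. Carrying out these componentwise checks completes the verification that $(\cat{C}, \cat{C}, id)$ is a \twolang model.
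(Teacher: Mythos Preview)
Your proposal is correct and follows essentially the same approach as the paper: the paper's proof is the paragraph immediately preceding the theorem, which constructs the lax monoidal structure on $id : (\cat{C},\times,1)\to(\cat{C},\otimes,I)$ via the discard-induced map $A\otimes B\to A\times B$ and the isomorphism $I\cong 1$, then asserts that the coherence diagrams commute. You spell out in addition why $(\cat{C},\times,1)$ is a distributive CD category (via Cartesian closure) and sketch the componentwise verification of the lax monoidal axioms, which the paper leaves implicit, but the argument is the same.
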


\begin{Rem}
From a more abstract point of view, by initiality of the syntactic model of
\twolang (Theorem~\ref{th:initial}) and the theorem above,
there is a translation from \twolang to the
bunched calculus. Thus, affine bunched calculi can be seen as a degenerate
version of our language, where the two layers are collapsed into one.
\end{Rem}

\paragraph*{Syntactic Control of Interference}
To illustrate a useful model of the affine bunched calculus, let us consider
O'Hearn's bunched language SCI+~\citep{DBLP:journals/jfp/OHearn03}. This
language allows allocating memory and reasoning about aliasing, building on
Reynolds' Syntactic Control of Interference (SCI), a linear type system.
In the denotational semantics of SCI+, types are objects in the functor category
$\cat{Set}^{\Pcal(Loc)}$, where $\Pcal(Loc)$ is the poset category of subsets of
$Loc$, an infinite set of names (i.e., memory addresses).  Intuitively, a
presheaf maps a subset of locations to the set of computations that use those
locations.  It is well-known that this category is a model of affine BI: The
Cartesian closed structure is given by the usual construction on presheaves,
while the monoidal closed structure is given by a different product on
presheaves, called the Day convolution~\citep{borceux1994}. 

By Theorem~\ref{th:ABItwolang} the triple $(\cat{Set}^{\Pcal(Loc)},
\cat{Set}^{\Pcal(Loc)}, id)$ is a \twolang model and, therefore, satisfies its
soundness property. To understand what it means in this context, we look at how
the model is defined. Given presheaves $A$ and $B$ over $\Pcal(Loc)$, the
monoidal product $A \otimes B$ is defined as
\begin{align*}
    (A \otimes B)(X) &\triangleq \set{(a, b) \in A(X) \times B(X)}{support(a) \cap support(b) = \emptyset}\\
    (A \otimes B)(f) &\triangleq (Afa,Bfb)
\end{align*}
The $support$ function acts on sets and has a slightly technical definition that
models which resources in $Loc$ were used to produce the set---the interested
reader should consult the original paper~\citep{DBLP:journals/jfp/OHearn03}. At
a high level, the disjointness of the support captures the fact that the memory
locations used to produce $a$ are disjoint from the memory locations used to
produce $b$. Therefore, our soundness theorem guarantees that the components of
closed programs of type $\M \tau_1 \otimes \M \tau_2$ do not share any memory
locations. 

At the syntactic level, the SCI+ calculus shares some similarities with \onelang,
such as having two distinct product types, but it also has many differences. For
instance it has two context concatenation operations, making it possible to
accommodate two different kinds of arrow types, shown in
Figure~\ref{fig:scilang}. Additionally, it features ground types $\mathsf{exp}$,
$\mathsf{cell}$ and $\mathsf{comm}$ for expressions, memory cells and commands,
respectively, and primitive operations to manipulate them.
\begin{figure}[]
  \begin{syntax}
    \category[types]{\tau}
    \alternative{\mathsf{cell}}
    \alternative{\mathsf{exp}}
    \alternative{\mathsf{comm}}
    \alternative{\tau \to \tau}
    \alternative{\tau \lto \tau}
    \alternative{\tau \times \tau}

    \separate

    \category[contexts]{\Gamma}
    \alternative{\cdot}
    \alternative{x : \tau}
    \alternative{\Gamma;\Gamma}
    \alternative{\Gamma,\Gamma}
  \end{syntax}
\hrule
  \caption{Types and Terms: SCI+}
  \label{fig:scilang}
\end{figure}

For our purposes, we are mainly interested in the SCI+ operations presented in
Figure~\ref{fig:scirules}.  The first two rules are for composing commands
either sequentially or in parallel, respectively.  The following two rules are
the ones related to memory manipulation, where the first one allocates a new
memory location and the second one assigns a value to a location. The final two
are the two applications: the first allows the context to be shared, while the
second does not.

A notorious difficulty of running stateful programs in parallel is that there
might be concurrent writes to the same memory location. This is avoided in SCI+
by using the separating concatenation of contexts, guaranteeing that no such
conflict of writes can occur.  When programs are sequentially composed, no such
issues come up and the context may be shared.  When a new memory cell is
allocated using the $\mathsf{new}\, x.M$ syntax, a new variable is bound to the
context representing the new location which is disjoint from the existing ones,
hence the separating context extension.

\begin{figure}
\begin{mathpar}
  \inferrule[]{\Gamma \vdash M : \mathsf{comm} \\ \Gamma \vdash N : \mathsf{comm}}{\Gamma \vdash M; N : \mathsf{comm}}
  \and
  \inferrule[]{\Gamma_1 \vdash M : \mathsf{comm} \\ \Gamma_2 \vdash N : \mathsf{comm}}{\Gamma_1, \Gamma_2 \vdash M || N : \mathsf{comm}}
  \\
  \inferrule[]{\Gamma, x : \mathsf{cell} \vdash M : \mathsf{comm}}{\Gamma \vdash \mathsf{new}\, x. M : \mathsf{comm}}
  \and
  \inferrule[]{\Gamma \vdash M : \mathsf{cell} \\ \Gamma \vdash N : \mathsf{exp}}{\Gamma \vdash M := N : \mathsf{comm}}
  \\
  \inferrule[]{\Gamma \vdash M : \tau_1 \to \tau_2 \\ \Gamma \vdash N : \tau_1}{\Gamma \vdash \app{M}{N} : \tau_2}
  \and
  \inferrule[]{\Gamma_1 \vdash M : \tau_1 \lto \tau_2 \\ \Gamma_2 \vdash N : \tau_1}{\Gamma_1,\Gamma_2 \vdash \app{M}{N} : \tau_2}
\end{mathpar}
\hrule
\caption{Typing Rules: SCI+ (selected)}
\label{fig:scirules}
\end{figure}

\paragraph*{SCI+ in \twolang}
As we have explained, a direct consequence of Theorem~\ref{th:ABItwolang} is
that there is a translation of \twolang into the BI calculus.  However, it is
not a direct consequence that the cell and command operations can be given
similar typing rules and semantics to their original formulation. By slightly
modifying \twolang we can accommodate them as we show in
Figure~\ref{fig:twolangsci}.  Sequential composition is done in the \shr
language while parallel composition is done at the \sep language. The cell
assignment rule is added to the \shr language, since there is no reason to
require that a cell's address and its value are computed using separate
locations. For cell allocation, the original rule requires the new cell to be
disjoint from the existing ones, making it natural to use the \sep language.

%Note that even though the \shr language does
%not have higher-order functions, it is extremely simple syntactically and semantically adding them, at the cost of losing a few of the probabilistic
%models we have presented in the previous sections. 

\begin{figure}
\begin{mathpar}
  \inferrule[Sequential]{\orange\Gamma \vdashni \orange M : \orange {\mathsf{comm}} \\ \orange\Gamma \vdashni \orange N : \orange{\mathsf{comm}}}{\orange\Gamma \vdashni \orange {M; N} : \orange{\mathsf{comm}}}
  \and
  \inferrule[Parallel]{\purple{\Gamma_1} \vdashi \purple t : \M \orange {\mathsf{comm}} \\ \purple\Gamma \vdashi \purple u : \M \orange{\mathsf{comm}}}{\purple{\Gamma_1}, \purple{\Gamma_2} \vdashi \purple {t \, ||\, u} : \M \orange{\mathsf{comm}}}
  \\
  \inferrule[New]{\purple \Gamma, \purple x : \M \orange{\mathsf{cell}} \vdashi \purple t : \M \orange{\mathsf{comm}}}{\purple \Gamma \vdashi \purple{\mathsf{new}\, x. t} : \M \orange{\mathsf{comm}}}
  \and
  \inferrule[Assign]{\orange \Gamma \vdashni \orange M : \orange{\mathsf{cell}} \\ \orange\Gamma \vdashni \orange N : \orange{\mathsf{exp}}}{\orange\Gamma \vdashni \orange{M := N} : \orange{\mathsf{comm}}}
\end{mathpar}
\hrule
\caption{Typing Rules: \twolang extended with SCI primitives}
\label{fig:twolangsci}
\end{figure}

\begin{Ex}[\citet{DBLP:journals/jfp/OHearn03}]
Consider the \twolang program $\app {\app {(\lamb {x\, y}{x := 1; y:=2})} {z}}
{z}$. There are two possible types for the $\lambda$-abstraction.  The
type $\M\orange{\mathsf{cell}} \purple \lto \M\orange{\mathsf{cell}} \purple
\lto \M\orange{\mathsf{comm}}$ requires that the input locations $x$ and $y$
must be disjoint, while the type $\M\orange{(\mathsf{cell} \times
\mathsf{cell})} \purple \lto \M\orange{\mathsf{comm}}$ allows $x$ and $y$ to be
shared.
The former makes the application ill-typed, since the arguments to the
abstraction are the same, while the latter is well-typed. Note, however, that it
is only well-typed because the assignments are sequentially composed. If they
were composed in parallel the program would be ill-typed, just like in
SCI+, since parallel composition requires disjoint memory locations.
\end{Ex}

\paragraph*{A more expressive \twolang}
SCI+ supports more fine-grained sharing/disjointness policies that interleave
the $\times$ and $\otimes$ type constructors---these programs are difficult to
express in \twolang.  For instance, it is not possible to represent the type
$\M(A \otimes B) \times \M (C \otimes D)$ in our language. This limitation is
because there is only one modality mapping the \shr language into the \sep
language, and no modality going the other way. This limitation can also be seen
in the following simple program, which cannot be expressed in \twolang: $x := 1;
(y := 2) \, || \, (z := 3)$. The program is ill-typed because only \shr programs
can be sequentially composed and only \sep programs can be composed in parallel.
In the concrete model, however, the lax monoidal functor is the identity
functor, allowing us to add the clause $\orange{\tau} \coloneqq \purple{\tau}
\smid \cdots$ to the \shr type grammar and making the following typing rule
sound:
\[
\inferrule[]{\purple \Gamma \vdashi \purple t : \purple \tau}{\orange \Gamma \vdashni \purple t : \purple \tau }
\]
which makes it possible to type check the troublesome program above.

\section{Soundness Theorem}
\label{sec:soundness}

So far we have seen two proofs of soundness. For \onelang, we proved soundness
using logical relations~(Theorem~\ref{th:soundkl}). For \twolang with a
probabilistic semantics, we used an observation about algebras for the
distribution monad~(Theorem~\ref{th:soundem}). This proof is slick, but the
strategy does not generalize to other models of \twolang.

Thus, to prove our general soundness theorem for \twolang, we will return to
logical relations. The statement of our soundness theorem is as follows.

\begin{Th}
  If $\cdot \vdashi t : \M \tau_1 \otimes \M \tau_2$ then $\sem{t}$ can be factored as two morphisms $\sem{t} = f_1 \otimes f_2$, where $f_1 : I \to \M \sem{\tau_1}$ and $f_2 : I \to \M \sem{\tau_2}$.
\end{Th}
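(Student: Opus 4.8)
The plan is to prove this by a categorical gluing (subscone) argument, organized so that the statement becomes the top case of a ``fundamental lemma'' for a family of predicates, in the same spirit as the logical-relations proof for \onelang (\Cref{th:soundkl}) but carried out at the categorical level. Concretely, I would glue over the separating category $\cat{C}$ along its global-points functor $\cat{C}(I,-) : \cat{C} \to \cat{Set}$: objects of the glued category $\cat{G}$ are pairs $(A, P)$ with $P \subseteq \cat{C}(I,A)$, and a morphism $(A,P)\to(B,Q)$ is a $\cat{C}$-morphism $h : A \to B$ with $h\circ p \in Q$ for every $p \in P$. The predicate that matters is the one assigned to the separating product: on $A \otimes B$ I take the \emph{separating conjunction} $P * Q$, defined as the image of $P \times Q$ under the canonical pairing $\cat{C}(I,A)\times\cat{C}(I,B) \to \cat{C}(I, A\otimes B)$, $(f,g)\mapsto (f\otimes g)\circ \lambda_I^{-1}$. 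At each base type $\M\sem{\tau}$ I let the predicate be all of $\cat{C}(I,\M\sem{\tau})$, so no separation is imposed on an unstructured boxed value. With these choices, the predicate on $\M\sem{\tau_1}\otimes\M\sem{\tau_2}$ is exactly the set of morphisms that factor as $f_1 \otimes f_2$, which is the property we want.

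Given this setup, the theorem reduces to showing that the glued construction is again a \twolang model and that the evident forgetful functor $U : \cat{G}\to\cat{C}$ is a strict morphism of \twolang models. Since $t$ is closed, $\sem{t}$ is interpreted in $\cat{G}$ as a morphism out of the glued unit $(I, P_I)$; taking $P_I \ni id_I$ (e.g.\ $P_I = \cat{C}(I,I)$) and unfolding the definition of a glued morphism forces $\sem{t} = \sem{t} \circ id_I$ to lie in the predicate on $\M\sem{\tau_1}\otimes\M\sem{\tau_2}$, i.e.\ to factor as a tensor. Formally I would invoke initiality of the syntactic model (\Cref{th:initial}): the interpretations in $\cat{G}$ and in $\cat{C}$ agree after applying $U$, so it suffices to exhibit the glued model.

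Building the glued model is where the work lies. I must equip $\cat{G}$ with all the structure from the definition of a \twolang model: the unit $(I,P_I)$ and product $(A,P)\otimes(B,Q) = (A\otimes B, P * Q)$; the discard maps, which lift because $del_A\circ p \in \cat{C}(I,I)=P_I$ automatically; coproducts via disjoint union of predicates, with distributivity inherited from $\cat{C}$; and crucially the internal hom, for which I would use the logical-relations definition $P \lto Q = \{\, h : I \to (A\lto B) \mid \forall p\in P,\ \mathsf{ev}\circ(h\otimes p)\circ\lambda_I^{-1} \in Q \,\}$ and check that this is right adjoint to $\otimes$ in $\cat{G}$. In parallel I glue over the CD category $\cat{M}$, using a sharing (pullback-style) conjunction for its Cartesian product, producing $\cat{G}'$; I then lift the lax monoidal functor $\M$ to $\widetilde\M : \cat{G}' \to \cat{G}$, checking that its structure maps $\mu$ and $\epsilon$ respect the chosen predicates. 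Verifying that each typing rule of \Cref{fig:twosem} preserves predicates then amounts to checking that the corresponding structure exists in $\cat{G}$.

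The main obstacle, I expect, is twofold. First, the monoidal-closed part: confirming that the logical-relations internal hom really yields a right adjoint to the separating-conjunction tensor in $\cat{G}$ is the classically delicate step in gluing proofs for linear logic, and it must be reconciled with the affine discard maps. Second, the \textsc{Sample} rule is the genuinely new case, since it is the only place where the two layers interact through $\M$; showing it preserves predicates requires that the lifted $\widetilde\M$ is lax monoidal in the glued world, i.e.\ that $\mu_{A,B}$ sends the separating conjunction of $\M$-predicates into the $\M$-image of the sharing conjunction. Once the lax monoidal structure of $\M$ is shown to interact correctly with $*$, the remaining cases (variables, abstraction and application, the two sums, and the sharing-layer rules) follow routinely by the same bookkeeping as in \Cref{th:soundkl}.
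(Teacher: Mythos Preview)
Your proposal is correct and follows essentially the same route as the paper: glue the separating category $\cat{C}$ along the global-sections functor $\cat{C}(I,-)$, take the separating conjunction as the predicate on $\otimes$, the logical-relations predicate on $\lto$, the full hom-set at $\M\tau$, and conclude via initiality of the syntactic model.

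There is one simplification in the paper's argument that you miss. You propose to glue the sharing category $\cat{M}$ in parallel (your $\cat{G}'$, with a ``pullback-style'' conjunction for $\times$) and then lift $\M$ to $\widetilde\M : \cat{G}' \to \cat{G}$. The paper instead leaves $\cat{M}$ completely untouched: the glued model is $(\cat{Gl}(\cat{C}), \cat{M}, \widetilde\M)$ with the \emph{same} $\cat{M}$, and $\widetilde\M : \cat{M} \to \cat{Gl}(\cat{C})$ simply sends $X$ to $(\M X, \cat{C}(I,\M X))$. This works precisely because you have already chosen the predicate at $\M\tau$ to be the full hom-set: any $\cat{C}$-morphism into $\M\tau$ lands in the predicate automatically, so the lax monoidal structure maps $\mu$ and $\epsilon$ are trivially predicate-preserving, and the \textsc{Sample} rule---which you flag as the ``genuinely new case''---becomes essentially free rather than delicate. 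Your parallel gluing of $\cat{M}$ would also go through, but it forces you to verify that all the \shr-layer connectives ($\times$, $+$, let, case) lift to $\cat{G}'$, which is extra work with no payoff for the conclusion you want. The paper's choice isolates exactly where the logical relation is doing something: only in the \sep-layer.
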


Logical relations are frequently used to prove metatheoretical properties of
type theories and programming languages. However, they are usually used in
concrete settings, i.e., for a concrete model where we can define the logical
relation explicitly. In our case, however, this approach is not enough, since we
are working with an abstract categorical semantics of \twolang. Thus, we will
leverage the categorical treatment of logical relations, called \emph{Artin
gluing}, a construction originally used in topos
theory~\citep{johnstone2007,hyland2003}.

A detailed description of this technique is beyond the scope of this paper.
However, we highlight some of the essential aspects here. We have already
introduced our class of models for \twolang. Let $\cdot \vdashi t :
\underline{\tau}$ be a well-typed program. For every concrete model $(\cat{C},
\cat{M}, \M)$, we want to show that the interpretation $\sem{t}$ in this model
satisfies some properties. At a high level, there are three steps to the gluing
argument:
\begin{enumerate}
  \item Define a category of models of \twolang, and show that every
    interpretation $\sem{\cdot}$ can be encoded as a map from the
    \emph{syntactic} model $\cat{Syn}$ to $(\cat{C}, \cat{M}, \M)$; where the
    syntactic model has types as objects and typing derivations (modulo the
    equational theory of \twolang) as morphisms. This property follows by
    showing that the syntactic model is initial.
  \item Define a triple $(\cat{Gl}(\cat{C}), \cat{M}, \widetilde{\M})$---where
    objects of the category $\cat{Gl}(\cat{C})$ are pairs $(A, X \subseteq
    \cat{C}(I, A))$, the subsets $X$ are viewed as predicates on $A$, and
    morphisms preserve these predicates---and show that this structure is a
    model of \twolang. We call this the \emph{glued} model and there is an
    obvious forgetful model morphism $(\cat{Gl}(\cat{C}), \cat{M}, \widetilde{\M})
    \to (\cat{C}, \cat{M}, \M)$.
  \item Using initiality, define a map $\bsem{\cdot}$ from the syntactic model $\cat{Syn}$ to the
    glued model. The data of this map associates every \sep-type
    $\underline{\tau}$ in \twolang to an object $(A_{\underline{\tau}},
    X_{\underline{\tau}} \subseteq \cat{C}(I, A_{\underline{\tau}}))$;
    intuitively, $A_{\underline{\tau}} \in \cat{C}$ is the interpretation of
    $\underline{\tau}$ under $\sem{\cdot}$, and the subset
    $X_{\underline{\tau}}$ encodes the logical relation at type
    $\underline{\tau}$, so this map defines a logical relation. The functor
    $\underline{\tau}$ and its codomain encode the logical relations proof.
\end{enumerate}
Finally, we can use $\bsem{\cdot}$ to map any morphism in the syntactic
category, i.e., well-typed term $\cdot \vdashi t : \underline{\tau}$, to an
element of $X_{\underline{\tau}}$. By initiality of $\cat{Syn}$, $\sem{t}$ also
is an element of $X_{\underline{\tau}}$, completing the proof by logical
relations proof. We defer the details to Appendix~\ref{app:proof}.

\section{Related Work}
\label{sec:rw}

\paragraph*{Linear logics and probabilistic programs.}
A recent line of work uses linear logic as a powerful framework to provide semantics
for probabilistic programming languages. Notably,
\citet{DBLP:journals/jacm/EhrhardPT18} show that a probabilistic version of the
coherence-space semantics for linear logic is fully abstract for probabilistic
PCF with discrete choice, and \citet{stablecones} provide a denotational
semantics inspired by linear logic for a higher-order probabilistic language
with continuous random sampling; probabilistic versions of call-by-push-value
have also been developed~\citep{tasson2019}. Linear type systems have also been
developed for probabilistic properties, like almost sure
termination~\citep{DBLP:journals/toplas/LagoG19} and differential
privacy~\citep{DBLP:conf/icfp/ReedP10,DBLP:conf/lics/AmorimGHK19}.

As we have mentioned, our categorical model for \twolang is inspired by models
of linear logic based on monoidal adjunctions, most notably Benton's
LNL~\citep{DBLP:conf/csl/Benton94}. From a programming languages perspective,
these models decompose the linear $\lambda$-calculus with exponentials in two
languages with distinct product types each: one is a Cartesian product and the
other is symmetric monoidal. The adjunction manifests itself in adding
functorial type constructor in each language, similar to our $\M$ modality.
These two-level languages are very similar to \twolang, and indeed it is
possible to show that every LNL model is a \twolang model. At the same time, the
class of models for \twolang is much broader than LNL---none of the models
presented in Section~\ref{sec:models} are LNL models.

\paragraph*{Higher-order programs and effects.}
There is a very large body of work on higher-order programs effects, which we
cannot hope to summarize here. The semantics of \onelang is an instance of
Moggi's Kleisli semantics, from his seminal work on monadic
effects~\citep{DBLP:journals/iandc/Moggi91}; the difference is that our
one-level language uses a linear type system to enforce probabilistic
independence. 

Another well-known work in this area is Call-by-Push-Value
(CBPV)~\citep{levy2001}. It is a two-level metalanguage for effects which
subsumes both call-by-value and call-by-name semantics. Each level has a
modality that takes from one level to the other one. There is a resemblance to
\twolang, but the precise relationship is unclear---none of our concrete models
are CBPV models.

Our two-level language \twolang can also be seen as an application of a novel
resource interpretation of linear logic developed by
\citet{azevedodeamorim2022sampling}, which uses an applicative modality to
guarantee that the linearity restriction is only valid for computations, not
values. Our focus is on separation and effects: we show how different sum types
for effectful computations can be naturally accommodated in this framework, we
consider a more general class of categorical models, and we prove a soundness
theorem ensuring separation for effectful computations.

\paragraph{Bunched type systems.}
Our focus on sharing and separation is similar to the motivation of another
substructural logic, called the logic of bunched implicates
(BI)~\citep{DBLP:journals/bsl/OHearnP99}.  Like our system, BI features two
conjunctions modeling separation of resources, and sharing of resources. Like in
\onelang, these conjunctions in BI belong to the same language. Unlike our work,
BI also features two implications, one for each conjunction. The leading
application of BI is in separations logic for concurrent and heap-manipulating
programs~\citep{DBLP:conf/csl/OHearnRY01,DBLP:conf/iwmm/OHearn07}, where pre-
and post-conditions are drawn from BI.

Most applications of BI use a truth-functional, Kripke-style
semantics~\citep{DBLP:journals/tcs/PymOY04}. By considering the proof-theoretic
models of BI, \citet{DBLP:journals/jfp/OHearn03} developed a bunched type system
for a higher-order language. Its categorical semantics is given by a
\emph{doubly closed category}: a Cartesian closed category with a separate
symmetric monoidal closed structure. While \citet{DBLP:journals/jfp/OHearn03}
showed different models of this language for reasoning about sharing and
separation in heaps, few other concrete models are known. It is not clear how to
incorporate effects into the bunched type system; in contrast, our models can
reason about a wide class of monadic effects. 

There are natural connections to both of our languages. Our language \onelang
resembles O'Hearn's system, with two differences. First, \onelang only has a
multiplicative arrow, not an additive arrow---as we described in
\Cref{sec:langindep}, it is not clear how to support an additive arrow in
\onelang without breaking our primary soundness property. Second, contexts in
\onelang are flat lists, not tree-shaped bunches; it would be interesting to use
bunched contexts to represent more complex dependency relations.

Our stratified language \twolang is also similar to O'Hearn's system. Though our
categorical model only has a single multiplicative arrow, in the \sep-layer,
many---but not all---of our concrete models also support an additive arrow, in
the \shr-layer.  Furthermore, by assuming a single category, instead of two
categories as in our approach, in BI it is possible to layer the connectives
$\times$ and $\otimes$ to create intricate dependency structures. In contrast
our two-layer language only allows to create dependencies of the form $\M(\tau
\times \cdots \times \tau) \otimes \cdots \otimes \M(\tau \times \cdots \times
\tau)$.  At the same time, it is not clear how the two sum types in \twolang
would function in a bunched type system.

\paragraph{Probabilistic independence in higher-order languages.}
There are a few probabilistic functional languages with type systems that model
probabilistic independence. Probably the most sophisticated example is due
to~\citet{darais2019}, who propose a type system combining linearity,
information-flow control, and probability regions for a probabilistic functional
language. \citet{darais2019} show how to use their system to implement and
verify security properties for implementations of oblivious RAM (ORAM). Our work
aims to be a core calculus capturing independence, with a clean categorical
model.

\citet{DBLP:journals/toplas/VesgaRG21} present a probabilistic functional
language embedded in Haskell, aiming to verify accuracy properties of programs
from differential privacy.  Their system uses a taint-based analysis to
establish independence, which is required to soundly apply concentration bounds,
like the Chernoff bound. Unlike our work, \citet{DBLP:journals/toplas/VesgaRG21}
do not formalize their independence property in a core calculus.

\paragraph{Probabilistic separation logics.}
A recent line of work develops separation logics for first-order, imperative
probabilistic programs, using formulas from the logic of bunched implications to
represent pre- and post-conditions. Systems can reason about probabilistic
independence~\citep{barthe2019}, but also refinements like conditional
independence~\citep{bao2021}, and negative association~\citep{bao2022}. These
systems leverage different Kripke-style models for the logical assertions; it is
unclear how these ideas can be adapted to a type system or a higher-order
language. There are also quantitative probabilistic separation
logics~\citep{DBLP:journals/pacmpl/BatzKKMN19,DBLP:conf/esop/BatzFJKKMN22}.

\section{Conclusion and Future Directions}
\label{sec:conc}

We have presented two linear, higher-order languages with types that can capture
probabilistic independence, and other notions of separation in effectful
programs. We see several natural directions for further investigation.

\paragraph{Other variants of independence.}
In some sense, probabilistic independence is a trivial version of dependence: it
captures the case where there is no dependence whatsoever between two random
quantities. Researchers in statistics and AI have considered other notions that
model more refined dependency relations, such as conditional independence,
positive association, and negative dependence
(e.g.,~\citep{DBLP:journals/rsa/DubhashiR98}). Some of these notions have been
extended to other models besides probability; for instance,
\citet{DBLP:conf/ecai/PearlP86} develop a theory of \emph{graphoids} to
axiomatize properties of conditional independence. It would be interesting to
see whether any of these notions can be captured in a type system.

\paragraph{Bunched type systems for independence.}
Our work bears many similarity to work on bunched logics; most notably, bunched
logics feature an additive and a multiplicative conjunction. While bunched
logics have found strong applications in Hoare-style logics, the only bunched
type system we are aware of is due to \citet{DBLP:journals/jfp/OHearn03}. This
language features a single layer with two product types and also two function
types, and the typing contexts are tree-shaped bunches, rather than flat lists.
Developing a probabilistic model for a language with a richer context structure
would be an interesting avenue for future work.

\paragraph{Non-commutative effects.}
Our concrete models encompass many kinds of monadic effects, but we only support effects
modeled by commutative monads. Many common effects are modeled by
non-commutative monads, e.g., the global state monad. It may be possible to
extend our language to handle non-commutative effects, but we would likely need
to generalize our model and consider non-commutative logics.

\paragraph{Towards a general theory of separation for effects.}
We have seen how in the presence of effects, constructs like sums and products
come in two flavors, which we have interpreted as sharing and separate. Notions
of sharing and separation have long been studied in programming languages and
logic, notably leading to separation logics. We believe that there should be a
broader theory of separation (and sharing) for effectful programs, which still
remains to be developed.

\bibliography{mybib}

%% Appendix
\appendix

\section{Categorical Soundness Proof: Details}
\label{app:proof}

\subsection{Category of Models}

A model for \twolang is given by a CD category $\cat{M}$ with coproducts, a SMCC
$\cat{C}$ with coproducts and a lax monoidal functor $\M: \cat{M} \to \cat{C}$.
A morphism between two models $(\cat{M}_1, \cat{C}_1, \M_1)$ and $(\cat{M}_2,
\cat{C}_2, \M_2)$ is a pair of functors $(F : \cat{M}_1 \to \cat{M}_2, G :
\cat{C}_1 \to \cat{C}_2)$ that preserves the logical connectives. By defining
morphism composition component-wise and the pair $(id_{\cat{C}}, id_{\cat{M}})$
as the identity morphism, this structure constitutes a category which we call
$\cat{Mod}$. 

In categorical treatments of type theories it is important to show that
the equational theory is a sound approximation of the categorical semantics.
In the case of \twolang, since the language does not use any fancy type
theoretic constructions, the soundness property is straightforward to prove
by induction of the typing derivations.

\begin{Th}
  \label{th:eqsound}
  Let $(\cat{C}, \cat{M}, \M)$ be a \twolang model. If $\Gamma \vdashni M \equiv N : \tau$ then $\sem{M} = \sem{N}$
  and if $\Gamma \vdashi t \equiv u : \tau$ then $\sem{t} = \sem{u}$.
\end{Th}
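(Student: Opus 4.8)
The plan is to prove both statements by a single simultaneous induction on the derivation of the equivalence judgment, reading $\equiv$ (on each layer) as the least congruence containing the axiom schemes of \Cref{fig:twoeq}. The two layers must be handled together rather than separately, because the $\mathsf{sample}$ construct links them: the clause for $\sem{\sample{t_i}{x_i}{M}}$ in \Cref{fig:twosem} depends on both the $\sep$-denotations $\sem{t_i}$ and the $\shr$-denotation $\bsem{M}$, and the last $\mathsf{sample}$ axiom equates a $\sep$-term using a $\shr$-level $\letin{y}{M}{N}$. The closure part of the induction is immediate: reflexivity, symmetry, and transitivity hold because equality of morphisms in $\cat{C}$ and $\cat{M}$ is an equivalence relation, and each congruence rule (replacing a subterm by an equivalent one inside a pair, projection, injection, case, abstraction, application, let, or $\mathsf{sample}$) is sound because the corresponding clause of \Cref{fig:twosem} interprets the term as a \emph{fixed} categorical operation applied to the denotations of its immediate subterms. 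Since all of these operations ($\otimes$, $+$-copairing, $\mathsf{cur}$, $\mathsf{ev}$, $\mu$, $\M(-)$, $copy$, $del$) are well-defined on morphisms, equal denotations of subterms yield equal composites. All the real content therefore lies in verifying the axiom schemes one at a time.

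The key supporting lemma I would prove first is a \emph{semantic substitution lemma}: the interpretation of $\subst{N}{x}{M}$ coincides with the composite the semantics assigns to the corresponding binder, namely precomposition with $\sem{M}$ (copied into the shared context via $copy$ in the $\shr$ layer, and tensored into the linear context in the $\sep$ layer). This is proved by induction on the term being substituted into, using naturality of $copy$ and $del$ in the $\shr$ case and the coherence isomorphisms of the symmetric monoidal structure together with the affine discard maps $del$ in the $\sep$ case. With this lemma, the $\beta$-style axioms reduce to universal properties already built into a \twolang model: $\app{(\lamb x t)}{u} \equiv \subst{t}{x}{u}$ follows from the monoidal-closed identity $\mathsf{ev} \circ (\mathsf{cur}(\sem{t}) \otimes \sem{u}) = \sem{t} \circ (id \otimes \sem{u})$; the two pairs of $\caseof{}{}{}$ reductions follow from the coproduct computation rule $[f_1, f_2] \circ in_i = f_i$ composed with the distributivity isomorphism used in the $+$/$\oplus$-Elim clauses; and $\letin{x_1 \otimes x_2}{t_1 \otimes t_2}{u} \equiv \subst{\subst{u}{x_1}{t_1}}{x_2}{t_2}$ follows from bifunctoriality of $\otimes$ and two applications of the substitution lemma.

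The remaining $\shr$-layer structural axioms are lighter. The unit laws $\letin{x}{t}{x} \equiv t$ and $\letin{x}{x}{t} \equiv t$ follow from the commutative-comonoid unit coherences relating $copy$ and $del$ in the CD category $\cat{M}$; the commuting conversion $\letin{y}{(\letin{x}{M_1}{M_2})}{M_3} \equiv \letin{x}{M_1}{\letin{y}{M_2}{M_3}}$ follows from associativity of composition together with naturality of $copy$. The $\mathsf{sample}$ unit law $\sample{t}{x}{x} \equiv t$ reduces to $\M$ preserving identities and the lax-unit coherence involving $\epsilon$.

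The step I expect to be the main obstacle is the $\mathsf{sample}$ associativity axiom $\sample{(\sample{t}{x}{M})}{y}{N} \equiv \sample{t}{x}{(\letin{y}{M}{N})}$, since this is precisely where the full lax monoidal structure of $\M$ is exercised. Unfolding both sides via the $\mathsf{sample}$ clause produces two composites built from $\mu$, functoriality of $\M$ applied to $\bsem{M}$ and $\bsem{N}$, and the $\shr$-layer interpretation of the inner let; matching them requires chasing the lax-monoidal associativity diagram for $\mu$ against $\M$-functoriality and the already-established $\shr$ soundness for the nested let. These are exactly the long coherence diagrams for $\M$ deferred to this appendix, and they are the only place a genuine (rather than routine) diagram chase is needed. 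Once those coherence conditions are in hand, this case closes and the entire induction goes through mechanically.
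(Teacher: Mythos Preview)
Your proposal is correct and follows the same approach the paper indicates: the paper does not give a detailed proof of this theorem, merely stating that it is ``straightforward to prove by induction'' and noting that the only real subtlety lies in the $\mathsf{sample}$ equations, which require the lax monoidal coherence diagrams for $\M$. You have spelled out considerably more structure (the simultaneous induction across both layers, the substitution lemma, the specific categorical laws discharging each axiom) than the paper does, but the underlying strategy---induction over the congruence, with the $\mathsf{sample}$-associativity coherence as the only nontrivial case---is exactly what the paper sketches.
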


The main subtlety is that we have to be a bit more precise in the presentation of the equational
theory for the \sep language. Note that the $\mathsf{sample}$ construct can sample simultaneously
from any number of distributions, while applicative functors only provide
a binary sampling operator. Formally this is resolved by restricting $\mathsf{sample}$
to two arguments and add the following rules to the equational theory.
\begin{mathpar}
  \inferrule[]{\Gamma_i \vdashi t_i : \M \tau_i \\ i \in \{ 1, 2, 3\}}{\Gamma_1, \Gamma_2, \Gamma_3 \vdashi \sample{t_1, (\sample{t_2, t_3}{x_2, x_3}{(x_2, x_3))}}{x_1, y}{(x_1, \pi_1\, y, \pi_2 \, y)} \equiv\\ \sample{(\sample{t_1, t_2}{x_1, x_2}{(x_1, x_2)), t_3}}{y, x_3}{( \pi_1\, y, \pi_2 \, y, x_3)} : \M(\tau_1 \times \tau_2 \times \tau_3)}
  \\
  \inferrule[]{\Gamma \vdashi t : \M \tau}{\Gamma \vdashi \sample{t, (\sample{\_}{\_}{()})}{x, y}{x} \equiv t : \M \tau}
\\
  \inferrule[]{\Gamma \vdashi t : \M \tau}{\Gamma \vdashi \sample{(\sample{\_}{\_}{()}), t}{x, y}{y} \equiv t : \M \tau}
\end{mathpar}

Note that even though the rule looks intimidating, it is basically the lax monoidal commutativity diagram in syntax form,
which says that the sample operation is associative and, as a consequence, there is a unique
way of defining the $n$-ary operation $\sample{t_1, \dots t_n}{x_1,\dots, x_n}{M}$, for $n \geq 2$.

An important \twolang model is the syntactic object $\cat{Syn}$, which is a
triple $(\cat{Syn}_{lin}, \cat{Syn}_{CD}, \M)$, where $\cat{Syn}_{CD}$ is the
syntactic category of CD categories with coproducts while $\cat{Syn}_{lin}$ is
the syntactic category of symmetric monoidal closed categories with coproducts
and an applicative modality and $\M$ is the type constructor for the modality.
Concretely each of these categories have types as objects and morphisms are
programs with one free variables modulo the equational theories presented in
\Cref{fig:twoeq}. It follows by a simple inspection that $\cat{Syn}$ is a
\twolang model.
\begin{Th}
  $\cat{Syn}$ is a \twolang model.
\end{Th}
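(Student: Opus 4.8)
The plan is to realize $\cat{Syn}$ as a term model, where in each layer the objects are types and the morphisms are terms-in-context with a single free variable, quotiented by the equational theory of \Cref{fig:twoeq} together with the extra $\mathsf{sample}$ equations. Concretely, a $\cat{Syn}_{CD}$-morphism $\orange{\tau_1} \to \orange{\tau_2}$ is an equivalence class of derivations $\orange{x} : \orange{\tau_1} \vdashni \orange{M} : \orange{\tau_2}$, and a $\cat{Syn}_{lin}$-morphism $\purple{\underline{\tau}_1} \to \purple{\underline{\tau}_2}$ is a class of $\purple{x} : \purple{\underline{\tau}_1} \vdashi \purple{t} : \purple{\underline{\tau}_2}$; identities are the variable terms and composition is substitution. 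First I would check that these data form categories: well-definedness of composition on equivalence classes is a routine substitution-and-congruence lemma, and associativity and unitality hold because composition is substitution, which is associative and unital at the meta level (reflected, in the \shr layer, by the let-associativity equation and the unit equations $\letin{x}{t}{x} \equiv t$ and $\letin{x}{x}{t} \equiv t$).

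Next I would equip each category with the structure demanded by the definition of a \twolang model, verifying in each case that the relevant categorical law is witnessed by a specific equation -- this is the content of the author's ``simple inspection.'' For $\cat{Syn}_{CD}$ I take the monoidal product to be $\orange{\times}$; the copy and delete maps come from the diagonal term $\orange{(x, x)}$ and from the discarding permitted by the affine \textsc{Var} rule, and the comonoid and CD coherence axioms reduce to the $\beta\eta$-equations for products. Coproducts are given by $\orange{+}$ with injections and case analysis, whose universal property follows from the case-$\beta$ equations, and distributivity is precisely the isomorphism already invoked in the $+$ \textsc{Elim} rule of \Cref{fig:twosem}. For $\cat{Syn}_{lin}$ I take $\purple{\otimes}$ for the symmetric monoidal product (associators, unitors and symmetry are definable terms made coherent by the tensor-$\beta$ equation for $\letin{x_1 \otimes x_2}{t_1 \otimes t_2}{u}$), $\purple{\lto}$ for the closed structure (evaluation and currying validated by function $\beta\eta$), and $\purple{\oplus}$ for coproducts. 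The required discard maps $del_A : A \to I$, natural in $A$, together with the monoidal unit, arise from the affine weakening built into the \sep \textsc{Var} rule and the nullary $\mathsf{sample}$; naturality is again an equational check.

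Finally I would treat the modality and its lax monoidal structure, which is where the real content lies. On objects $\M$ is the type constructor $\M(\orange{-})$; on a morphism $\orange{x} : \orange{\tau_1} \vdashni \orange{M} : \orange{\tau_2}$ it returns $\purple{z} : \M\orange{\tau_1} \vdashi \sample{z}{x}{M} : \M\orange{\tau_2}$. Functoriality is immediate from the two $\mathsf{sample}$ equations of \Cref{fig:twoeq}: the unit equation $\sample{t}{x}{x} \equiv t$ gives preservation of identities, and the composition equation $\sample{(\sample{t}{x}{M})}{y}{N} \equiv \sample{t}{x}{(\letin{y}{M}{N})}$ gives preservation of composition. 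The lax monoidal coherence maps $\mu_{A, B} : \M A \otimes \M B \to \M(A \times B)$ and $\epsilon : I \to \M\orange{1}$ are given by the binary and nullary $\mathsf{sample}$ constructs. I expect the main obstacle to be the coherence of this lax monoidal structure: one must check naturality of $\mu$ in both arguments, and -- more delicately -- reconcile the fact that the source $\mathsf{sample}$ construct is $n$-ary whereas lax monoidality is a binary notion. This is exactly the mismatch the paper resolves in \Cref{app:proof} by restricting $\mathsf{sample}$ to two arguments and adding the associativity and unit equations, which are ``the lax monoidal commutativity diagram in syntax form.'' Once those equations are available, the associativity and unit coherence diagrams for $(\M, \mu, \epsilon)$ hold by construction, and the three clauses of a \twolang model -- $\cat{Syn}_{lin}$ a symmetric monoidal closed category with coproducts and del maps, $\cat{Syn}_{CD}$ a distributive CD category with coproducts, and $\M$ lax monoidal -- are all established.
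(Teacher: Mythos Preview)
Your proposal is correct and is exactly the kind of elaboration the paper's one-line ``simple inspection'' gestures at: the paper gives no proof beyond that phrase, and your term-model construction---types as objects, one-variable terms modulo the equational theory as morphisms, with the type formers supplying the categorical structure and the $\mathsf{sample}$ equations witnessing lax monoidality of $\M$---is the standard and intended content of that inspection. In particular, your observation that the extra binary-$\mathsf{sample}$ associativity and unit equations from \Cref{app:proof} are precisely the syntactic form of the lax monoidal coherence diagrams matches the paper's own remark verbatim.
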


\begin{Th}
\label{th:initial}
  $\cat{Syn}$ is the initial object of $\cat{Mod}$.
\end{Th}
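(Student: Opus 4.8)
The plan is to prove initiality by the standard argument in categorical semantics: for each model, the interpretation is exhibited as the unique structure-preserving map out of $\cat{Syn}$.

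First, for existence, given an arbitrary model $(\cat{M}, \cat{C}, \M)$ I would take the candidate morphism to be the pair $(\bsem{\cdot}, \sem{\cdot})$, where $\bsem{\cdot} : \cat{Syn}_{CD} \to \cat{M}$ interprets the \shr-layer and $\sem{\cdot} : \cat{Syn}_{lin} \to \cat{C}$ interprets the \sep-layer, both defined exactly as in the semantics of Figure~\ref{fig:twosem}. On objects the two maps are defined by recursion on types, sending each type former to its categorical counterpart (e.g. $\underline\tau_1 \otimes \underline\tau_2 \mapsto \sem{\underline\tau_1} \otimes \sem{\underline\tau_2}$ and $\M\tau \mapsto \M\bsem{\tau}$), and on morphisms---which are terms-in-context taken modulo the equational theory---by recursion on typing derivations. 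Checking that this pair is a morphism in $\cat{Mod}$ amounts to three verifications: (i) the assignment respects the equational theory, so that it descends to equivalence classes of terms, which is exactly Theorem~\ref{th:eqsound}; (ii) functoriality, namely that variable terms map to identities and that composition in $\cat{Syn}$---which is substitution---is sent to composition in the target, proved by a substitution lemma by induction on the term substituted into; and (iii) preservation of all the logical structure (products, coproducts, the monoidal closed structure, and the lax monoidal modality $\M$ linking the layers), which holds clause-by-clause by the very definition of the semantics.

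Second, for uniqueness, suppose $(F, G)$ is any morphism in $\cat{Mod}$ out of $\cat{Syn}$. Since $F$ and $G$ are required to preserve all type formers (up to the fixed coherence isomorphisms), their action on objects is forced by recursion on the syntax of types and must therefore coincide with $(\bsem{\cdot}, \sem{\cdot})$. Likewise every term former is built from the preserved categorical structure, so the value of $(F,G)$ on a morphism is determined, clause by clause, by the same recursion; hence $(F,G) = (\bsem{\cdot}, \sem{\cdot})$ and the morphism out of $\cat{Syn}$ is unique.

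The hard part will be the interaction between the $n$-ary $\mathsf{sample}$ construct and the lax monoidal functor $\M$. A lax monoidal functor supplies only the binary map $\mu_{A,B}$ and the unit $\epsilon$, whereas the syntax permits sampling from arbitrarily many distributions at once. The associativity and unit rules for $\mathsf{sample}$ added to the equational theory above guarantee that every bracketing of the $n$-ary sample denotes the same morphism---essentially the lax monoidal coherence diagrams rendered in syntax---so that the interpretation is well-defined and the substitution lemma goes through. Verifying this coherence is the technical heart of the argument; the remaining cases are routine, if tedious, inductions over derivations.
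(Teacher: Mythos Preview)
Your proposal is correct and follows essentially the same approach as the paper: define the pair of functors on objects by induction on types and on morphisms by induction on typing derivations via Figure~\ref{fig:twosem}, invoke Theorem~\ref{th:eqsound} for well-definedness modulo the equational theory, and argue uniqueness by induction on derivations. Your write-up is in fact more explicit than the paper's terse sketch, in particular about functoriality (the substitution lemma) and the coherence of the $n$-ary $\mathsf{sample}$, both of which the paper leaves implicit.
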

\begin{proof}
  Let $(\cat{C}, \cat{M}, \M)$ be a model. It is possible to construct a
  morphism $\sem{\cdot} : \cat{Syn} \to (\cat{C}, \cat{M}, \M)$ by defining two functors $\sem{\cdot}_1 : \cat{Syn}_{lin} \to \cat{C}$ and
  $\sem{\cdot}_2 : \cat{Syn}_{CD} \to \cat{M}$. Since $\cat{Syn}_{lin}$ and $\cat{Syn}_{CD}$ are freely generated, the action of the functors
  on objects is characterized by a simple induction on the types.
  The action on morphisms is defined by induction on the typing derivation using
  \Cref{fig:twosem}.

  The proof that this function is well-defined follows from Theorem~\ref{th:eqsound}. Uniqueness follows by assuming the existence of two semantics 
  and showing, by induction on the typing derivation, that they are equal.
\end{proof}

\subsection{Glued category}

We construct the logical relations category by using a comma category. Formally,
a comma category along functors $F :  \cat{C_1} \to \cat{D}$ and $G : \cat{C_2}
\to \cat{D}$ has triples $(A, X, h)$ as objects, where $A$ is an $\cat{C_1}$
object, $X$ is an $\cat{C_2}$ objects and $h : FA \to G X$, and its morphisms
$(A, X, h) \to (A', X', h')$ are pairs $f : A \to A'$ and $g : X \to X'$  making
certain diagrams commute. In Computer Science applications of gluing, it is
usually assumed that $F$ is the identity functor and $\cat{D} = \cat{Set}$.
Furthermore, to simplify matters, sometimes it is also assumed that we work with
full subcategories of the glued category, for instance we can assume that we
only want objects such that $A \to G B$ is an injection, effectively
representing a subset of $G B$.

Therefore, in the setting we are interested in a glued category along a functor
$G : \cat{C} \to \cat{Set}$ has pairs $(A, X\subseteq G(A))$ as objects and its
morphisms $(A, X) \to (B, Y)$ is a $\cat{C}$ morphism $f : A \to B$ such that
$G(f)(X) \subseteq Y$. Note that this condition can be seen as a more abstract
way of phrasing the usual logical relations interpretation of arrow types:
mapping related things to related things. At an intuitive level we want to use
the functor $G$ to map types to predicates satisfied by its inhabitants.

Now, we are ready to define the glued category and show that it constitutes a
model for the language. Given a triple $(\cat{M}, \cat{C}, \M)$ we define the
triple $(\cat{M}, \cat{Gl(C)}, \widetilde{\M})$, where the objects of
$\cat{Gl(C)}$ are pairs $(A \in \cat{C}, X \subseteq \cat{C}(I, A))$ and the
morphisms are $\cat{C}$ morphisms that preserve $X$, i.e. we are gluing $\cat{C}$
along the global sections functor $\cat{C}(I, -)$. The functor $\M : \cat{M}
\to \cat{C}$ is lifted to a functor $\widetilde{\M} : \cat{C} \to \cat{Gl(C)}$
by mapping objects $X$ to $(\M\, X, \cat{C}(I, \M\, X))$ and by mapping
morphisms $f$ to $\M \, f$.\footnote{%
  Note that its predicate set is every $\cat{C}$ morphism $I \to \M\, X$,
  similar to how ground types are interpreted in usual logical relations proofs.}
Now we have to show that the triple is indeed a model of our language.

Something that simplifies our proofs is that morphisms in $\cat{Gl(C)}$ are
simply morphisms in $\cat{C}$ with extra structure and composition is kept the
same. Therefore,  once we establish that a $\cat{C}$ morphism is also a
$\cat{Gl(C)}$ morphism all we have to do in order to show that a certain
$\cat{Gl(C)}$ diagram commutes is to show that the respective $\cat{C}$ diagram
commutes.

\begin{Th}
  $\cat{Gl(C)}$ is a SMCC with coproducts and with a natural transformation $del$.
\end{Th}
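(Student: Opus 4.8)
The plan is to realize $\cat{Gl(C)}$ as the \emph{subscone} of $\cat{C}$ along the global sections functor $G = \cat{C}(I,-)$ and to transport the symmetric monoidal closed structure, the coproducts, and the discard maps of $\cat{C}$ along the forgetful functor $(A,X)\mapsto A$. The guiding principle is the observation made just above the statement: a morphism of $\cat{Gl(C)}$ is a $\cat{C}$-morphism that happens to preserve predicates, and composition is computed in $\cat{C}$. Consequently every coherence identity (pentagon, triangle, hexagon, the triangle identities for the closed adjunction, and the coproduct equations) holds in $\cat{Gl(C)}$ as soon as it holds in $\cat{C}$; the only genuine work is to equip each type former with a predicate on the underlying object, and to check that each structural $\cat{C}$-morphism preserves the relevant predicates, i.e.\ lifts to $\cat{Gl(C)}$.

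Concretely, first I would fix the data. Write $\iota_I : I \xrightarrow{\sim} I\otimes I$ for the inverse unitor. I take the monoidal unit to be $(I,\{id_I\})$; the tensor $(A,X)\otimes(B,Y) := (A\otimes B,\ \{(x\otimes y)\circ\iota_I \mid x\in X,\ y\in Y\})$; the internal hom $(A,X)\lto(B,Y) := (A\lto B,\ \{h \mid \forall x\in X,\ \mathsf{ev}\circ(h\otimes x)\circ\iota_I\in Y\})$, which is precisely the ``maps related arguments to related results'' clause; the binary coproduct $(A,X)+(B,Y) := (A+B,\ in_1\circ X\cup in_2\circ Y)$ with initial object $(0,\emptyset)$; and the discard $del_{(A,X)} := del_A$. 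In each case the predicate is by construction a subset of the global sections of the carrier, so these are bona fide objects of $\cat{Gl(C)}$.

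Next I would verify that the structural morphisms lift. The associator, unitors, and symmetry of $\cat{C}$ preserve the tensor predicates by direct calculation using bifunctoriality and unitor coherence (for instance, for the left unitor $\lambda_A$ one uses naturality to compute $\lambda_A\circ(id_I\otimes x)\circ\iota_I = x$, so it carries the unit-tensor predicate into $X$, and dually for its inverse). The coproduct injections obviously land in $in_1\circ X\cup in_2\circ Y$; for the copairing $[f,g]$ of two predicate-preserving maps one uses $[f,g]\circ in_1 = f$ and $[f,g]\circ in_2 = g$ to see that it preserves the union predicate; and the unique map out of $(0,\emptyset)$ is vacuously predicate-preserving. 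The discard $del_A$ is natural in $\cat{Gl(C)}$ because it is natural in $\cat{C}$, and preserving predicates amounts to checking $del_A\circ x\in\{id_I\}$ for $x\in X$; by naturality of $del$ the composite $del_A\circ x$ equals $del_I$, so this becomes a compatibility condition between the discard and the unit predicate, the only fact about the model beyond its SMCC structure that this step requires.

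The heart of the argument, and the step I expect to be the main obstacle, is the closed structure: showing that the internal hom defined above is right adjoint to $\otimes$ in $\cat{Gl(C)}$. Since the underlying adjunction already holds in $\cat{C}$, it suffices to show that the bijection $\cat{C}(A\otimes B, C)\cong\cat{C}(A, B\lto C)$ restricts to a bijection on predicate-preserving maps. Unfolding the definitions, a map $f$ is a $\cat{Gl(C)}$-morphism $(A,X)\otimes(B,Y)\to(C,Z)$ exactly when $f\circ(x\otimes y)\circ\iota_I\in Z$ for all $x\in X,\ y\in Y$, whereas $\mathsf{cur}(f)$ is a $\cat{Gl(C)}$-morphism $(A,X)\to(B,Y)\lto(C,Z)$ exactly when $\mathsf{ev}\circ((\mathsf{cur}(f)\circ x)\otimes y)\circ\iota_I\in Z$ for all such $x,y$. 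These two conditions coincide, because bifunctoriality of $\otimes$ and the defining equation $\mathsf{ev}\circ(\mathsf{cur}(f)\otimes id_B)=f$ give $\mathsf{ev}\circ((\mathsf{cur}(f)\circ x)\otimes y)\circ\iota_I = f\circ(x\otimes y)\circ\iota_I$. In particular evaluation is a $\cat{Gl(C)}$-morphism and currying preserves predicates, so $\cat{Gl(C)}$ is monoidal closed; the care needed here is purely in matching the tensor and hom predicates through the coherence isomorphisms, after which the SMCC axioms, the coproduct universal properties, and the naturality of $del$ are all inherited verbatim from $\cat{C}$.
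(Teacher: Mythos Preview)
Your argument follows the same subscone construction as the paper: objects are pairs $(A, X \subseteq \cat{C}(I,A))$, the tensor, hom and coproduct predicates you write down coincide with the paper's (you are simply more explicit about the inverse unitor $\iota_I$), and your verification of the closed structure via $\mathsf{ev}\circ(\mathsf{cur}(f)\otimes id) = f$ is exactly the paper's appeal to the (co)unit description of the adjunction.

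The one substantive difference, and the place where your argument has a genuine gap, is the choice of unit object. You take the unit of $\cat{Gl(C)}$ to be $(I,\{id_I\})$, whereas the paper takes it to be $(I,\cat{C}(I,I))$. With your choice, $del_{(A,X)}$ is a $\cat{Gl(C)}$-morphism only if $del_A\circ x\in\{id_I\}$ for every $x\in X$; by naturality $del_A\circ x = del_I$, so you need $del_I = id_I$. You record this as ``a compatibility condition between the discard and the unit predicate'', but it is not among the hypotheses on a \twolang model---the definition only requires a natural family $del_A : A\to I$, not that $I$ be terminal---and it \emph{fails} in at least two of the paper's concrete models: in both $\cat{PCoh}$ and $\cat{PBanLat}_1$ the discard is the constant-zero morphism, so $del_I = 0 \neq id_I$. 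With the singleton unit predicate, $del$ does not lift to $\cat{Gl(C)}$ in those cases and the glued triple is not a \twolang model. The paper's choice $\cat{C}(I,I)$ sidesteps the problem entirely: whatever $del_A\circ x$ evaluates to, it is some endomorphism of $I$ and hence automatically lies in the unit predicate.
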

\begin{proof}
  Let $(A, X)$ and $(B, Y)$ be $\cat{Gl(C)}$ objects, we define $(A, X) \otimes
  (B, Y) = (A \otimes B, \set{f : I \to A \otimes B}{ f = f_A \otimes f_B, f_A
  \in X, f_B \in Y})$. The monoidal unit is given by $(I, \cat{C}(I, I))$ and the
  natural transformation $del$ is the same one as the one in $\cat{C}$, which
  is a morphism in $\cat{Gl(C)}$ because $X_I = \cat{C}(I, I)$.

  Let $(A, X)$ and $(B, Y)$ be $\cat{Gl(C)}$ objects, we define $(A, X) \lto (B,
  Y) = (A \lto B, \set{f : I \to (A \lto B)}{\forall f_A \in X_A, \epsilon_B
  \circ (f_A \otimes f) \in X_B}$, where $\epsilon_B : (A \lto B) \otimes A \to
  B$ is the counit of the monoidal closed adjunction.

  To show $A \otimes ( - ) \dashv A \lto (-)$ we can use the (co)unit
  characterization of adjunctions, which corresponds to the existence of two
  natural transformations $\epsilon_B : A \otimes (A \lto B) \to B$ and $\eta_B
  : B \to A \lto (A \otimes B)$ such that $1_{A \otimes -} = \epsilon (A \otimes
  -) \circ (A \otimes -) \eta$ and $1_{A \lto -} = (A \lto -) \epsilon \circ
  \eta (A \lto -)$, where $1_{F}$ is the identity natural transformation between
  $F$ and itself. By choosing these natural transformations to be the same as in
  $\cat{C}$, since the adjoint equations hold for them by definition, all we
  have to do is show that they are also $\cat{Gl(C)}$ morphisms, which follows
  by unfolding the definitions.

  Finally, we can show that $\cat{Gl(C)}$ has coproducts. Let $(A_1, X_1)$ and
  $(A_2, X_2)$ be $\cat{Gl(C)}$ objects, we define $(A_1, X_1) \oplus (A_2, X_2) =
  (A_1 \oplus A_2, \set{\mathsf{in_i}\, f_i}{f_i \in X_i})$.  To show that it
  satisfies the universal property of sum types. Let $f_1 : A_1 \to B$ and $f_2 :
  A_2 \to B$ be $\cat{Gl(C)}$ morphisms. Consider the $\cat{C}$ morphism $[f_1,
  f_2]$. We want to show that this morphism is also a $\cat{Gl(C)}$ morphism.
  Consider $g \in X_{A_1 \oplus A_2}$ which, by assumption, $g = \mathsf{in}_1
  g_1$ or $g = \mathsf{in}_2$. By case analysis and the facts $f_i \circ g_i \in
  Y$ and $[f_1, f_2] \circ \mathsf{in}_i g_i = f_i \circ g_i$ we can conclude that
  $[f_1, f_2]$ is indeed a $\cat{Gl(C)}$ morphism.
\end{proof}

These constructions are known in the categorical logic literature \citep{hyland2003},
but since it is simple enough we think that it is helpful to also present it here.
Since every construction so far uses the same objects as the ones in $\cat{C}$,
it is possible to show that the forgetful functor $U: \cat{Gl(C)} \to \cat{C}$
preserves every type constructor and is a $\cat{Mod}$ morphism. Next, we have to
show that $\widetilde{\M}$ is lax monoidal which follows from the fact that
$\mu$ and $\epsilon$ preserve the plot sets, by a simple unfolding of the
definitions. We can now easily conclude that the lax monoidality diagrams
commute, since composition is the same and $\M$ is lax monoidal.

Thus, the glued category is a model.

\begin{Th}
  The triple $(\cat{M}, \cat{Gl(C)}, \widetilde{\M})$ is a $\cat{Mod}$ object.
\end{Th}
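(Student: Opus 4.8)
The plan is to verify the three conditions in the definition of a \twolang model for the triple $(\cat{M}, \cat{Gl(C)}, \widetilde{\M})$, and to observe that two of them are already in hand. The CD category in the first slot is literally the category $\cat{M}$ from the original model $(\cat{M}, \cat{C}, \M)$, so it remains a distributive CD category with coproducts with no further work. The requirement that $\cat{Gl(C)}$ be a symmetric monoidal closed category with coproducts carrying a natural transformation $del$ is exactly the statement of the preceding theorem, which we have already established; note in particular that distributivity of $\cat{Gl(C)}$ need not be checked separately, since for the linear slot it follows automatically from the monoidal-closed-plus-coproducts structure by the earlier lemma. Thus the only genuinely new obligation is that $\widetilde{\M} : \cat{M} \to \cat{Gl(C)}$ is lax monoidal.

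To equip $\widetilde{\M}$ with a lax monoidal structure, I would reuse the structure maps $\mu_{A,B} : \M A \otimes \M B \to \M(A \otimes B)$ and $\epsilon : I \to \M I$ of the original functor $\M$, which is lax monoidal by hypothesis. The key point is that a $\cat{Gl(C)}$ morphism is nothing more than a $\cat{C}$ morphism that carries the source predicate into the target predicate, and that $\widetilde{\M}$ assigns to every object the \emph{full} predicate of global sections $\cat{C}(I, \M A)$. Consequently any $\cat{C}$ morphism whose codomain is $\widetilde{\M}$ of something automatically satisfies the predicate-preservation condition, so both $\mu_{A,B}$ and $\epsilon$ lift to $\cat{Gl(C)}$ morphisms of the required types $\widetilde{\M}(A) \otimes \widetilde{\M}(B) \to \widetilde{\M}(A \otimes B)$ and $I \to \widetilde{\M}(I)$. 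Naturality of $\mu$ and the coherence (associativity and unit) diagrams for lax monoidality then transfer directly from $\cat{C}$ to $\cat{Gl(C)}$: since $\widetilde{\M}$ acts on morphisms exactly as $\M$ does and composition in $\cat{Gl(C)}$ agrees with composition in $\cat{C}$, each diagram commutes in $\cat{Gl(C)}$ precisely because it commutes in $\cat{C}$.

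The step requiring the most care is confirming that the structure maps really do live in $\cat{Gl(C)}$ --- that is, the predicate-preservation (or ``plot set'') check --- but as just noted this collapses to the triviality that the codomain predicates under $\widetilde{\M}$ are full. A useful organizing device, which I would record as a sanity check, is that the forgetful functor $U : \cat{Gl(C)} \to \cat{C}$ preserves every connective on the nose and is a $\cat{Mod}$ morphism; this guarantees that once the $\cat{C}$-level diagrams commute their $\cat{Gl(C)}$ counterparts do as well, and it frames the whole construction as lying over the original model. Assembling the inherited CD structure, the previously established SMCC-with-coproducts-and-$del$ structure on $\cat{Gl(C)}$, and the lax monoidality of $\widetilde{\M}$ yields that $(\cat{M}, \cat{Gl(C)}, \widetilde{\M})$ is a $\cat{Mod}$ object, as required.
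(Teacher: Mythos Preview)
Your proposal is correct and follows essentially the same approach as the paper: the paper likewise dispatches the $\cat{M}$ and $\cat{Gl(C)}$ obligations by pointing to prior results, and handles lax monoidality of $\widetilde{\M}$ by noting that $\mu$ and $\epsilon$ preserve the plot sets (your observation that the codomain predicates are full) and that the coherence diagrams commute because composition in $\cat{Gl(C)}$ coincides with composition in $\cat{C}$. If anything, your write-up is more explicit than the paper's one-sentence sketch.
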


There is a forgetful map from the glued model to the original model.

\begin{Th}
  There is a $\cat{Mod}$ morphism $U : (\cat{M}, \cat{Gl(C)}, \widetilde{\M})
  \to (\cat{M}, \cat{C}, \M)$.
\end{Th}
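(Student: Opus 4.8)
The plan is to exhibit the required model morphism as the pair $(id_{\cat{M}}, U)$, where $U : \cat{Gl(C)} \to \cat{C}$ is the forgetful functor sending an object $(A, X)$ to its underlying $\cat{C}$-object $A$ and acting as the identity on morphisms. Since the source model $(\cat{M}, \cat{Gl(C)}, \widetilde{\M})$ and the target model $(\cat{M}, \cat{C}, \M)$ share the very same CD category $\cat{M}$ in their sharing layer, the first component of the morphism is simply $id_{\cat{M}}$, and every condition involving the sharing layer is satisfied trivially. The entire content of the proof thus reduces to checking that $U$ is a suitably structure-preserving functor between the two separating layers.

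First I would verify that $U$ preserves the SMCC structure, the coproducts, and the delete maps of $\cat{C}$, and that it does so \emph{strictly}. This is immediate from the explicit glued constructions established just above: the glued tensor, internal hom, coproduct, and monoidal unit all reuse the underlying $\cat{C}$-object as their first component, so for instance $U((A,X) \otimes (B,Y)) = A \otimes B = U(A,X) \otimes U(B,Y)$, and likewise for $\lto$, $\oplus$, and $I$. Because morphisms in $\cat{Gl(C)}$ are just $\cat{C}$-morphisms carrying an extra preservation property, with composition inherited from $\cat{C}$, the functor $U$ is strict monoidal, preserves the closed structure and coproducts on the nose, and sends the delete map of $\cat{Gl(C)}$ (which is literally the delete map of $\cat{C}$) to itself.

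Next I would check compatibility with the modality, the only condition that might appear to require work. By definition $\widetilde{\M}(X) = (\M X,\ \cat{C}(I, \M X))$ and $U$ discards the second component, so $U \circ \widetilde{\M} = \M = \M \circ id_{\cat{M}}$ holds strictly on objects; on morphisms both functors send $f$ to $\M f$, so the coherence square commutes exactly. Moreover, since the lax monoidal structure maps of $\widetilde{\M}$ were defined to be the same $\mu$ and $\epsilon$ as those of $\M$ (as observed immediately before this theorem, where one checks that $\mu$ and $\epsilon$ preserve the predicate sets), applying $U$ returns precisely the lax monoidal structure of $\M$. Hence the pair $(id_{\cat{M}}, U)$ respects the applicative-functor data linking the two layers.

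There is no genuine obstacle here: every glued construction is layered directly on top of the corresponding $\cat{C}$-construction, so $U$ preserves all connectives strictly, and the coherence diagrams demanded of a $\cat{Mod}$ morphism collapse to the coherence diagrams in $\cat{C}$ that were already established. The only point deserving a moment's care is confirming that ``preserves the logical connectives'' in the definition of $\cat{Mod}$ is witnessed by strict preservation rather than preservation merely up to isomorphism; this is exactly what the shared-underlying-object design of $\cat{Gl(C)}$ guarantees, so the verification is routine.
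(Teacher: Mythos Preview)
Your proposal is correct and follows exactly the approach the paper indicates: the paper does not give a standalone proof of this theorem but earlier remarks that ``since every construction so far uses the same objects as the ones in $\cat{C}$, it is possible to show that the forgetful functor $U: \cat{Gl(C)} \to \cat{C}$ preserves every type constructor and is a $\cat{Mod}$ morphism.'' Your write-up simply unpacks this observation in detail, taking the pair $(id_{\cat{M}}, U)$ and checking strict preservation of each connective and of the lax monoidal data, which is precisely what the paper intends.
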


Finally, by initiality of $\cat{Syn}$, we can prove

\begin{Th}
  There is a $\cat{Mod}$ morphism $\bsem{\cdot} : \cat{Syn} \to (\cat{M}, \cat{Gl(C)}, \widetilde{\M})$.
\end{Th}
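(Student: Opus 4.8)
The plan is to obtain $\bsem{\cdot}$ essentially for free from initiality, since all the substantive work has already been discharged in constructing the glued model. By the preceding theorem the triple $(\cat{M}, \cat{Gl(C)}, \widetilde{\M})$ is a genuine $\cat{Mod}$ object, and by \Cref{th:initial} the syntactic model $\cat{Syn}$ is the initial object of $\cat{Mod}$. Hence there is a unique $\cat{Mod}$ morphism $\bsem{\cdot} : \cat{Syn} \to (\cat{M}, \cat{Gl(C)}, \widetilde{\M})$, and this is exactly the map we want. No separate induction on typing derivations is needed at this point: the universal property packages the action on types and the action on derivations into a single model morphism automatically.

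To explain what the resulting morphism records, I would unwind its two components. On objects, $\bsem{\cdot}$ assigns to each \sep-type $\underline{\tau}$ a glued object $(A_{\underline{\tau}}, X_{\underline{\tau}})$ whose first component $A_{\underline{\tau}}$ is the underlying $\cat{C}$-interpretation and whose second component $X_{\underline{\tau}} \subseteq \cat{C}(I, A_{\underline{\tau}})$ is the logical predicate at that type. Requiring $\bsem{\cdot}$ to preserve all the connectives forces these predicates to be assembled compatibly with $\otimes$, $\oplus$, $\lto$ and $\widetilde{\M}$, precisely according to the glued structure defined above. On morphisms, the fact that every well-typed term is sent to a $\cat{Gl(C)}$ morphism is exactly the statement that the term preserves the logical predicates, i.e.\ the fundamental lemma of the logical relation.

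The key consistency check, which I would emphasize, is that the forgetful morphism $U$ satisfies $U \circ \bsem{\cdot} = \sem{\cdot}$. Both $U \circ \bsem{\cdot}$ and $\sem{\cdot}$ are $\cat{Mod}$ morphisms from $\cat{Syn}$ to $(\cat{M}, \cat{C}, \M)$, so by the uniqueness half of initiality they must coincide. This identity guarantees that the underlying object $A_{\underline{\tau}}$ really is the original interpretation $\sem{\underline{\tau}}$, and hence that the denotation of any term lands inside its predicate $X_{\underline{\tau}}$; without it, the logical relation would describe some unrelated object rather than the semantics we care about, and the final factorization result would not follow.

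I do not expect a genuine obstacle here: the difficulty has been front-loaded into showing that the glued triple is a $\cat{Mod}$ object, namely equipping $\cat{Gl(C)}$ with its symmetric monoidal closed structure, coproducts, and $del$ map, and verifying lax monoidality of $\widetilde{\M}$, all of which was handled in the preceding theorems. The only point requiring care is to confirm that $\cat{Mod}$ morphisms are required to preserve all the relevant structure (both monoidal products, coproducts, the $del$ transformation, and the modality $\M$), so that the glued model genuinely lies in the same category as $\cat{Syn}$ and initiality applies verbatim.
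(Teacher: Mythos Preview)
Your proposal is correct and matches the paper's approach exactly: the paper simply invokes initiality of $\cat{Syn}$ (\Cref{th:initial}) together with the fact that $(\cat{M}, \cat{Gl(C)}, \widetilde{\M})$ is a $\cat{Mod}$ object. Your additional discussion of $U \circ \bsem{\cdot} = \sem{\cdot}$ anticipates the paragraph that follows the theorem in the paper, but the core argument is identical.
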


With this map in hand, we may now construct a functor $U \circ \bsem{\cdot} :
\cat{Syn} \to (\cat{M}, \cat{C}, \M)$ which, by initiality of $\cat{Syn}$, is
equal to the functor $\sem{\cdot}$, as illustrated by Figure~\ref{fig:gluing}.
\begin{figure}
% https://q.uiver.app/?q=WzAsNCxbMCwwLCJTeW4iXSxbMCwxLCJHbChDKSJdLFsxLDEsIkMiXSxbMCwzXSxbMCwxLCJcXHNlbXtcXGNkb3R9IiwyXSxbMCwyLCJcXHNlbXtcXGNkb3R9Il0sWzEsMiwiVSIsMl1d
\[\begin{tikzcd}
	\cat{Syn} \\
	{(\cat{M}, \cat{Gl(C), \widetilde\M})} & (\cat{M}, \cat{C}, \M)
	\arrow["{\bsem{\cdot}}"', from=1-1, to=2-1]
	\arrow["{\sem{\cdot}}", from=1-1, to=2-2]
	\arrow["U"', from=2-1, to=2-2]
\end{tikzcd}\]
\hrule
\caption{The essence of the soundness proof}
\label{fig:gluing}
\end{figure}

\subsection{General Soundness Theorem}

\begin{Th} \label{th:gensound}
  If $\cdot \vdashi t : \underline{\tau}$, then $\sem{t} \in
  X_{\underline{\tau}}$.
\end{Th}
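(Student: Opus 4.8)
The plan is to obtain the conclusion directly from the commuting triangle of Figure~\ref{fig:gluing}, reading the logical relation off the predicate component of the glued interpretation. The key observation is that a closed \sep-term is a point out of the monoidal unit, that a model morphism sends such a point to a predicate-preserving map in $\cat{Gl(C)}$, and that the unit's predicate contains the identity, which forces membership. All of the genuinely laborious work---verifying that $(\cat{M}, \cat{Gl(C)}, \widetilde{\M})$ is a \twolang model and that $\cat{Syn}$ is initial---has already been discharged in the preceding theorems, so this last step is essentially a corollary.

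First I would observe that a closed well-typed term $\cdot \vdashi t : \underline\tau$ is, in the syntactic model, a $\cat{Syn}_{lin}$-morphism $t : I \to \underline\tau$ out of the monoidal unit, since the empty \sep-context is interpreted as $I$ in every model by the context clauses of Figure~\ref{fig:twosem}. Next I would apply the $\cat{Gl(C)}$-component of the $\cat{Mod}$-morphism $\bsem{\cdot} : \cat{Syn} \to (\cat{M}, \cat{Gl(C)}, \widetilde{\M})$, obtaining a $\cat{Gl(C)}$-morphism $\bsem{t} : \bsem{I} \to \bsem{\underline\tau}$. Because $\bsem{\cdot}$ preserves the monoidal unit, $\bsem{I} = (I, \cat{C}(I,I))$, and by construction $\bsem{\underline\tau} = (A_{\underline\tau}, X_{\underline\tau})$ with underlying object $A_{\underline\tau} = \sem{\underline\tau}$.

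Then I would unfold the definition of morphisms in the glued category: $\bsem{t}$ is a $\cat{C}$-morphism $g : I \to A_{\underline\tau}$ satisfying $g \circ h \in X_{\underline\tau}$ for every $h \in \cat{C}(I,I)$, so instantiating $h = \mathrm{id}_I$ yields $g \in X_{\underline\tau}$. Finally, the triangle $U \circ \bsem{\cdot} = \sem{\cdot}$, which holds by initiality of $\cat{Syn}$ (\Cref{th:initial}) together with the fact that $U$ is the identity on underlying $\cat{C}$-arrows, identifies $g$ with $\sem{t}$. Combining the last two points gives $\sem{t} \in X_{\underline\tau}$, as required.

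The hard part is not in this final argument but in the scaffolding it rests on: the real content of the logical-relations proof is absorbed into showing that $\cat{Gl(C)}$ is a genuine \twolang model, i.e., that every type constructor lifts to the predicate-carrying category, which is precisely what makes $\bsem{\cdot}$ exist by initiality. Within the present step, the only subtlety to watch is the bookkeeping that identifies a closed term with a point $I \to \underline\tau$ and that matches the underlying arrow of $\bsem{t}$ with $\sem{t}$ through the triangle; both hinge on $\bsem{\cdot}$ preserving the monoidal unit and on uniqueness from initiality. Once $\sem{t} \in X_{\underline\tau}$ is in hand, the headline soundness theorem follows by specializing to $\underline\tau = \M\tau_1 \otimes \M\tau_2$ and unfolding the tensor clause of the predicate, which exhibits the factorization $\sem{t} = f_1 \otimes f_2$.
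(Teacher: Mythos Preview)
Your proposal is correct and follows essentially the same route as the paper's proof: use the commuting triangle $U \circ \bsem{\cdot} = \sem{\cdot}$ from initiality, observe that $\bsem{t}$ is a $\cat{Gl(C)}$-morphism out of $(I,\cat{C}(I,I))$, and instantiate at $\mathrm{id}_I \in X_I$ to conclude $\sem{t} = \bsem{t} \circ \mathrm{id}_I \in X_{\underline\tau}$. The paper compresses all of this into a single line, but the argument is identical.
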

\begin{proof}
  We know that $\sem{\cdot} = U \circ \bsem\cdot$ and that $\bsem{t}$ is a
  $\cat{Gl(C)}$ morphism. As such we have that $\sem{t} = \bsem{t} = \bsem{t}
  \circ id_I \in X_{\underline{\tau}}$, since, by definition, $id_I \in X_I$.
\end{proof}

Theorem~\ref{th:soundness} follows immediately, as a corollary.

\begin{Cor}
  If $\cdot \vdashi t : \M \tau_1 \otimes \M \tau_2$ then $\sem{t}$ can be factored as two morphisms $\sem{t} = f_1 \otimes f_2$, where $f_1 : I \to \M \sem{\tau_1}$ and $f_2 : I \to \M \sem{\tau_2}$.
\end{Cor}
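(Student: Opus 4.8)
The plan is to prove soundness by logical relations, but since the semantics lives in an abstract model $(\cat{C}, \cat{M}, \M)$ rather than a concrete category like $\cat{Set}$, I would carry out the logical relation categorically, via Artin gluing. The relation to cut out is clear from the statement: at type $\tau_1 \otimes \tau_2$ the ``good'' closed terms are exactly the \emph{factorizable} global elements, i.e.\ those maps $I \to \sem{\tau_1} \otimes \sem{\tau_2}$ of the form $f \otimes g$. A naive induction on typing derivations fails because the arrow and the modality $\M$ force one to strengthen the statement to all types simultaneously and to track how related inputs produce related outputs; packaging this as gluing makes the ``fundamental lemma'' fall out of initiality rather than a hand-rolled induction.

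Concretely, I would proceed in three steps. First, organize the models of \twolang into a category $\cat{Mod}$, whose morphisms are pairs of functors preserving all the connectives, and show that the syntactic model $\cat{Syn}$ (types as objects, derivations modulo the equational theory of \Cref{fig:twoeq} as morphisms) is \emph{initial}; this makes each interpretation $\sem{\cdot}$ the unique $\cat{Mod}$-morphism out of $\cat{Syn}$. Second, glue $\cat{C}$ along its global-sections functor $\cat{C}(I,-) : \cat{C} \to \cat{Set}$: objects of $\cat{Gl}(\cat{C})$ are pairs $(A, X \subseteq \cat{C}(I,A))$ and morphisms are $\cat{C}$-maps that carry $X$ into the target predicate. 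The design choice that drives the whole proof is the tensor: I would set the predicate of $(A,X) \otimes (B,Y)$ to be $\{\, h : I \to A \otimes B \mid h = f \otimes g,\ f \in X,\ g \in Y \,\}$, so that ``being in the relation at a tensor type'' literally means ``factorizes''. I then have to verify that $(\cat{M}, \cat{Gl}(\cat{C}), \widetilde{\M})$ is again a \twolang model --- an SMCC with coproducts and discard, together with the lifted lax monoidal functor $\widetilde{\M}$ --- and that the evident forgetful functor $U : \cat{Gl}(\cat{C}) \to \cat{C}$ is a $\cat{Mod}$-morphism.

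Third, I would conclude by diagram chasing over initiality. Initiality of $\cat{Syn}$ supplies a unique model morphism $\bsem{\cdot} : \cat{Syn} \to (\cat{M}, \cat{Gl}(\cat{C}), \widetilde{\M})$, which assigns to each \sep-type $\underline{\tau}$ a predicate $X_{\underline{\tau}}$ --- the logical relation --- and sends each well-typed term into it. Composing with $U$ gives a model morphism $\cat{Syn} \to (\cat{C}, \cat{M}, \M)$ that, again by initiality, must equal $\sem{\cdot}$. Hence for a closed $t$ we get $\sem{t} = U(\bsem{t}) = \bsem{t} \circ id_I \in X_{\tau_1 \otimes \tau_2}$, and by the definition of the tensor predicate this says exactly $\sem{t} = f \otimes g$ for some $f : I \to \sem{\tau_1}$ and $g : I \to \sem{\tau_2}$; in the $\M\tau_1 \otimes \M\tau_2$ form the codomains come out as $\M\sem{\tau_1}$ and $\M\sem{\tau_2}$ as required.

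I expect the main obstacle to be verifying that the glued triple is genuinely a model --- in particular lifting the monoidal-closed structure. Defining the internal hom predicate of $(A,X) \lto (B,Y)$ as the maps $h : I \to (A \lto B)$ sending every $f \in X$ to $\mathsf{ev} \circ (f \otimes h) \in Y$, and then checking that the tensor--hom adjunction of $\cat{C}$ restricts to $\cat{Gl}(\cat{C})$ (its unit and counit are $\cat{Gl}(\cat{C})$-morphisms), is the delicate part, as is confirming that $\widetilde{\M}$ stays lax monoidal so that the \textsc{Sample} rule is interpreted soundly. A secondary subtlety is that $\mathsf{sample}$ is $n$-ary whereas a lax monoidal functor supplies only a binary $\mu$; I would handle this by presenting $\mathsf{sample}$ in binary form together with associativity and unit equations (the syntactic shadow of the lax monoidal coherence diagrams), so that the $n$-ary operator is uniquely determined and initiality goes through unobstructed.
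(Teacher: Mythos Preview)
Your proposal is correct and follows essentially the same approach as the paper: glue $\cat{C}$ along the global-sections functor, take the tensor predicate to be the factorizable global elements and the hom predicate to be maps whose evaluation preserves relatedness, lift $\M$ to the full predicate on $\M X$, verify the glued triple is a \twolang model with a forgetful $\cat{Mod}$-morphism to the original, and conclude by initiality of $\cat{Syn}$. You also correctly anticipate the paper's treatment of the $n$-ary $\mathsf{sample}$ versus binary lax-monoidal structure, which the paper resolves exactly as you suggest.
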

\begin{proof}
  By Theorem~\ref{th:gensound}, if $\cdot \vdashi t : \M\tau_1 \otimes \M \tau_2$, then $\sem{t} \in X_{\M \tau_1 \otimes \M \tau_2}$ which, by unfolding the definitions, means that there exists $f_1 : I \to \M \sem{\tau_1}$ and $f_2 : I \to \M \sem{\tau_2}$ such that $\sem t = f_1 \otimes f_2$.
\end{proof}

\section{Measurable sets and Markov Kernels}
\label{app:meas}

A measurable space combines a set with a collection of subsets, describing the
subsets that can be assigned a well-defined measure or probability. 

\begin{Def}
  Given a set $X$, a $\sigma$-\emph{algebra} $\Sigma_X \subseteq \mathcal{P}(X)$
  is a set of subsets such that (i) $X \in \Sigma_X$, and (ii) $\Sigma_X$ is
  closed complementation and countable union. A \emph{measurable space} is a
  pair $(X, \Sigma_X)$, where $X$ is a set and $\Sigma_X$ is a $\sigma$-algebra.

  A \emph{measurable function} between measurable spaces $(X, \Sigma_X)$ and
  $(Y, \Sigma_Y)$ is a function $f : X \to Y$ such that for every
  $A\in\Sigma_Y$, $f^{-1}(A) \in \Sigma_X$, where $f^{-1}$ is the inverse image
  function. Measurable spaces and measurable functions form a category
  $\cat{Meas}$.
\end{Def}

\begin{Def}
    Standard Borel spaces $(X, \Sigma_X)$ are spaces such that $X$ can be equipped
    with a metric such that $X$ is, as a metric space, complete and separable and
    $\Sigma_X$ is the $\sigma$-algebra generated by the metric.
\end{Def}

\begin{Ex}
    For every $n \in \nat$, $\R^n$ with its standard $\sigma$-algebra is a standard Borel space.
\end{Ex}

\begin{Def}
  A \emph{probability measure} is a function $\mu_X : \Sigma_X \to [0, 1]$ such
  that: (i) $\mu(\emptyset) = 0$, (ii) $\mu(X) = 1$, and $\mu(\uplus A_i) =
  \sum_i \mu(A_i)$.
\end{Def}

\begin{Def}
  A \emph{Markov kernel} between measurable spaces $(X, \Sigma_X)$ and $(Y,
  \Sigma_Y)$ is a function $f : X \times \Sigma_Y \to [0,1]$ such that:
  \begin{itemize}
      \item For every $x \in X$, $f(x,-)$ is a probability distribution.
      \item For every $B \in \Sigma_Y$, $f(-, B)$ is a measurable function.
  \end{itemize}
\end{Def}

Markov kernels $f : X \times \Sigma_Y \to [0,1]$ and $g : Y \times \Sigma_Z \to [0,1]$
can be composed with the following formula
\[(g \circ f)(x, C) = \int g(-, C)d f(x,-)\]
The Dirac kernel $\delta(a, A) = 1$ if $a \in A$ and $0$ otherwise 
is the unit for the composition defined above that this structure
can be organized into a category $\cat{BorelStoch}$ with standard
Borel spaces as objects and Markov kernels as morphisms.

\paragraph{Marginals and probabilistic independence.}
We will need some constructions on distributions and measures over products.

\begin{Def}
  Given a distribution $\mu$ over $X \times Y$, its \emph{marginal} $\mu_X$ is
  the distribution over $X$ defined by $\mu_X(A) = \int_Y d \mu(A, -)$.
  Intuitively, this is the distribution obtained by sampling a pair from $\mu$
  and projecting to its first component. The other marginal $\mu_Y$ is defined
  similarly.
\end{Def}

\begin{Def}
  A probability measure $\mu$ over $A \times B$ is probabilistically
  \emph{independent} if it is a product of its marginals $\mu_A$ and $\mu_B$,
  i.e., $\mu(X, Y) = \mu_A(X) \cdot \mu_B(Y)$, $X \in \Sigma_A$ and $Y \in
  \Sigma_B$.
\end{Def}

\end{document}